\title{Parametrized Universality Problems for One-Counter Nets}
\titlerunning{}
\author{Shaull Almagor}{Technion, Israel}{}{https://orcid.org/0000-0001-9021-1175}{European Union's Horizon 2020 research and innovation programme under the Marie Sk{\l}odowska-Curie grant agreement No 837327.}
\author{Udi Boker}{Interdisciplinary Center (IDC) Herzliya, Israel}{}{}{Israel Science Foundation grant 1373/16.}
\author{Piotr Hofman}{University of Warsaw, Poland}{}{https://orcid.org/0000-0001-9866-3723}{Supported by the NCN grant 2017/27/B/ST6/02093.}
\author{Patrick Totzke}{University of Liverpool, United Kingdom}{}{http://orcid.org/0000-0001-5274-8190}{}
\authorrunning{S. Almagor, U. Boker,  P. Hofman, P. Totzke}
\keywords{Counter net, VASS, Unambiguous Automata, Universality}
\newcommand{\myNC}[1]{\NC\textsuperscript{#1}}
\newcommand{\myAC}[1]{\AC\textsuperscript{#1}}
\begin{document}

\maketitle

\begin{abstract}
    We study the language universality problem for 
One-Counter Nets, also known as 1-dimensional Vector Addition Systems with States (1-VASS),
parameterized either with an initial counter value, or with an upper bound on the allowed counter value during runs.
The language accepted by an OCN (defined by reaching a final control state) is monotone in both parameters. This yields two natural questions:
1) does there exist an initial counter value that makes the language universal?
2) does there exist a sufficiently high ceiling so that the bounded language is universal?

Although the ordinary universality problem is decidable (and Ackermann-complete)
and these parameterized variants seem to reduce to checking basic structural properties of the underlying automaton, we show that in fact both problems are undecidable.
We also look into the complexities of the problems for several decidable subclasses,
namely for unambiguous, and deterministic systems, and for those over a single-letter alphabet.

\end{abstract}

\section{Introduction}
\label{sec:intro}
One-Counter Nets (OCNs) are finite-state machines equipped with an integer counter that cannot decrease below zero and which cannot be explicitly tested for zero.
They are the same as 1-dimensional Vector Addition Systems (or Petri nets with exactly one unbounded place).
In order to use them as formal language acceptors we assume that transitions are labelled with letters from a finite alphabet
and that some states are marked as accepting.

OCNs are a syntactic restriction of One-Counter Automata -- Minsky Machines with only one counter, which can have
zero-tests, i.e., transitions that depend on the counter value being exactly zero.
If counter updates are restricted to $\pm 1$, the model corresponds to Pushdown automata with a single-letter stack alphabet.
OCNs are one of the simplest types of discrete infinite-state systems, which makes them suitable for exploring the decidability border of classical decision problems from automata and formal-language theory.

\subparagraph*{Universality Problems.} 

The universality problem for a class of automata asks if a given automaton accepts all words over its input alphabet. 
%
Due to their lack of an explicit zero-test,
OCNs are monotone with respect to counter values:
if it is possible to make an $a$-labelled step from a configuration with state $p$ and counter $n$
to state $q$ with counter $n+d$, written as $(p,n) \step{a} (q,n+d)$ here,
then the same holds for any larger counter value $m\ge n$:
$(p,m) \step{a}(q,m+d)$.
Consequently, if we define the language via acceptance by reaching a final control state,
then for all states $s$ and $n\le m\in\N$,
the language 
$\Lang{s,n}$
of the initial configuration $(s,n)$
is included in that of $(s,m)$.
This motivates our first variation of the universality problem. 
The \emph{Initial-Value Universality} problem asks if there exists a sufficiently large initial counter to make the resulting language universal.

\dproblem
{
	An OCN with alphabet $\alphabet$
	and an initial state $\initialstate$. 
}{
	Does there exist $c_0\in\N$ such that $\Lang{\initialstate,c_0} = \alphabet^*$?
}

\noindent
The second question we consider is the \emph{Bounded Universality} problem,
which asks if there exists a large enough upper bound on the counter so that
every word can be accepted via a run that remains within this bound.
Writing $\bLang{\initialstate,c_0}{b}\subseteq \alphabet^*$ for the $b$-bounded language from configuration $(\initialstate,c_0)$, the decision problem is as follows.

\dproblem
{
	An OCN with alphabet $\alphabet$,
	an initial state $\initialstate$, and $c_0\in \N$.
}{
	Does there exist $b\in\N$ such that $\bLang{\initialstate,c_0}{b} = \alphabet^*$?
}
The motivation for studying these parameterized problems comes from the observation that
the ``vanilla'' universality problem, 
without existentially quantifying over parameters, 
is decidable, but Ackermann-complete \cite{HT2017}, and the lower bound depends strongly on the assumption that we start with a fixed initial counter (and that its value is not bounded). 
The two new variants of the universality problem relax these assumptions
in an attempt to allow efficient decision procedures via simple cycle analysis or similar.

\subparagraph*{Our Results.}
We show that both initial-value universality and bounded universality are undecidable (\cref{sec:nondet}).
The proofs use techniques from weighted automata \cite{DDGRT10,ABK11}, reducing the halting problem of two-counter machines to our setting. 

In light of these negative results, we proceed to study restricted classes of OCNs, for which the problems become decidable, as we elaborate below.
In most cases, the complexity crucially depends on how transition updates are encoded: we consider both the case of ``succinct'', binary-encoded updates, and the case of unary-encoded updates, which corresponds to systems where transitions can only update the counter by $\pm 1$.

The most intricate and interesting case is that of OCNs over a single-letter alphabet
(\cref{sec:unary}). In order to analyze this model, we split universality to criteria on ``short'' words, and on longer words that admit a cyclic behavior. In particular, we devise a canonical representation of ``pumpable'' paths, akin to the so-called linear-path schemes~\cite{LS2004,BFGHM15}. We show that the complexity of some of the problems is $\coNP$ complete, where others range between $\coNP$ and $\SigmaTwo$ (see \cref{tbl:Unary,tbl:Binary}).

We then consider deterministic, and unambiguous OCNs (\cref{sec:deterministic,sec:unambiguous}, respectively). For such systems, deciding (bounded) universality problems mostly reduces to checking
simple conditions on the cyclic structure of the control automaton underlying the OCN.
Based on known (but in some cases very recent) results on 
unambiguous finite automata and vector-addition systems,
we derive relatively low complexity upper bounds, in polynomial time (assuming unary encoding)
and space (assuming binary encoding).
\Cref{tbl:Unary,tbl:Binary} summarize the status quo, following our results.

\begin{table}[t]
\bgroup
\setlength\tabcolsep{2pt} 
\def\arraystretch{1.1} 

\caption{The complexity of the universality problems of one-counter nets in which weights are encoded in unary.} 	\label{tbl:Unary}
\begin{tabular}{c|c|c||c|c||c|c|}
\multirow{3}{*}{ \ML{Unary\\encoding}}&\multicolumn{2}{c||}{Universality}&\multicolumn{2}{c||}{Initial-Value Universality}&\multicolumn{2}{c|}{Bounded Universality}\\
\cline{2-7}
&\multicolumn{1}{c|}{\multirow{2}{*}{ \ML{Singleton\\Alphabet}}}& \multirow{2}{*}{ \ML{General\\Alphabet}}&\multirow{2}{*}{ \ML{Singleton\\Alphabet}} & \multirow{2}{*}{ \ML{General\\Alphabet}}& \multirow{2}{*}{ \ML{Singleton\\Alphabet}} & \multirow{2}{*}{ \ML{General\\Alphabet}}\\
&&&&&&\\
\hline\hline
Deterministic 
  &\Entry{ \L}  {\cref{lem:DOCN-singleton-complexities}}
  &\Entry{\NL-comp.\ }{\cref{thm:DOCN-complexities}}
  &\Entry{ \L}  {\cref{lem:DOCN-singleton-complexities}}
  &\Entry{\NL-comp.\ }{\cref{thm:DOCN-complexities}} 
  &\Entry{ \L}{\cref{lem:DOCN-singleton-complexities}} 
  &\Entry{\NL-comp.\ }{\cref{thm:DOCN-complexities}}\\
\hline
Unambiguous 
  &\Entry{ \NL}  {\cref{thm:UOCA-universality-unary-single}}
  &  \ML{\hspace{-.2cm}\myNC{2};\\[-8pt] \hspace{1.4cm}\RefSize{\cite{czerwiski:hal-02483495}} \\[-6pt]\hspace{-.4cm}\NL-hard}
  &\Entry{ \NL}{\cref{thm:SUOCN-iv-universality}}
  &\Entry{ \myNC{2}}{\cref{thm:SUOCN-iv-universality}}
  &\Entry{ \NL}{\cref{thm:UOCN-bu}} 
  &\Entry{ \myNC{2}}{\cref{thm:UOCN-bu}} 
  \\
\hline
\ML{Non-\\deterministic}
  &\Entry{\coNP-comp.}{\cref{thm:singleton alphabet unary coNP}}
  &\Entry{Ackermann}{\cite{HT2017}}
  &\Entry{\coNP-comp.\ }{ \cref{thm:Singleton-iv-universality}}
  &\Entry{Undecidable}{\cref{thm:Undecidable1}}
  &\Entry{\coNP-comp.}{\cref{thm:singleton alphabet bounded-univ complexity}}
  & \Entry{Undecidable}{\cref{thm:Undecidable2}}\\
\hline
\end{tabular}
\egroup
\end{table}

\begin{table}[t]
\bgroup
\setlength\tabcolsep{2pt}
\def\arraystretch{1.1} 

        \caption{The complexity of the bounded universality problems of one-counter nets in which weights are encoded in binary. } 	\label{tbl:Binary}
	\begin{tabular}{c|c|c||c|c||c|c|}
\multirow{3}{*}{ \ML{Binary\\encoding}}&\multicolumn{2}{c||}{Universality}&\multicolumn{2}{c||}{Initial-Value Universality}&\multicolumn{2}{c|}{Bounded Universality}\\
\cline{2-7}
&\multicolumn{1}{c|}{\multirow{2}{*}{ \ML{Singleton\\Alphabet}}}& \multirow{2}{*}{ \ML{General\\Alphabet}}&\multirow{2}{*}{ \ML{Singleton\\Alphabet}} & \multirow{2}{*}{ \ML{General\\Alphabet}}& \multirow{2}{*}{ \ML{Singleton\\Alphabet}} & \multirow{2}{*}{ \ML{General\\Alphabet}}\\
&&&&&&\\
\hline\hline
		Deterministic 
                  &\Entry{\myNC{2}}{\cref{lem:DOCN-singleton-complexities}}
                  &\Entry{ \NC}{\cref{thm:DOCN-complexities}}
                  &\Entry{ \myNC{2}}{\cref{lem:DOCN-singleton-complexities}}
                  &\Entry{ \myNC{2}}{\cref{thm:SUOCN-iv-universality}} 
                  &\Entry{ \myNC{2}}{\cref{lem:DOCN-singleton-complexities}}
                  &\Entry{ \myNC{ }}{\cref{thm:DOCN-complexities}}
		\\
		\hline
		Unambiguous 
                  &\Entry{\coNP-comp.}{\cref{thm:singleton alphabet binary Sigma2}}
	          &\ML{\hspace{-.4cm}PSPACE;\\[-8pt] \hspace{1.5cm}\RefSize{\cite{czerwiski:hal-02483495}}\hspace{-.15cm} \\[-6pt]\hspace{-.3cm}coNP-hard}
                  &\Entry{ \myNC{2}}{\cref{thm:SUOCN-iv-universality}}
                  &\Entry{ \myNC{2}}{\cref{thm:SUOCN-iv-universality}} 
                  &\Entry{$\coNP^\NP$}{\cref{thm:singleton alphabet bounded-univ complexity}}
                  &\Entry{ \PSPACE}{\cref{thm:UOCN-bu-bin}} \\
		\hline
		\ML{Non-\\deterministic}
                  &\Entry{$\coNP^\NP$}{\cref{thm:singleton alphabet binary Sigma2}}
                  &\Entry{Ackermann}{\cite{HT2017}}
                  &\Entry{\coNP-comp.\ }{ \cref{thm:Singleton-iv-universality}}
                  &\Entry{Undecidable}{\cref{thm:Undecidable1}}
                  &\Entry{$\coNP^\NP$}{\cref{thm:singleton alphabet bounded-univ complexity}}
                  &\Entry{Undecidable}{\cref{thm:Undecidable2}}\\
		\hline
	\end{tabular}
	\egroup
\end{table}

\subparagraph*{Related work.}
The undecidability of language universality for pushdown automata is textbook.
In his 1973 PhD thesis \cite{Val1973}, Valiant showed that the problem remains
undecidable for the strictly weaker model of one-counter automata (OCA, with
zero tests) by recognizing the complement of all accepting runs of a two-counter machine.
Language inclusion is undecidable for the further restricted model of OCNs \cite{HLMT2016}.
If one considers $\omega$-regular languages defined by OCNs with B\"uchi acceptance condition
then the resulting universality problem is undecidable \cite{BGHH2017}.

On the positive side, universality is decidable 
for vector addition systems \cite{JEM1999}
and Ackermann-complete for the special case of OCNs \cite{HT2017}.
One-counter systems have received some attention in regards to checking
bisimulation and simulation relations, which under-approximate language
equivalence (and inclusion, respectively) and are computationally simpler.
For OCAs/OCNs, bisimulation is \PSPACE-complete \cite{BGJ2010}, while weak
bisimulation is undecidable for OCNs \cite{Mayr2003}.
Both strong and weak simulation are \PSPACE-complete for OCNs, and checking if an
OCN simulates an OCA is decidable \cite{AAHMKT2014}.

Universality problems for OCNs over single-letter alphabets
are related to the termination problem for VASS,
which asks if there exists an infinite run.
Non-termination naturally corresponds to the property that $a^n \in \Lang{\initialstate,\vec{v_0}}$,
i.e., all finite words are accepted, assuming that all states are accepting.
Termination reduces to boundedness (finiteness of the reachability set) which is
\EXPSPACE-complete \cite{Rac1978,Dem2013} in general and
\PSPACE-complete for systems with fixed dimensions \cite{RY1985}. In contrast, the \emph{structural} termination problem (there exists no infinite run, regardless of the initial configuration) is equivalent to finding an executable cycle that is non-decreasing on all dimensions, and can be solved in polynomial time \cite{KS1988}.

Finally, the idea to existentially quantify over some initial resource is
commonplace in the formal verification literature.
Examples include unknown initial-credit problems
for energy games \cite{BFLMS2008,AAHMKT2014} and R-Automata \cite{AKY2008},
timed Petri nets \cite{AACMT2018}, and inclusion problems for weighted automata \cite{DDGRT10,ABK11}.

We defer most proofs to the Appendix.

\section{Preliminaries}
\label{sec:preliminaries}
\subparagraph*{One-Counter Nets.}
A \emph{one-counter net} (OCN)
is a finite directed graph where edges carry both an integer weight and a letter from a finite alphabet.
We write
$\sys{A}=\ocntuple$ 
for the net $\sys{A}$ where $\states$ is a finite set of \emph{states}, $\alphabet$ is a finite set of \emph{letters}, $\initialstate\in \states$ is an \emph{initial state},  $\transitions\subseteq \states\x\alphabet\x\Z\x\states$ is the \emph{transition} relation, and $\fstates\subseteq \states$ are the \emph{accepting} states.
%

For a transition $t=(s,a,e,s')\in\transitions$ we write $\effect{t}\eqdef e$
for its (counter) \emph{effect}, and write $\norm{\delta}$ for the largest absolute effect among all transitions.
By the \emph{underlying automaton} of an OCN
we mean the NFA 
obtained from the OCN by disregarding the transition effects.

A path in the OCN is a sequence $\pi=(s_1,a_1,e_1,s_2)(s_2,a_2,e_2,s_3)\dots (s_{k},a_k,e_k,s_{k+1})\in\transitions^*$. Such a path $\pi$ is a \emph{cycle} if $s_1=s_{k+1}$, 
and is a \emph{simple cycle} if no other cycle is a proper infix of it.
We say that the path above \emph{reads} word $a_1a_2\dots a_k\in\alphabet^*$ and is accepting if 
$s_{k+1}\in\fstates$.
Its $\effect{\pi}\eqdef \sum_{i=1}^k e_i$ is the sum of its transition effects .
Its \emph{height} is the maximal effect of any prefix and, similarly, its
\emph{depth} is the inverse of the minimal effect of any prefix.

\medskip
An OCN naturally induces an infinite-state labelled transition system in which each
\emph{configuration} is a pair $(s,c)\in \states\x\N$ comprising a state and a non-negative integer.
We 
call such a configuration \emph{final}, or \emph{accepting}, if $s\in F$.
Every letter $a\in\alphabet$ induces a step relation
$\step{a}~\subseteq (Q\x\N)^2$ between configurations where, for every two configurations $(s,c)$ and $(s',c')$,
$$
(s,c)\step{a}(s',c') \iff (s,a,d,s')\in\transitions
\quad\text{and }
c'=c+d. 
$$
A \emph{run} on a word $w=a_1a_2\ldots a_k\in\alphabet^*$
is a path in this induced infinite system; that is,
a sequence $\rho=(\initialstate,c_0),(s_1,c_1), (s_2,c_2),\ldots (s_k,c_k)$
such that $(s_{i-1},c_{i-1})\step{a_i}(s_i,c_i)$ holds for all $1\le i\le k$.
Naturally, a run uniquely describes a path in the underlying finite OCN.
Conversely, for every such path and initial counter value $c_0\in\N$, there is at most one
corresponding run: A path $\pi$ is \emph{executable from $c_0$} if
its depth is at most $c_0$ (that is, we do not allow the counter to become negative).
A run as above is called a (simple) 
\emph{cycle}
if its underlying path
is a (simple) cycle.
%
It is \emph{accepting} if it ends in an accepting configuration.
We call a run \emph{bounded} by $b\in\N$
if $c_i\le b$ for all $0\le i\le k$.

For any fixed initial configuration $(s,c)$, 
we define its \emph{language}
$\Lang[\?A]{s,c}\subseteq\alphabet^*$
to contain exactly all words on which an accepting run starting in $(s,c)$ exists. (We omit the subscript $\?A$ if the OCN is clear from context.)
Similarly, the \emph{$b$-bounded language} $\bLang{s,c}{b}$
is the set of those words on which there is a $b$-bounded run starting in $(s,c)$.


\medskip
The OCN is
\emph{deterministic} if for every pair $(s,a) \in Q\x\alphabet$ there
is at most one pair $(d,q)\in\N\x Q$ with $(s,a,d,s')\in\delta$.
A net together with an initial configuration $(\initialstate,c_0)$
is \emph{unambiguous} if for every word $w\in\alphabet^*$
there is at most one accepting run starting in $(\initialstate,c_0)$.

\subparagraph*{Two-Counter Machines.}
A two-counter machine (Minsky Machine) $\M$ is a sequence $(l_1,\ldots,l_n)$ of commands involving two counters $x$ and $y$. We refer to
$\set{1,\ldots,n}$ as the {\em locations} of the machine. There are five possible forms of commands:
\texttt{inc(c)}, \texttt{dec(c)}, \texttt{goto $l_i$}, \texttt{halt}, \texttt{if c=0 goto $l_i$ else goto $l_j$}, 
where $c\in \set{x,y}$ is a counter and $1\le i,j\le n$ are locations. 
The counters are initially set to $0$.
Since we can always check whether $c=0$ before a $\texttt{dec(c)}$ command, we assume that the machine never reaches $\texttt{dec(c)}$ with $c=0$. That is, the counters never have negative values. 

\section{Undecidability}
\label{sec:nondet}
We show that both initial-value universality and bounded universality are undecidable by reduction from the undecidable halting problem of two-counter machines (2CM) \cite{Min67}.

The idea underlying both reductions is that
the initial counter value, or the bound on the allowed counter, prescribes a bound on the number of steps until the OCN must make a decision weather the input word, which encodes a prefix of the run of the 2CM, 
either halts or cheats.
After this decision the OCN is reset and continues to read the remaining word within an adjusted bound.
If the decision was correct then the bound remains the same and otherwise, it is strictly reduced.
The existence of a halting run of the 2CM 
now implies that its length corresponds to a sufficient initial bound for this simulating OCN to be universal.
Conversely, if the run of the machine does not halt
then for every bound $n$, 
there exists a non-cheating, and non-terminating prefix of length $n$.
Repeating this prefix $n$ times
witnesses non-universality for the simulating OCN with initial counter $n$.

\subsection{Initial-Value Universality}\label{sec:NondetInitial}

\begin{figure}
	\centering\input{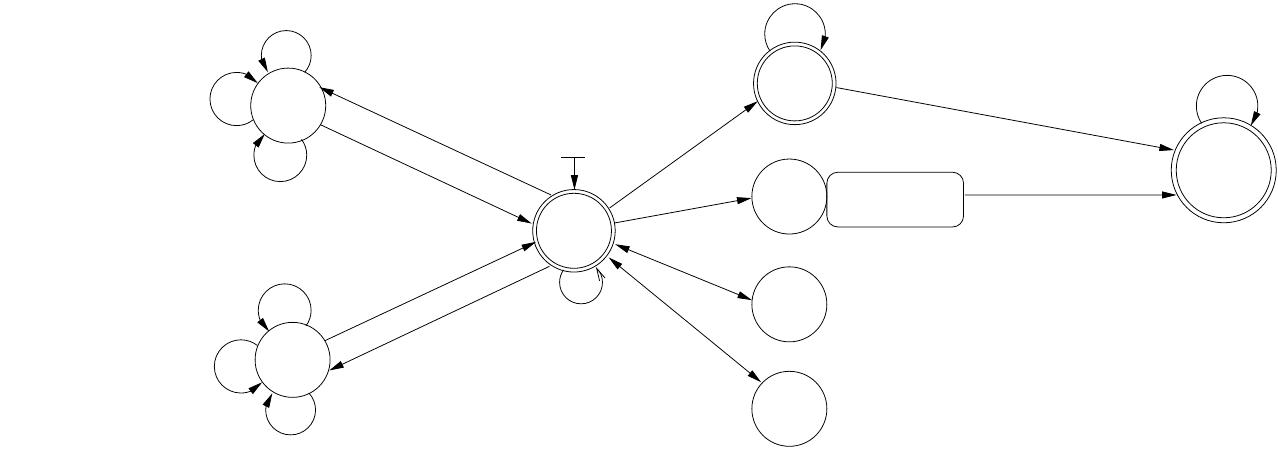_t} \caption{The one-counter net $\A$ from the proof of \cref{thm:Undecidable1}.}\label{fig:A}
\end{figure}

Given a two-counter machine $\M$, we construct a one-counter net $\A$ as follows (see Figure~\ref{fig:A}).
%
Intuitively, an input word $w$ to $\A$ is a sequence of segments separated by $\#$, where each segment is a sequence of commands from $\M$.
Accordingly, the alphabet of $\A$ consists of $\#$ and all possible commands of $\M$.

We build $\A$ to accept $w$, once starting with a big enough initial counter value, if one of the following conditions holds: i) one of $w$'s segments is shorter than the length of the (legal halting) run of $\M$; or ii) one of $w$'s segments does not respect the control structure underlying $\M$, which is called a ``non-counting cheat'' here; or iii) all of $w$'s segments do not describe a prefix of the run of $\M$, making ``counting cheats''.
The OCN reads every segment in between two $\#$'s starting in, and returning to, a central state $q_0$.

Non-counting cheats are easy to verify---for every line $l$ of $\M$, there is a corresponding state $q$ in $\A$, and when $\A$ is at state $q$ and reads a letter $a$, $\A$ checks if $a$ matches the command in $l$. For example, if $l=$\Command{goto i} and $a=$ \Command{inc~x}, the transition from $q$ goes to a forever accepting state ($heaven$), and if $a=$\Command{goto i}, it goes to the state of $\A$ that corresponds to the line $l_i$. This is the  ``command-checker gadget'' of $\A$.

Counting cheats are more challenging to verify, as OCNs cannot branch according to a counter value. We consider separately ``positive cheats'' and ``negative cheats''. The former stands for the case that the input letter is \Command{x=0~then~goto}  (or  \Command{y=0~then~goto}) while the value of $x$ (or $y$) in the legal run of $\M$ should be positive. The latter stands for the case that the input letter is \Command{x>0~then~goto}  (or  \Command{y>0~then~goto}) while the value of $x$ (or $y$) in the legal run of $\M$ should be $0$.

Positive cheats can be verified by directly simulating the respective counter of $\M$ using the counter in $\A$
(states $q_3$ and $q_5$ in \cref{fig:A}). Once the cheat occurs, $\A$ can return to $q_0$ with a penalty of $-1$, and since the counter in $\M$ is positive, we are guaranteed that the counter in $\A$ did not decrease since leaving $q_0$, allowing $\A$ to continue the run.

For verifying a negative cheat, we simulate the counting of $\M$ by an ``opposite-counting'' in $\A$
(states $q_4$ and $q_6$ in \cref{fig:A}), whereby an increment of the counter in $\M$ results in a decrement of the counter in $\A$, and vice versa---once the cheat occurs, $\A$ can return to $q_0$ with no penalty, and since the counter in $\M$ is $0$, we are guaranteed that the counter in $\A$ did not  decrease since leaving $q_0$, allowing $\A$ to continue the run.

\medskip
\noindent
Formally, we construct $\A$ from $\M$ as follows.
\begin{itemize}
	\item The alphabet $\Sigma$ of $\A$ consists of $\#$ and the descriptive commands for the counter machine $\M$ : \Command{inc~x}, \Command{inc~y}, \Command{dec~x}, \Command{dec~y}, \Command{halt}, and for every line $i$ of $\M$, the commands \Command{goto i}, \Command{x=0~then~goto i}, \Command{y=0~then~goto i}, \Command{x>0~then~goto i}, and \Command{y>0~then~goto i}.
	\item The initial state $q_0$ is accepting, it has a self transition over $\Sigma\setminus\{\#\}$ and nondeterministic transitions to the states $q_1\ldots q_6$ over $\#$, all with weight $0$.
	\item There is a $heaven$ state, which is accepting, and has a self loop over $\Sigma$ with weight $0$.
	\item The state $q_1$ is accepting and intuitively allows to accept short segments between consecutive $\#$'s: It has a self transition over $\Sigma\setminus\{\#\}$ and a transition to $heaven$ over $\#$, all with weight $-1$.
	\item The state $q_2$ starts the command-checker gadget, which looks for a non-counting violation of $\M$'s commands (which is a simple regular check). Once reaching a violation it goes to $heaven$. All of its transitions are with weight $0$. If it does not find a violation, it cannot continue the run.
	\item The state $q_3$ is a positive-cheat checker for $\M$'s counter $x$. 
	It has a self loop over \Command{inc~x} with weight $+1$ and over \Command{dec~x} with weight $-1$. Over \Command{x=0~then~goto} it can nondeterministically choose between a self loop with weight $0$ and a transition to $q_0$ with weight $-1$. Over the rest of the alphabet lettres, except for \Command{halt} and $\#$, it has a self loop with weight $0$. (Over \Command{halt} and $\#$ it cannot continue the run.)
	\item The state $q_4$ is a negative-cheat checker for $\M$'s counter $x$. 
	It has a self loop over \Command{inc~x} with weight $-1$ and over \Command{dec~x} with weight $+1$. Over \Command{x>0~then~goto} it can nondeterministically choose between a self loop with weight $0$ and a transition to $q_0$ with weight $0$. Over the rest of the alphabet lettres, except for \Command{halt} and $\#$, it has a self loop with weight $0$.
	\item The states $q_5$ and $q_6$ provide positive-cheat checker and negative-cheat checker for $\M$'s counter $y$, respectively, analogously to states $q_3$ and $q_4$.
\end{itemize}

%
\begin{theorem}
\label{thm:Undecidable1}
	The initial-value universality problem for one-counter nets is undecidable.
\end{theorem}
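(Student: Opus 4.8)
The plan is to complete the reduction sketched before the statement: I show that $\M$ halts if and only if $\Lang{q_0,c_0}=\Sigma^*$ for some $c_0\in\N$. The first ingredient is a handful of structural facts about $\A$. Any run of $\A$ on a word $w$ has a rigid shape: it reads the $\#$-free prefix of $w$ in the accepting state $q_0$ without changing the counter, and at each $\#$ it must commit to one of $q_1,\dots,q_6$ with the current counter unchanged; from $q_1$ it can only reach $heaven$ via a ``$-1$ per letter'' walk, which succeeds exactly when the current segment is short relative to the current counter; from $q_2$ it can only reach $heaven$ by exhibiting a control-structure violation of $\M$; and from $q_3,\dots,q_6$ it deterministically simulates one of $\M$'s counters along the current segment and may return to $q_0$ only on reading a conditional-jump letter, after which the rest of the segment is read back in $q_0$. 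In particular, reaching $heaven$ or ending the run in $q_0$ having consumed all of $w$ are the only ways to accept, and any word with no $\#$ is accepted from any $(q_0,c)$. The crucial point is the counter arithmetic inside the four counting gadgets: the $\pm 1$ labels are chosen so that inside $q_3/q_5$ the counter of $\A$ equals (value on entering the gadget)$\,+\,$(current simulated value of the relevant $\M$-counter), while inside $q_4/q_6$ it equals (value on entering)$\,-\,$(simulated value). Hence, upon returning to $q_0$ via a conditional letter, a \emph{genuine} cheat (the letter claims ``$=0$'' while the simulated value is $\ge 1$, or ``$>0$'' while it is $0$) leaves the counter at least as large as on entry, whereas a \emph{correct} conditional letter forces a strict decrease; one also has to check that the opposite-counting simulations in $q_4,q_6$ never go negative, which along a prefix that faithfully follows $\M$'s run is automatic since the simulated value is then bounded by the prefix length.

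\emph{If $\M$ halts}, say its unique run has length $T$, I claim $\Lang{q_0,T+1}=\Sigma^*$. Given $w$: if it has no $\#$ it is accepted; otherwise process its segments left to right, maintaining the invariant ``whenever $\A$ is in $q_0$ the counter is $\ge T+1$''. For the current segment there are three cases. If it is short, route through $q_1$ to $heaven$ and accept. If it violates $\M$'s control structure (in particular if it runs on past a legally reached \Command{halt}), route through $q_2$ to $heaven$ and accept. Otherwise it is a legal control-flow path; being of length $\ge T+1$ while $\M$'s run halts after $T$ steps, it cannot agree with the run throughout, so it has a first divergence from the run, which (legal control flow being deterministic except at conditionals, and the run always taking the branch consistent with its counter value) occurs at a conditional jump and is a genuine cheat; the matching gadget $q_3,\dots,q_6$ detects it, returning $\A$ to $q_0$ with counter still $\ge T+1$, after which $\A$ reads the rest of the segment in $q_0$ and the invariant is preserved. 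Induction over the number of segments gives $w\in\Lang{q_0,T+1}$.

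\emph{If $\M$ does not halt}, fix an arbitrary $c_0\in\N$, let $u$ be the (length-$c_0$) word spelt by the first $c_0$ commands of $\M$'s infinite run, and let $w$ consist of $c_0+2$ copies of $u$ separated by $\#$'s. Since $u$ faithfully follows $\M$'s run, it contains no genuine cheat and never reaches \Command{halt}; therefore $q_2$ can never reach $heaven$ on a segment of $w$, every return to $q_0$ from a counting gadget is a spurious cheat and strictly decreases the counter, and routing a copy of $u$ through $q_1$ (or never returning out of $q_3,\dots,q_6$) gets $\A$ stuck because $|u|=c_0$ is too long for the available counter after the first round. Thus each ``round'' (processing one copy of $u$) either dooms the run or strictly decreases the counter; starting from $c_0$, the counter is therefore $0$ by the start of round $c_0+1$, which is forced since it processes the last copy of $u$, and there the run fails, as a $-1$ return is impossible at counter $0$ and the opposite-counting gadgets $q_4,q_6$ (and $q_1$) underflow immediately. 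Hence $\A$ has no accepting run on $w$, so $\Lang{q_0,c_0}\neq\Sigma^*$; as $c_0$ was arbitrary, no initial value makes $\A$ universal.

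I expect the main obstacle to be the counter-arithmetic lemma: verifying that the $\pm1$ labels in the four counting gadgets are simultaneously calibrated so that ``genuine cheat $\Rightarrow$ counter not decreased'' and ``spurious return $\Rightarrow$ counter strictly decreased'' both hold, and that the opposite-counting gadgets $q_4,q_6$ never underflow while tracking a run-faithful prefix. Granted this accounting, both directions are routine inductions over segments, up to a couple of small edge cases (e.g.\ $c_0=0$, or a segment coinciding with the entire halting run, which with $c_0=T+1$ is absorbed by $q_1$).
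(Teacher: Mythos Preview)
Your proposal is correct and follows essentially the same argument as the paper's proof. The one cosmetic difference is in the non-halting direction: the paper uses segments of length $c_0+1$ (rather than your $c_0$), which uniformly disposes of the $c_0=0$ edge case you flag, since then $q_1$ already fails on a single segment even at full counter.
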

\begin{proof}

	We show that a given two-counter machine $\M$ halts if and only if the corresponding one-counter net $\A$, as constructed in \cref{sec:NondetInitial}, is initial-value universal.

	\subproof $\Rightarrow:$ \emph{If $\M$ halts}, its (legal) run has some length $n-1$. We claim that $\A$ is universal with the initial value $n$.
	
	Consider some word $w$ over the alphabet of $\A$. We shall describe an accepting run $\rho$ of $\A$ on $w$.
	Until the first occurrence of $\#$, the run $\rho$ is deterministically in $q_0$, which is accepting.
	We show that for every segment between two consecutive $\#$'s, as well as the segment after the last $\#$, the run $\rho$ may either reach $heaven$ or reach $q_0$ with counter value at least $n$ (and remains there until the next $\#$ or the end of the word), from which it follows that $\rho$ is accepting.
	
	If the segment is shorter than $n$, $q_0$ can choose to go to $q_1$ over $\#$, and from there it will reach heaven.
	If the segment is longer than $n$, it cannot describe the legal run of $\M$. Then, it must cheat within up to $n$ steps. We show that each of the 5 possible cheats fulfills the claim.
	\begin{description}
		\item [1.] If it makes a non-counting cheat, $q_0$ will go to $q_2$ over $\#$, and will reach $heaven$. (This is also the case if it has additional letters different from $\#$ after the \Command{halt} letter.)
		\item [2.] If it makes a positive cheat on $x$, $q_0$ will go to $q_3$ upon reading the next $\#$. When the cheat occurs, the value of $x$ is positive, while reading the letter \Command{x=0~then~goto}. Notice that the value of $\A$'s counter is accordingly bigger than its value when entering $q_3$ (and by the inductive assumption bigger than $n$). Then, $q_3$ goes to $q_0$ with weight $-1$, guaranteeing that $\A$'s counter value is at least $n$.
		Notice that the counter value cannot go below $n$ at any point, since $\M$ cannot make the value of $x$ negative without a counting cheat. (We equipped $\M$ with a counter check before every decrement.)
		\item [3.] If it makes a negative cheat on $x$, $q_0$ will go to $q_4$. Then, when the cheat occurs, the value of $x$ is $0$, while there is the letter \Command{x>0~then~goto}. Notice that the value of $\A$'s counter is accordingly exactly its value when entering $q_3$ (and by the inductive assumption at least $n$). Then, $q_4$ goes to $q_0$ with weight $0$, guaranteeing that $\A$'s counter value is at least $n$.
		Notice that the counter might go below $n$ between getting to $q_4$ and returning to $q_0$. Yet, since the violation must occur within up to $n$ steps, and the value of the counter when entering $q_4$ is at least $n$, we are guaranteed to be able to properly continue with the run, as the counter need not go below $0$.
		\item [4-5.] Analogously, if it makes a positive or negative cheat over $y$, the choice of $q_0$ will be $q_5$ or $q_6$, respectively.
	\end{description}
	
	\subproof $\Leftarrow:$ \emph{If $\M$ does not halt}, for every positive integer $n$, we build the word $w_n$ and show that it is not accepted by $\A$ with an initial counter value $n$.
	
	The word $w_n$ consists of $n+1$ segments between $\#$'s, where each segment is the prefix of length $n+1$ of the (legal) run of $\M$.
	Consider the possible runs of $\A$ on $w_n$. It cannot go from $q_0$ to $q_1$, because it will stop after $n$ steps. It also cannot go to $q_2$, because there is no cheating.
	We show that if it goes to $q_3..q_6$, it must return to $q_0$ before the next $\#$, while decreasing the value of $\A$'s counter, which can be done only $n$ times until the run stops.
	
	If it goes to $q_3$, it must return to $q_0$ upon some \Command{x=0~then~goto}, as it cannot continue the run on $\#$. Yet, as there is no cheating, it returns to $q_0$ when $x=0$, which implies that $\A$'s counter has the same value as when entering $q_3$, and due to the $-1$ weight of the transition to $q_0$, it returns to $q_0$ while decreasing the value of $\A$'s counter by $1$. An analogous argument follows if it goes to $q_5$.
	
	If it goes to $q_4$, it must return to $q_0$ upon some \Command{x>0~then~goto}, as it cannot continue the run on $\#$. Yet, as there is no cheating, it returns to $q_0$ while the value of $x$ is indeed strictly positive, which implies that the value of $\A$'s counter is smaller than the value it had when entering $q_4$, and therefore due to the $0$-weight transition to $q_0$, it returns to $q_0$ with a smaller value of $\A$'s counter. An analogous argument follows if it goes to $q_6$.
\end{proof}

\subsection{Bounded Universality}\label{sec:NondetBounded}

We show that the problem is undecidable by making some changes to the undecidability proof of the initial-value universality problem.

Given a two-counter machine $\M$, we construct a one-counter net $\A'$ that is similar to $\A$, as constructed above, except for the following changes (see Figure~\ref{fig:A'}):
\begin{itemize}
	\item There is an additional state $q'_0$ that is accepting, it is the new initial state, and it has a nondeterministic choice over $\Sigma$ of either taking a self loop with weight $+1$ or going to $q_0$ with weight $0$.
	\item The state $q_0$ is no longer initial, and it has an additional transition over $\#$ to a new state $q_7$ with weight $0$.
	\item The state $q_7$ is accepting, and it has nondeterministic choice over $\Sigma$ of either taking a self loop with weight $-1$ or going to $q_0$ with weight $-1$.
\end{itemize}

\begin{figure}
	\centering\input{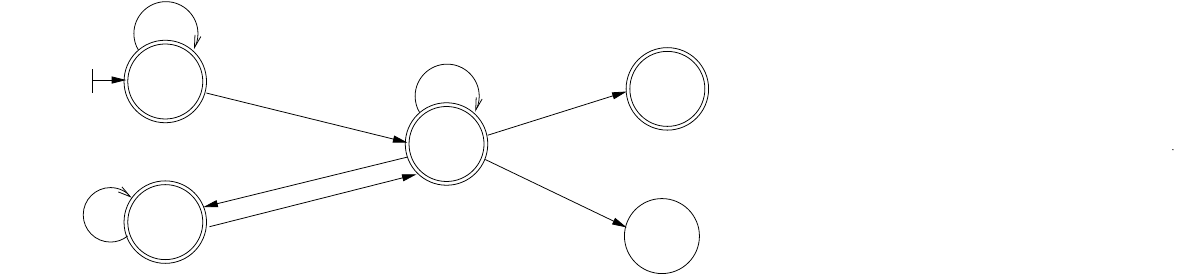_t} \caption{The one-counter net $\A'$ from the proof of \cref{thm:Undecidable2}.}\label{fig:A'}
\end{figure}

\noindent
Now $\M$ halts if and only if $\A'$ is bounded universal for an initial counter value $0$.
A detailed proof can be found in \cref{apx:thm:Undecidable1}.

\begin{restatable}{theorem}{undecBU}
\label{thm:Undecidable2}
	The bounded universality problem for one-counter nets is undecidable.
\end{restatable}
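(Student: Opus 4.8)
The plan is to reduce from the halting problem for two-counter machines, reusing the net $\A$ from the proof of \cref{thm:Undecidable1} essentially verbatim, together with the three structural modifications described above ($q'_0$, $q_7$, and the extra $\#$-transition from $q_0$ to $q_7$). The role of the new states is to convert the ``initial counter'' resource of the first reduction into a ``ceiling'' resource: the new initial state $q'_0$ can pump the counter up arbitrarily on a prefix of $\Sigma^*$, but only up to the bound $b$ we are testing, so effectively $q'_0$ supplies an initial value of $\A$ of at most $b$ (and any value up to $b$ can be reached). Symmetrically, $q_7$ is a dual sink-like gadget reached from $q_0$ over $\#$ that can consume the remainder of the word while the counter stays nonnegative — it plays the role of the old $heaven$ but with a $-1$ self-loop, so that once the counter is exhausted it must move to $q_0$ with a further $-1$ and die; crucially $q_7$ is only useful for the \emph{last} segment, where we no longer need to detect a cheat but merely need enough budget to finish reading.

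First I would fix the bookkeeping: show that for any $b$, a run of $\A'$ on a word $w$ that uses the $q'_0$-loop $k\le b$ times before branching to $q_0$ behaves exactly like a run of $\A$ on the corresponding suffix of $w$ started with counter value $k$, and that such a run stays $b$-bounded iff the simulated $\A$-run never exceeds $b-$ (something bounded by $\norm{\delta}$). Then I would prove the two directions. For the forward direction, suppose $\M$ halts with run length $n-1$; I claim $\A'$ is bounded-universal already with initial counter $0$, witnessed by the bound $b := n + C$ for a suitable constant $C$ depending only on $\|\delta\|$. Given any $w$, pump $q'_0$ up to $n$, move to $q_0$, and then replay the accepting-run argument of \cref{thm:Undecidable1} segment by segment; the only new point is the final segment, where instead of needing a cheat we route $q_0 \step{\#} q_7$ and observe that $q_7$'s $-1$ self-loop plus final $-1$ to $q_0$ can always consume whatever is left (since the counter is $\ge 0$ on entry and $q_7$/the subsequent $q_0$ are accepting — we need to double-check that reaching $q_0$ with counter $0$ and then being stuck is still accepting, which it is since $q_0\in F$). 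All counter values stay $\le b$ by the monotone/linear accounting, because each ``correct decision'' leaves the counter unchanged and each detected cheat strictly decreases it.

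For the converse, suppose $\M$ does not halt; I must show that for every candidate bound $b$ the $b$-bounded language misses some word. Take the same witness family as before: $w_b$ consisting of roughly $b+1$ copies, separated by $\#$, of the length-$(b+1)$ prefix of the (non-halting) run of $\M$, but now \emph{prepended} with enough extra letters (or rely on the $q'_0$-loop being the only way to gain counter, which is automatically capped by $b$). Any $b$-bounded run: (i) can raise the counter to at most $b$ via $q'_0$ before entering $q_0$; (ii) cannot use $q_1$ or $q_2$ since segments are long and non-cheating; (iii) each visit to $q_3,\dots,q_6$ must return to $q_0$ and strictly decrease the counter, as argued in \cref{thm:Undecidable1}; and (iv) the $q_7$ branch, being available only after a $\#$, lets it read at most (current counter) $\le b$ further letters before dying, so it cannot absorb a long tail segment. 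Hence after at most $b$ decrements the run is stuck inside a non-final state, so $w_b\notin\bLang{q'_0,0}{b}$.

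The main obstacle I expect is the converse's case analysis around the new $q_7$ gadget: one must rule out ``hybrid'' runs that alternate between pumping-looking behaviour, real-cheat checkers, and the $q_7$ tail in a way that somehow survives within the bound $b$. Concretely, the subtle point is that $q_7$ is reachable from $q_0$ on \emph{every} $\#$, not just the last one, so I need the invariant that entering $q_7$ with counter $c$ allows at most $c$ more steps, hence a run that dips into $q_7$ on a non-final $\#$ simply commits suicide sooner; combined with the fact that the $q'_0$-loop contributes at most $b$ and every non-$q_7$ cheat-return costs at least $1$, the total length a $b$-bounded run can read is $O(b)$, which is $o(|w_b|)$ for the chosen $w_b$. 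Making the ``$O(b)$ total length'' bound precise, with the right dependence on $\|\delta\|$ and the segment length, is the place where I would be most careful; everything else is a faithful adaptation of the proof of \cref{thm:Undecidable1}.
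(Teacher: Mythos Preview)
Your converse direction is essentially the paper's argument and is fine. The forward direction, however, has a genuine gap stemming from a misreading of what $q_7$ is for.

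You claim that after pumping to $n$ in $q'_0$ and entering $q_0$, ``each correct decision leaves the counter unchanged and each detected cheat strictly decreases it,'' so the counter stays below $b=n+C$ for a constant $C$. This is false for the positive-cheat checkers $q_3$ and $q_5$. If the run enters $q_3$ with counter value $c$ and the first positive cheat on $x$ occurs when $x=v\ge 1$, then the return transition to $q_0$ (weight $-1$) leaves the counter at $c+v-1\ge c$, possibly strictly larger. Over an input word with many $\#$-segments, each forcing a positive-cheat detection, the counter value at $q_0$ drifts upward by $v_i-1$ per segment. Since the input word may contain arbitrarily many segments, no fixed bound $b$ suffices with your strategy.

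The purpose of $q_7$ is precisely to neutralise this drift: it is \emph{not} a last-segment absorber but a reset gadget used after \emph{every} segment. In the paper's accepting run, whenever $q_0$ is reached with counter value strictly above $n$, the next $\#$ sends the run to $q_7$, whose $-1$ self-loop drains the excess until the counter equals $n$, and then the $-1$ transition back to $q_0$ (on an arbitrary letter) re-enters $q_0$ with counter exactly $n$. Hence every segment is entered from $q_0$ with counter exactly $n$, and since the first cheat occurs within $n$ steps, the counter never exceeds $2n$ inside $q_3$ or $q_5$. The correct witness bound is therefore $b=2n$, not $n+O(1)$. Once you use $q_7$ this way, the rest of your forward argument (and your converse) goes through as in \cref{thm:Undecidable1}.
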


\section{Singleton Alphabet}
\label{sec:unary}
In this section we study universality problems on OCN over singleton alphabets. 
The universality problem for NFA over singleton alphabets is already \coNP-hard \cite{MS1973},
a lower bound which trivially carries over to all problems considered here\footnote{	
	The proof in \cite[Theorem~6.1]{MS1973}	
	in fact shows \NP-completeness of the	
	problem of whether two regular expressions over $\{0\}$ define different languages. Hardness is shown by reduction from Boolean satisfiability to non-universality of expressions using prime-cycles, and it is straightforward to rephrase it in terms of DFAs.	
}.

For simplicity, we identify languages $L\subseteq\{a\}^*$ with their Parikh image, so that the universality problems	
ask if the (bounded) language of a given OCN equals $\N$.	
Throughout this section, fix an OCN $\?A=\ocntuple$.

We start by sketching our approach. Observe that the language of an OCN is not universal iff the OCN does not accept some word $w$. To show that such $w$ exists, we distinguish between two cases: either $w$ is ``relatively short'', in which case we use a guess-and-check approach to find it, or it is long, in which case we deduce its existence by analyzing some cyclic behaviour of the OCN. The details of both the guess-and-check elements and the cyclic behaviour depend on the encoding of the weights and the variant of universality.

\subsection{Universality}
\label{subsec: singleton universality}
We start by describing a procedure to decide the ordinary universality problem for OCN over singleton alphabets -- with fixed initial configuration and no bounds on the counter.

Consider a cycle $\gamma=s_1,s_2,\ldots,s_k$ (with $s_1=s_k$). Recall that $\effect{\gamma}$ is the sum of weights along $\gamma$ and $\depth{\gamma}$ is the inverse of the lowest effect along the prefixes of $\gamma$. We call $1\le d\le k$ a \emph{nadir} of $\gamma$ if it is the index of a prefix that attains the depth of $\gamma$. That is, $\effect{s_1,\ldots, s_d}=-\depth{\gamma}$. 
We say that $\gamma$ is \emph{positive} if $\effect{\gamma}$ is positive (and similarly for negative, non-negative, zero, etc.).
We call $\gamma$ \emph{good} if it a simple, non-negative cycle, and $\depth{\gamma}=0$. 

\begin{observation}
	\label{obs:shifted cycle}
	If $\gamma$ is non-negative and it has a nadir $d$, then the \emph{shifted cycle} $\gamma^{\leftarrow d}\eqdef s_d\, s_{d+1},\cdots , s_k,s_2,\cdots ,s_d$ is good.
	Similarly, if $\gamma$ is negative, then $\effect{\gamma^{\leftarrow d}}=-\depth{\gamma^{\leftarrow d}}$.
\end{observation}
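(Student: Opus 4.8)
The plan is to reduce both assertions to elementary bookkeeping on the prefix sums of $\gamma$. Write $e_1,\dots,e_{k-1}$ for the transition effects of $\gamma$ in order, and for $1\le j\le k$ put $P(j)=\effect{s_1,\dots,s_j}=\sum_{i<j}e_i$, so that $P(1)=0$, $P(k)=\effect{\gamma}$, and $\depth{\gamma}=-\min_{1\le j\le k}P(j)$; the hypothesis that $d$ is a nadir says precisely that $P(d)=\min_{1\le j\le k}P(j)$. First I would check that $\gamma^{\leftarrow d}$ really is a cycle at $s_d$: since $s_1=s_k$, the step ``from $s_k$ to $s_2$'' is the step of $\gamma$ from $s_1$ to $s_2$, so the transition sequence of $\gamma^{\leftarrow d}$ is the cyclic rotation $e_d,\dots,e_{k-1},e_1,\dots,e_{d-1}$. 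Hence $\effect{\gamma^{\leftarrow d}}=\sum_{i=1}^{k-1}e_i=\effect{\gamma}$, and $\gamma^{\leftarrow d}$ is simple iff $\gamma$ is, since a rotation neither creates nor destroys a repeated state (we take $\gamma$ simple in the first claim, as the definition of ``good'' requires).

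The core step is to compute $\depth{\gamma^{\leftarrow d}}$, i.e.\ the smallest effect of a prefix of $\gamma^{\leftarrow d}$. The prefixes split into two families. A prefix lying entirely inside the first block $s_d,\dots,s_k$ and ending at $s_j$ ($d\le j\le k$) has effect $P(j)-P(d)$. A prefix that traverses the whole first block and then continues into the second block, ending at $s_j$ ($1\le j\le d$, where $j=1$ is the instant just after the first block, consistent with $s_k=s_1$), has effect $\bigl(\effect{\gamma}-P(d)\bigr)+P(j)$. Minimising each family and using that the global minimum $P(d)$ of $P$ is attained both inside the range $[d,k]$ and inside $[1,d]$, the first family has minimum $P(d)-P(d)=0$ and the second has minimum $\effect{\gamma}-P(d)+P(d)=\effect{\gamma}$. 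Therefore $\depth{\gamma^{\leftarrow d}}=-\min\{0,\effect{\gamma}\}$.

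Both parts of the observation now follow immediately. If $\gamma$ is non-negative then $\effect{\gamma}\ge 0$, so $\depth{\gamma^{\leftarrow d}}=0$; combined with $\effect{\gamma^{\leftarrow d}}=\effect{\gamma}\ge 0$ and the simplicity noted above, $\gamma^{\leftarrow d}$ is good. If instead $\gamma$ is negative, then $\min\{0,\effect{\gamma}\}=\effect{\gamma}$, so $\depth{\gamma^{\leftarrow d}}=-\effect{\gamma}=-\effect{\gamma^{\leftarrow d}}$, which is the stated equality. I expect the only point needing genuine care to be the indexing of the second family of prefixes: getting the wrap-around offset $\effect{\gamma}-P(d)$ right and sanity-checking the degenerate nadirs $d=1$ (the second block is empty and $\gamma^{\leftarrow d}=\gamma$) and $d=k$ (the first block is a single state). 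Everything else is a one-line prefix-sum identity.
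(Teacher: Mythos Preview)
Your argument is correct: the prefix-sum bookkeeping is sound, the rotation preserves both the total effect and simplicity, and the case split on prefixes yields $\depth{\gamma^{\leftarrow d}}=-\min\{0,\effect{\gamma}\}$, from which both claims follow at once. The paper itself gives no proof---this is stated as an observation and left to the reader---so your write-up is strictly more detailed than what appears there; the only minor point is that the first claim tacitly assumes $\gamma$ is simple (as you note), which is indeed how the observation is used downstream in the proof of \cref{lem:linear form of paths}.
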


For a state $r\in\states$ and an initial configuration $\initialstate,c_0$,	
let $\LangVia{\initialstate,c_0}{r} \subseteq \Lang{\initialstate,c_0}$ be the language of words accepted by a run that visits $r$.

The first tool we use in studying the universality problem is a canonical form for accepting runs, 
akin to \emph{linear path schemes} of \cite{LS2004,BFGHM15}.
\begin{definition}[Linear Forms]
	\label{def:linear path scheme}
	A path $\pi$ is in \emph{linear form} if there exist simple cycles $\gamma_1,\ldots,\gamma_k$ and paths $\tau_0,\ldots,\tau_{k}$ such that  $\pi=\tau_0\gamma_1^{e_1}\tau_1\cdots \tau_{k-1}\gamma_k^{e_k}\tau_k$ for some numbers $e_1,\ldots,e_k\in\N$, and such that  every non-negative cycle $\gamma_i$, is taken from a nadir, and so is executable with any counter value.
	
	We call $e_i$ the \emph{exponent} of $\gamma_i$, and we refer to $\tau_0\gamma_1\tau_1\ldots\gamma_k\tau_k$ as the \emph{underlying path} of $\pi$. The \emph{length} of the linear form is the length of the underlying path.
\end{definition}

A linear form is described by the components above, where the exponents are given in binary.
In the following, we show that every path can be transformed to a path in linear form with a small description size.
\begin{restatable}{lemma}{unaryLPF}
	\label{lem:linear form of paths}
	Let $\pi$ be an executable path of length $n$ from $(p,c)$ to $(q,c')$.
        Then there exists an executable path $\pi'$ of length $n$ in linear form whose length is 
        at most $2|\states|^2$, from $(p,c)$ to $(q,c'')$ with $c''\ge c'$.
\end{restatable}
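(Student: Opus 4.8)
The plan is to prove \cref{lem:linear form of paths} by induction on the length of the underlying path of~$\pi$, iteratively "folding" cycles into exponentiated blocks while never decreasing the final counter and never increasing the path length (since reading a singleton alphabet, length is just the number of transitions, which is preserved by the rearrangement operations we use).

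\textbf{Setup and the basic folding step.} First I would observe that any path $\pi$ from $(p,c)$ whose underlying path is long must, by pigeonhole on $|\states|$, contain a simple cycle $\gamma$ around some state~$r$. I want to extract a maximal "run" of such a cycle. Concretely, write $\pi = \sigma \cdot \gamma' \cdot \rho$ where $\gamma'$ is a cycle on~$r$ (not necessarily simple) and argue we may take $\gamma'$ to be a power $\gamma^{e}$ of a \emph{simple} cycle~$\gamma$ on~$r$. The key point is that we need $\gamma^e$ to be executable from wherever the run reaches~$r$. Here \cref{obs:shifted cycle} does the work: if $\gamma$ is non-negative, replace it by its shifted version $\gamma^{\leftarrow d}$ taken from a nadir, which is \emph{good} (simple, non-negative, depth~$0$) and hence executable from \emph{any} counter value — matching the clause in \cref{def:linear path scheme}. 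If $\gamma$ is negative, it can only be traversed a bounded number of times before the counter would go negative (each traversal drops the counter by $|\effect{\gamma}| \ge 1$), so a negative cycle contributes at most $c$ full traversals and can simply be "absorbed" as part of an adjacent acyclic segment $\tau_i$ rather than exponentiated. This is the step I expect to require the most care: making precise that we can always locate a simple cycle, cyclically rotate the run to start it at a nadir (adjusting the surrounding $\tau$'s accordingly and checking executability of the rotated prefix/suffix is inherited), and that doing so does not decrease the eventual counter value — the shift changes intermediate counter values but the total effect of $\gamma$ is unchanged, and starting from a nadir only ever helps executability.

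\textbf{Inductive decomposition.} With the folding step in hand, I would peel off one maximal block $\gamma_i^{e_i}$ at a time: scan $\pi$ from the start, let $\tau_0$ be the prefix up to the first repeated state, let $\gamma_1$ be the simple cycle discovered there (rotated to a nadir if non-negative), collect the maximal power $\gamma_1^{e_1}$ actually traversed by $\pi$, then recurse on the remainder. Each extracted simple cycle lives on a distinct state in the sense that between two consecutive cycle-blocks the connecting path $\tau_i$ is acyclic — otherwise it would itself contain a repeated state and we could extract another block there. Hence the underlying path $\tau_0\gamma_1\tau_1\cdots\gamma_k\tau_k$ visits each state at most a bounded number of times: each $\tau_i$ is simple so has length $<|\states|$, each $\gamma_i$ is simple so has length $\le|\states|$, and there are at most $|\states|$ blocks (one per state that can serve as the "pivot" of a maximal cycle-block), giving total length at most $|\states|\cdot|\states| + (|\states|+1)\cdot|\states| \le 2|\states|^2$. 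I would tune the bookkeeping so the constant comes out as the claimed $2|\states|^2$; the exact accounting (whether we get $k \le |\states|$ or need $k \le |\states|+1$, and the exact lengths of the $\tau$'s) is routine but must be done carefully.

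\textbf{Monotonicity of the counter and conclusion.} Throughout, the only transformations used are (i) cyclically rotating a cycle to start at a nadir and (ii) grouping repeated consecutive cycles into a power — operation (ii) changes nothing about the sequence of transitions, and operation (i) preserves the multiset of transitions in the block hence preserves $\effect{\gamma}$, so the total effect $\effect{\pi'} = \effect{\pi}$ and therefore the final counter $c'' = c + \effect{\pi'} = c + \effect{\pi} = c'$; in particular $c'' \ge c'$. (If instead we also drop unexecutable negative-cycle iterations in degenerate cases, we only ever shorten and the statement allows $c'' \ge c'$, but as noted length is preserved so no such dropping is needed.) Executability of $\pi'$ from $(p,c)$ is checked segment by segment: acyclic segments $\tau_i$ are traversed exactly as in $\pi$ (up to the global counter shift, which is nonnegative at the points where we enter them by the nadir choice), non-negative blocks $\gamma_i^{e_i}$ are executable from any value by the \cref{obs:shifted cycle} good-cycle property, and one verifies the counter entering each block is at least what $\pi$ had there. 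Assembling these observations yields $\pi'$ in linear form of length $\le 2|\states|^2$ with $c'' \ge c'$, completing the proof.
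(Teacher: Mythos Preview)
Your argument has a genuine gap in the step bounding the number of blocks~$k$. You claim ``there are at most $|\states|$ blocks (one per state that can serve as the pivot of a maximal cycle-block)'', but nothing in your peeling procedure prevents the \emph{same} state from serving as pivot repeatedly. Consider three states $a,b,c$ and the path $a\,b\,a\,c\,a\,b\,a\,c\,a\cdots$: your scan finds first repeated state $a$, extracts $\gamma_1=a\to b\to a$ with maximal exponent $e_1=1$ (the next step goes to $c$, not $b$), then on the remainder extracts $\gamma_2=a\to c\to a$ with $e_2=1$, and so on, producing arbitrarily many blocks. The connecting $\tau_i$ are indeed acyclic (in fact empty), but that does not bound~$k$. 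Relatedly, your treatment of negative cycles is off: a negative simple cycle can be iterated not merely $c$ times but as many times as the counter value \emph{at that point} allows, which may be enormous if a positive cycle was pumped earlier; absorbing all those iterations into an ``acyclic segment $\tau_i$'' destroys the length bound.

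The paper's proof avoids both issues by a different mechanism. It marks, for each visited state~$q$, the indices $\first(q)$ and $\last(q)$ (at most $2|\states|$ marks total), then \emph{relocates} every simple cycle found between marks to one of these marked positions --- non-negative cycles (shifted to a nadir) go to $\first(x_d)$, negative cycles (also shifted) go to $\last(x_d)$ --- and argues executability is preserved in each case. After this, all cycles sit at $\le 2|\states|$ positions. A second phase then \emph{replaces} every dangling cycle at a mark by the maximum-effect simple cycle of the same length and same nadir state; this is what makes $c''\ge c'$ (strictly, in general) rather than $c''=c'$, and collapses each mark to at most $|\states|$ distinct cycles (one per length), giving the $2|\states|^2$ bound. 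Your rotation idea via \cref{obs:shifted cycle} is the right local tool, but you are missing the global relocation-to-marks and the replacement-by-max-effect steps that actually deliver the bound.
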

\begin{proof}[Proof Sketch:]
	$\pi'$ is obtained from $\pi$ in two steps, namely rearranging simple cycles, and then choosing a small set of ``representative'' simple cycles to replace others. The crux of the proof is the first step, where instead of simply moving a cycle, we also shift it so that it is taken from its nadir. Then, for every set of simple cycles of the same length and on the same state, we take the one with maximal effect as a representative.
\end{proof}

We now turn to identify states that have a special significance in analyzing universality.
\begin{definition}
	\label{def:pump states}
	Let $\Pump\subseteq\states$ be the set of states that admit good cycles. For each such state $r$ fix a shortest good cycle $\gamma_r$.
\end{definition}
Intuitively, a state $r$ is in \Pump if it has a cycle that can be taken with any counter value, any number of times. That is, it can be used to ``pump'' the length of the word.
Another important property is that if a path never visits a state in \Pump
then \emph{all} its simple cycles must be negative.
Indeed, any non-negative cycle must contain a non-negative simple cycle
and any state at a nadir of such cycle must be in \Pump.

If however, a state in \Pump occurs along an accepting run, we can accept the same word using a
run in a short linear form, as we now show.
\begin{restatable}{lemma}{unaryLPFpump}
	\label{lem:linear form with pump state}
	There exists a bound $\bound{1}\in\poly(|\states|,\norm{\transitions})$ such that, for every $n\in\N$,
	if $n$ is accepted by a run that visits a state $r\in \Pump$, then $n$ has an accepting run of the form $\eta_1\gamma_r^t\eta_2$ for paths $\eta_1,\eta_2$ of length at most $\bound{1}$.
\end{restatable}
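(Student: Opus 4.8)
The plan is to start from an accepting run $\rho$ on $n$ that visits some $r\in\Pump$, fix the first such visit, and split $\rho$ at that point into a prefix $\rho_1$ from $(\initialstate,c_0)$ to $(r,h)$ and a suffix $\rho_2$ from $(r,h)$ to an accepting configuration. The idea is to pump the good cycle $\gamma_r$ at the split point in order to raise the counter enough that, once we replace $\rho_1$ and $\rho_2$ by short linear-form paths (using \cref{lem:linear form of paths}), everything stays executable; then we observe that in fact only a bounded amount of pumping is needed, and absorb any surplus length into the exponent $t$.

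First I would apply \cref{lem:linear form of paths} to the underlying paths of $\rho_1$ and $\rho_2$ separately. This gives a path $\pi_1$ in linear form of length at most $2|\states|^2$ from $(\initialstate,c_0)$ to $(r,h_1)$ with $h_1\ge h$ (monotonicity only helps us at the split), and a path $\pi_2$ in linear form of length at most $2|\states|^2$ from $(r,h)$ to $(q,\cdot)$ with $q\in\fstates$. The subtlety is that $\pi_2$ must be executable \emph{from the counter value we actually have at $r$}; the lemma guarantees executability of its non-negative simple cycles from any value, but the $\tau_i$-segments and the negative simple cycles of $\pi_2$ need the counter to be high enough. Since $\gamma_r$ is good (non-negative, depth $0$, executable from any counter value and hence from $h$), inserting $\gamma_r^t$ between $\pi_1$ and $\pi_2$ produces, for every $t$, an executable path from $(\initialstate,c_0)$ to $(r, h_1 + t\cdot\effect{\gamma_r})$ provided $\effect{\gamma_r}\ge 0$ — which it is. Choosing $t$ large enough (but, crucially, only polynomially large: the counter drop along $\pi_2$ is at most its length times $\norm{\transitions}$, i.e. $2|\states|^2\norm{\transitions}$) makes $\pi_1\gamma_r^t\pi_2$ executable all the way through. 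Set $\eta_1 = \pi_1$ and $\eta_2 = \pi_2$; both have length at most $\bound{1}\eqdef 2|\states|^2$ as required, and $\bound{1}\in\poly(|\states|,\norm{\transitions})$.

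The remaining point is a length-matching argument so that the resulting run reads exactly $a^n$, not some other power. Since $\gamma_r$ has a fixed length $\lvert\gamma_r\rvert\le |\states|$ and contributes $\lvert\gamma_r\rvert$ to the word length per iteration, and $\lvert\eta_1\rvert + \lvert\eta_2\rvert$ is bounded, the set of lengths realizable as $\lvert\eta_1\rvert + t\lvert\gamma_r\rvert + \lvert\eta_2\rvert$ is an arithmetic progression that, for large $n$, hits $n$ only if $n$ is congruent to the right residue modulo $\lvert\gamma_r\rvert$. To handle arbitrary $n$ I would instead not fix the split point rigidly: choose the representative good cycle more carefully, or, more simply, keep $\eta_1,\eta_2$ as honest sub-runs of the original $\rho$ (so that $\lvert\eta_1\rvert+\lvert\eta_2\rvert = n - k\lvert\gamma_r\rvert$ where $k$ was the number of times $\rho$ already looped on $\gamma_r$, if any) and only apply the linear-form compression to whichever of $\eta_1,\eta_2$ is longer than $\bound{1}$, feeding the excised simple cycles — all non-negative ones shifted to their nadirs, all negative ones harmless once the counter is high — into the $\gamma_r^t$ block via the shifting of \cref{obs:shifted cycle}. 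I expect the main obstacle to be exactly this bookkeeping: arguing that compressing $\eta_1$ and $\eta_2$ to length $\le \bound{1}$ removes a \emph{net non-negative} number of transitions whose total word-length is a multiple of $\lvert\gamma_r\rvert$ — or can be made so — so that the deficit can be recovered by increasing $t$, while simultaneously the counter never goes negative because $\gamma_r$ is pumped enough beforehand. Everything else (executability of the non-negative simple cycles of $\pi_1,\pi_2$ from a nadir, the polynomial bound on $\bound{1}$, monotonicity at the split) follows directly from \cref{lem:linear form of paths}, \cref{obs:shifted cycle}, and the definition of a good cycle.
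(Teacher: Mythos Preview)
Your proposal has a genuine gap stemming from a misreading of \cref{lem:linear form of paths}. That lemma does \emph{not} shorten paths: given a path of length $n$, it returns a path of the \emph{same} length $n$, reorganised into linear form $\tau_0\gamma_1^{e_1}\cdots\gamma_k^{e_k}\tau_k$ whose \emph{underlying} path (each exponent set to $1$) has length at most $2|\states|^2$. So your $\pi_1,\pi_2$ still have lengths $|\rho_1|,|\rho_2|$, summing to $n$; setting $\eta_1=\pi_1$, $\eta_2=\pi_2$ gives nothing. Relatedly, your bound ``the counter drop along $\pi_2$ is at most $2|\states|^2\norm{\transitions}$'' is false for the same reason (the exponents $e_i$ on negative cycles may be huge), and even were it true, pumping $\gamma_r$ cannot raise the counter when $\effect{\gamma_r}=0$, which is permitted for a good cycle. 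In fact executability is a non-issue here: since $h_1\ge h$ and $\pi_2$ was already executable from $(r,h)$, monotonicity gives that $\pi_1\pi_2$ is executable with no pumping at all --- but it still has length $n$.

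Your final paragraph correctly isolates the real difficulty (length-preserving compression) but stops short of the key mechanism. The paper's argument is this: in the linear forms of $\pi_1$ and $\pi_2$, whenever some simple cycle of length $\ell$ occurs at least $|\gamma_r|$ times, remove $|\gamma_r|$ of those occurrences (total length $\ell\cdot|\gamma_r|$) and insert $\gamma_r^{\ell}$ at $r$ instead. This preserves the word length exactly. Doing this first for the \emph{negative} cycles never harms executability, since removing a negative cycle only raises subsequent counter values and $\gamma_r$ has depth $0$; after exhausting the process, fewer than $|\gamma_r|\le|\states|$ negative cycles of each length $\le|\states|$ remain, so their total contribution is at most $|\states|^3$. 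One then repeats for the non-negative cycles, retaining just enough strictly-positive repetitions to keep the counter above the depth of the remaining path (at most $6|\states|^2\norm{\transitions}$); this is where the dependence of $\bound{1}$ on $\norm{\transitions}$ enters, and why your proposed $\bound{1}=2|\states|^2$ is too small.
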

\begin{proof}[Proof Sketch:]
	Using \cref{lem:linear form of paths}, we split an accepting run on $n$ that visits $r$ to the form $\pi_1,r,\pi_2$ where $\pi_1$ and $\pi_2$ are in linear form. Then, we successively shorten $\pi_1$ and $\pi_2$ by eliminating simple cycles along them, and instead pumping the non-negative cycle $\gamma_r$. Some careful accounting is needed so that the length of the path is maintained, and so that it remains executable.
\end{proof}

We now characterize the regular language $\LangVia{\initialstate,c_0}{r}$ using a DFA of bounded size. 

\begin{restatable}{lemma}{unaryPumpDFA}
	\label{lem:pump state DFA}
	There exists a bound $\bound{2}\in\poly(\norm{\transitions}\cdot\card{\states})$
	such that, 
	for every $r\in \Pump$, there exists a DFA 
	that accepts $\LangVia{\initialstate,c_0}{r}$ and is of size at most
	$\bound{2}$.
\end{restatable}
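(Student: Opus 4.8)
The plan is to build the DFA directly from the characterization of $\LangVia{\initialstate,c_0}{r}$ supplied by \cref{lem:linear form with pump state}. Fix $r\in\Pump$ and the associated shortest good cycle $\gamma_r$, say of effect $\delta_r=\effect{\gamma_r}\ge 0$ and length $\ell_r\le|\states|$. By \cref{lem:linear form with pump state}, $n\in\LangVia{\initialstate,c_0}{r}$ iff $n$ is accepted by a run of the form $\eta_1\gamma_r^t\eta_2$ where $\eta_1$ goes from $(\initialstate,c_0)$ to some configuration $(r,c)$, $\eta_2$ goes from $(r,c')$ to an accepting configuration, both $\eta_1,\eta_2$ have length at most $\bound{1}$, and $c'=c+t\delta_r$. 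Since $\gamma_r$ is good it is executable from any counter value, so the only executability constraint is on $\eta_1$ and $\eta_2$ themselves, and the length read is $|\eta_1|+t\cdot\ell_r+|\eta_2|$. Thus
\[
\LangVia{\initialstate,c_0}{r}=\Bigl\{\, |\eta_1|+t\cdot\ell_r+|\eta_2| \;:\; t\in\N,\ (\eta_1,\eta_2)\in R \,\Bigr\},
\]
where $R$ is the finite set of pairs of executable paths of length $\le\bound{1}$ realising an $\initialstate\to r$ prefix and an $r\to\fstates$ suffix that are counter-compatible with $\gamma_r$ being pumped between them. The key point is that the set of \emph{lengths} $\{|\eta_1|\}$ of usable prefixes and the set $\{|\eta_2|\}$ of usable suffixes are each finite subsets of $\{0,1,\dots,\bound{1}\}$ (one checks counter-compatibility only needs that there is \emph{some} $t$ making the counters fit, which is monotone in $t$ and hence reduces to checking a single threshold). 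Consequently $\LangVia{\initialstate,c_0}{r}$ is a finite union of arithmetic progressions of the form $\{m_0+t\ell_r : t\in\N\}$ with each offset $m_0\le 2\bound{1}$.

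Next I would turn this into a DFA. A language over $\{a\}$ that is a finite union of sets $\{m_0+t\ell_r\}$ with $m_0\le M$ is accepted by the standard ``lasso'' DFA on $\le M+\ell_r$ states: a path of $M$ states leading into a cycle of length $\ell_r$, with accepting states marked according to membership. Here $M=2\bound{1}$ and $\ell_r\le|\states|$, so this DFA has at most $2\bound{1}+|\states|$ states, which is polynomial in $\norm{\transitions}\cdot\card{\states}$ since $\bound{1}\in\poly(|\states|,\norm{\transitions})$. Setting $\bound{2}$ to this polynomial bound gives the claim. (One subtlety: to argue the accepting set of this DFA is computable, not merely small, one notes that each membership question $n\in\LangVia{\initialstate,c_0}{r}$ for $n<M+\ell_r$ reduces to a bounded search over pairs $(\eta_1,\eta_2)$ of length $\le\bound{1}$, which is decidable; but for the size bound only the existence of \emph{some} DFA of this size is needed.)

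The main obstacle I expect is the careful bookkeeping of the counter constraints on $\eta_1$ and $\eta_2$: I must argue that deciding whether a given prefix length and suffix length are jointly realisable reduces to the existence of a single threshold value of $t$, and that the relevant offsets stay bounded by $2\bound{1}$ rather than blowing up. This is where monotonicity of OCN runs is used—raising $t$ only increases the counter handed to $\eta_2$, and $\gamma_r$ being taken from its nadir means pumping never drops the counter below its starting level—so once a pair $(\eta_1,\eta_2)$ works for some $t$ it works for all larger $t$ of the same residue, and the set of achievable lengths is genuinely an eventually-periodic set with period dividing $\ell_r$ and threshold within the stated bound. The rest is the routine translation of an ultimately periodic unary language into a lasso-shaped DFA.
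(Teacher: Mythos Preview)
Your approach is essentially the paper's: invoke \cref{lem:linear form with pump state} to conclude that $\LangVia{\initialstate,c_0}{r}$ is ultimately periodic with period $|\gamma_r|$ and a polynomially bounded threshold, then realise it by a lasso-shaped DFA whose tail has that threshold length and whose loop has length $|\gamma_r|$. One minor caveat on the bookkeeping you flag yourself: the offset bound $2\bound{1}$ may be slightly too small, since when $\effect{\gamma_r}>0$ a pair $(\eta_1,\eta_2)$ can require up to $t_0\approx\depth{\eta_2}/\effect{\gamma_r}\le \bound{1}\norm{\transitions}$ iterations of $\gamma_r$ before $\eta_2$ becomes executable, pushing the minimal achievable length in that residue class to roughly $2\bound{1}+\bound{1}\norm{\transitions}\,|\states|$; this is still polynomial in $|\states|\cdot\norm{\transitions}$, so the lemma is unaffected, and the paper's own proof is equally terse on this point.
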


Define $\PS\eqdef\bigcup_{r\in \Pump} \LangVia{\initialstate,c_0}{r}$. 
Notice that $\PS\subseteq \Lang{\initialstate,c_0}$
and that $\Lang{\initialstate,c_0}\setminus \PS$ must be finite. 
Indeed, if $w\in \Lang{\initialstate,c_0}\setminus \PS$ then it can only be accepted by runs with \emph{only} negative cycles, of which there are finitely many. In particular, if $\N\setminus \PS$ is infinite, then $\Lang{\initialstate,c_0}\neq \N$. 

Using the bounds from \cref{lem:pump state DFA}, we have the following.
\begin{restatable}{lemma}{unaryWitnessBound}
	\label{lem:singleton alphabet witness bound}
	There exists $\bound{3}\in\poly(\norm{\transitions},\card{\states})$ such that $\Lang{\initialstate,c_0}\neq \N$ if, and only if, there exists $n\in \N$ such that either $n<\bound{2}$ and $n\notin \Lang{\initialstate,c_0}$, or $\bound{3}^{|\states|}\le n\le 2\bound{3}^{|\states|}$ and $n\notin \PS$.
\end{restatable}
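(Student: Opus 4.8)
The plan is to characterize non-universality of $\Lang{\initialstate,c_0}$ by separating behaviour on ``short'' words from behaviour on ``long'' words, using the decomposition $\Lang{\initialstate,c_0} = \PS \cup (\Lang{\initialstate,c_0}\setminus\PS)$, where the second set is finite (contained in some range below $\bound2$ after possibly enlarging $\bound2$). For the forward direction, suppose $\Lang{\initialstate,c_0}\neq\N$, so some $n\notin\Lang{\initialstate,c_0}$. If $n<\bound2$ we are in the first disjunct and done. Otherwise $n\geq\bound2$; since $\Lang{\initialstate,c_0}\setminus\PS$ is finite and (by adjusting $\bound2$ if necessary) bounded by $\bound2$, we have $n\notin\PS$. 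The task is then to transfer this single witness of $n\notin\PS$ to a witness in the prescribed window $[\bound3^{|\states|}, 2\bound3^{|\states|}]$. The idea is that membership in $\PS$ is semilinear with small periods: by \cref{lem:pump state DFA}, each $\LangVia{\initialstate,c_0}{r}$ is accepted by a DFA of size at most $\bound2$, so $\PS$ is a union of at most $|\states|$ such languages, hence accepted by a DFA of size at most $\bound2^{|\states|}$ (product construction), and therefore $\PS$ is ultimately periodic with threshold and period bounded by $\bound2^{|\states|}$. Setting $\bound3$ to be a suitable polynomial dominating $\bound2$, the complement $\N\setminus\PS$, if it contains any element $n\geq\bound3^{|\states|}$ at all, must then contain one in the window $[\bound3^{|\states|},2\bound3^{|\states|}]$ by periodicity; and the case where $n<\bound3^{|\states|}$ but $n\geq\bound2$ is handled by noting $n\notin\PS$ and $n\notin\Lang{\initialstate,c_0}\supseteq$(finite part), contradiction with the assumption unless we already fall into one of the two listed cases—so I would phrase the case split carefully to make sure every $n\notin\Lang{\initialstate,c_0}$ lands in one of the two disjuncts.

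Concretely, the key steps, in order, are: (1) recall $\PS\subseteq\Lang{\initialstate,c_0}$ and $\Lang{\initialstate,c_0}\setminus\PS$ finite, and bound the latter by $\bound2$ (enlarging $\bound2$ from \cref{lem:pump state DFA} if necessary, since those finitely many words come from runs using only negative simple cycles, whose total length is polynomially bounded); (2) build a DFA for $\PS$ of size $\bound2^{|\states|}$ by taking the product of the $|\states|$ DFAs from \cref{lem:pump state DFA}, and conclude $\PS$ is ultimately periodic with threshold and period each at most $\bound2^{|\states|}$; (3) choose $\bound3\in\poly(\norm{\transitions},\card{\states})$ large enough that $\bound3^{|\states|}$ dominates both $\bound2^{|\states|}$ (the period/threshold of $\PS$) and $\bound2$; (4) for the forward direction, given $n\notin\Lang{\initialstate,c_0}$: if $n<\bound2$ use the first disjunct; otherwise $n\notin\PS$, and if moreover $n<\bound3^{|\states|}$ this still gives a value in $[\bound2,\bound3^{|\states|})$ witnessing $n\notin\PS$—but to match the statement exactly I instead use periodicity: since $n\geq\bound2\geq$ threshold of $\PS$ and $n\notin\PS$, the arithmetic progression $n, n-p, n+p,\dots$ (where $p\leq\bound2^{|\states|}$ is the period) stays outside $\PS$ above the threshold, so some multiple-step shift lands in $[\bound3^{|\states|},2\bound3^{|\states|}]$ (the window has length $\bound3^{|\states|}\geq p$, so it must contain a point of the progression); (5) for the backward direction, a witness $n<\bound2$ with $n\notin\Lang{\initialstate,c_0}$ immediately gives non-universality, and a witness $\bound3^{|\states|}\leq n\leq 2\bound3^{|\states|}$ with $n\notin\PS$ gives $n\notin\Lang{\initialstate,c_0}$ provided $\bound3^{|\states|}$ exceeds the bound $\bound2$ on the finite part $\Lang{\initialstate,c_0}\setminus\PS$, which holds by choice of $\bound3$.

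The main obstacle I expect is controlling the size of the DFA for $\PS$ and hence its period: the naive product of $|\states|$ automata of size $\bound2$ gives size $\bound2^{|\states|}$, which is exponential but still yields a period of the form $\bound3^{|\states|}$ for polynomial $\bound3$—exactly the shape appearing in the statement—so this is acceptable, but one must be careful that the threshold (not just the period) of ultimate periodicity of a union of ultimately-periodic sets is also bounded by the product size, and that the window $[\bound3^{|\states|},2\bound3^{|\states|}]$ genuinely has length at least the period. A secondary technical point is justifying that $\Lang{\initialstate,c_0}\setminus\PS$ is not merely finite but bounded by a polynomial (so that it sits below $\bound2$, or at worst below $\bound3^{|\states|}$): this follows because a word outside $\PS$ is accepted only by runs whose simple cycles are all negative, and such a run, after removing all cycles, has length at most $|\states|$, while each negative simple cycle strictly decreases the counter, so the number of cycle-iterations is bounded by $c_0$ plus the initial height, i.e.\ polynomially; I would fold this bookkeeping into the (re)definition of $\bound2$ at the start of the proof. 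Once these bounds are in place, the equivalence is a routine case analysis on the location of the smallest witness of non-membership.
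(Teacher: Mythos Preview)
Your approach is essentially the same as the paper's: take the product of the DFAs from \cref{lem:pump state DFA} to get a single DFA for $\N\setminus\PS$, then exploit its ultimately-periodic (lasso) structure. Two points deserve correction.

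First, and most importantly, your backward direction in step~(5) introduces an unnecessary dependence on $c_0$. You try to argue directly that the witness $n$ in the window satisfies $n\notin\Lang{\initialstate,c_0}$, which requires bounding $\Lang{\initialstate,c_0}\setminus\PS$ below the window. But that bound is of order $c_0\cdot\poly(|\states|,\norm{\transitions})$, so folding it into $\bound{2}$ or $\bound{3}$ violates the stated requirement $\bound{3}\in\poly(\norm{\transitions},|\states|)$; in the binary-encoding application this would be fatal. The paper avoids this entirely: from a witness $n$ in the window with $n\notin\PS$ it does \emph{not} conclude $n\notin\Lang{\initialstate,c_0}$, but only that $\N\setminus\PS$ is infinite (since $n$ lies in the cycle part of the product lasso). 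Combined with the mere \emph{finiteness} of $\Lang{\initialstate,c_0}\setminus\PS$ (no explicit bound needed), this gives $\Lang{\initialstate,c_0}\neq\N$. So drop the attempt to bound $\Lang{\initialstate,c_0}\setminus\PS$ and argue via infinitude instead.

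Second, your forward direction hinges on the claim ``$n\ge\bound{2}\ge$ threshold of $\PS$'', but you earlier bounded that threshold only by $\bound{2}^{|\states|}$, and you flag this as an obstacle without resolving it. The paper's resolution is the observation that a product of single-letter lasso DFAs has prefix length equal to the \emph{maximum} (not product) of the component prefixes, hence at most $\bound{2}$; only the cycle length blows up to $\bound{2}^{|\states|}$. With this, any $n\ge\bound{2}$ with $n\notin\PS$ already sits on the cycle of the product DFA, so some shift lands in the window, as you want.
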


\Cref{lem:singleton alphabet witness bound} suggests the following algorithmic scheme for deciding non-universality: 
non-deterministically either (1) guess $n<\bound{3}$, and check that $n\notin \Lang{\initialstate,c_0}$, or (2) guess $\bound{3}^{|\states|}\le n\le 2\bound{3}^{|\states|}$ and check that $n\notin \LangVia{\initialstate,c_0}{r}$ for all $r\in \Pump$, which implies that $n\notin\?P$.

Note that even if the transitions are encoded in unary, $n$ still needs to be guessed in binary for part (2) (and also for part (1) if the encoding is binary). 
The complexity of the checks involved in both parts of the algorithm depend on the encoding of the transitions, and are handled separately in the following.

\subparagraph*{Unary Encoding.} 
If the transitions are encoded in unary, then $\bound{3}$ is polynomial in the size of the OCN. Consequently, we can check for $n<\bound{3}$ whether $n\in \Lang{\initialstate,c_0}$ by simulating the OCN for $n$ steps, while keeping track of the maximal run to each state. Indeed, due to the monotonicity of executability of OCN paths
it suffices to remember, for each state $s$,
the maximal possible counter-value $c$ so that $(s,c)$ is reachable via the current prefix,
which must be a number $\le c_0 + n\cdot \norm{\transitions}$ 
or $-\infty$ (to represent that no configuration $(s,c)$ can be reached).

Next, in order to check whether $n\notin \LangVia{\initialstate,c_0}{r}$ for all $r\in \Pump$ for $\bound{3}^{|\states|}\le n\le 2\bound{3}^{|\states|}$ written in binary, we notice that since $\bound{3}$ is polynomial in the description of the OCN, then the size of each DFA for $\LangVia{\initialstate,c_0}{r}$ constructed as per \cref{lem:pump state DFA} is polynomial in the OCN. Since the proof in \cref{lem:pump state DFA} is constructive, we can obtain an explicit representation of these DFAs. 
Finally, given a DFA (or indeed, and NFA) over a singleton alphabet and $n$ written in binary, we can check whether $n$ is accepted in time $O(\log n)$ by repeated squaring of the transition matrix for the DFA \cite{MS1973}. 
We conclude with the following.
\begin{theorem}
	\label{thm:singleton alphabet unary coNP}
	The universality problem for singleton-alphabet one-counter nets with transitions encoded in unary is in \coNP, and is thus \coNP-complete.
\end{theorem}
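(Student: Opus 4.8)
The plan is to establish the upper bound by giving an \NP\ procedure for \emph{non}-universality; the theorem then follows by complementation, with \coNP-hardness inherited from the \coNP-hardness of unary-alphabet NFA universality recalled at the start of the section. By \cref{lem:singleton alphabet witness bound}, $\Lang{\initialstate,c_0}\neq\N$ holds exactly when there is a witness $n\in\N$ of one of two types: a \emph{short} witness with $n<\bound{2}$ and $n\notin\Lang{\initialstate,c_0}$, or a \emph{long} witness with $\bound{3}^{|\states|}\le n\le 2\bound{3}^{|\states|}$ and $n\notin\PS$. The nondeterministic algorithm guesses which type occurs and guesses $n$ in binary. In the short case the guess has polynomially many bits since $\bound{2}$ is polynomial; in the long case $n$ is exponential in $|\states|$ but its binary encoding has only $O(|\states|\cdot\log\bound{3})$ bits, which is polynomial because $\bound{3}$ is polynomial in the (unary) size of $\?A$. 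It then remains to verify the guessed witness in polynomial time.

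For a short witness, since $n<\bound{2}$ is polynomially bounded we simulate $\?A$ for $n$ steps, maintaining for each state $s$ the largest counter value $c$ with $(s,c)$ reachable along the prefix read so far ($-\infty$ if none). By monotonicity of OCN executability one such value per state suffices, and --- this is the one place the unary encoding is used --- every value met is at most $c_0+n\cdot\norm{\transitions}$, hence polynomial, so the simulation is polynomial-time; we accept iff after $n$ steps no accepting state carries a value $\ge 0$, i.e.\ $n\notin\Lang{\initialstate,c_0}$. For a long witness, we first compute $\Pump$ and, for each $r\in\Pump$, a shortest good cycle $\gamma_r$ (directly from the definitions, a polynomial-time computation). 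Using the constructive proof of \cref{lem:pump state DFA}, we then build for each $r\in\Pump$ an explicit DFA $D_r$ recognising $\LangVia{\initialstate,c_0}{r}$; since $\bound{2}$ is polynomial in the unary size of $\?A$, each $D_r$ has polynomial size. Over a singleton alphabet, membership of the binary-encoded $n$ in $L(D_r)$ is decidable in $O(\log n)\cdot\poly(|D_r|)$ time by repeated squaring of the transition matrix of $D_r$~\cite{MS1973}; we accept iff $n\notin L(D_r)$ for every $r\in\Pump$, which by definition of $\PS$ is precisely $n\notin\PS$. As $|\Pump|\le|\states|$, the whole verification runs in polynomial time, so non-universality is in \NP\ and universality is in \coNP; with the matching lower bound it is \coNP-complete.

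There is no single hard step here: the preceding lemmas have already done the substantive work, and what remains is careful bookkeeping --- (i) noting that the exponentially large long witness still has a polynomial-length binary encoding, (ii) exploiting the unary encoding of updates so that the counter values met during the length-$n$ simulation stay polynomially bounded, and (iii) combining the polynomial size and explicit constructibility of the DFAs of \cref{lem:pump state DFA} with the $O(\log n)$-time unary-alphabet membership test.
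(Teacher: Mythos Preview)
Your proposal is correct and follows essentially the same approach as the paper: invoke \cref{lem:singleton alphabet witness bound} to reduce non-universality to guessing a short or long witness $n$ in binary, verify a short witness by a polynomial-length forward simulation tracking the maximal reachable counter per state, and verify a long witness by explicitly constructing the polynomial-size DFAs of \cref{lem:pump state DFA} and testing membership via repeated squaring. The only cosmetic difference is that the paper phrases the short-witness bound as $n<\bound{3}$ rather than $n<\bound{2}$, but both are polynomial under unary encoding so this is immaterial.
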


\subparagraph*{Binary Encoding.}
When the transitions are encoded in binary, $\bound{3}$ is potentially exponential in the encoding of the OCN. Thus, naively adapting the methods taken in the unary case (with basic optimization) will lead to a $\PSPACE$ algorithm for universality (using Savitch's Theorem). As we now show, by taking a different approach, we can obtain an upper bound of $\SigmaTwo$, placing the problem in the second level of the polynomial hierarchy.

In order to obtain this bound, we essentially show that given $n$ encoded in binary, checking whether $n$ is accepted by the OCN can be done in \NP. This is based on the linear form of \cref{lem:linear form of paths}.
\begin{restatable}{lemma}{unaryCheckLPS}
	\label{lem:checking linear path scheme}
	Let $\pi=\tau_0\gamma_1^{e_1}\tau_1\cdots \tau_{k-1}\gamma_k^{e_k}\tau_k$ be a run in linear form, then we can check whether $\pi$ is executable from counter value $c$ in time polynomial in the description of $\pi$.
\end{restatable}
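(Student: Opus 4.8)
The goal is to check executability of a run $\pi=\tau_0\gamma_1^{e_1}\tau_1\cdots\tau_{k-1}\gamma_k^{e_k}\tau_k$ from an initial counter value $c$, in time polynomial in the description of $\pi$ (where the exponents $e_i$ are written in binary). Recall from \cref{def:linear path scheme} that every \emph{non-negative} cycle $\gamma_i$ in the linear form is taken from a nadir, hence is executable from any counter value; so the only delicate part is controlling the counter along the \emph{negative} cycles and along the acyclic connector paths $\tau_j$. The plan is to walk along the underlying path $\tau_0\gamma_1\tau_1\cdots\gamma_k\tau_k$ while maintaining, in a single variable, the current counter value, and to replace each block $\gamma_i^{e_i}$ by a closed-form computation rather than simulating it $e_i$ times.

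First I would set $c_0 := c$ and process the blocks left to right. For a connector path $\tau_j$ of length $\le 2|\states|^2$ (so polynomial), simulate it step by step: the run survives iff the running counter plus $\depth{\tau_j}$ is at least the value entering $\tau_j$, i.e.\ $c_{\text{in}} \ge \depth{\tau_j}$, and afterwards $c_{\text{out}} = c_{\text{in}} + \effect{\tau_j}$. Both $\depth{\tau_j}$ and $\effect{\tau_j}$ are computed in polynomial time. For a cyclic block $\gamma_i^{e_i}$ entered with counter value $v$: if $\gamma_i$ is non-negative, it is executable from a nadir with any counter value, so the block is always executable and the exit value is $v + e_i\cdot\effect{\gamma_i}$, computed directly (a binary multiplication). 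If $\gamma_i$ is negative, say $\effect{\gamma_i}=-\Delta$ with $\Delta>0$, the counter value entering the $j$-th copy is $v - (j-1)\Delta$, and that copy is executable iff $v-(j-1)\Delta \ge \depth{\gamma_i}$; the worst (last) copy, $j=e_i$, gives the single inequality $v - (e_i-1)\Delta \ge \depth{\gamma_i}$, after which the exit value is $v - e_i\Delta$. Both $\depth{\gamma_i}$ and $\effect{\gamma_i}$ are again polynomial-time computable from $\gamma_i$, whose length is polynomial. (A zero-effect cycle is a degenerate non-negative case.) Finally, the whole run $\pi$ is executable from $c$ iff every block and connector passes its test and no counter value ever needed to drop below $0$ — the latter is exactly captured by the nadir/depth inequalities we already checked.

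All arithmetic involved — the running counter value, partial effects, and the products $e_i\cdot\effect{\gamma_i}$ — stays within $|c| + (\text{length of }\pi)\cdot\norm{\transitions}$ in absolute value, hence has polynomially many bits, so each step is polynomial and there are polynomially many steps. I do not expect a genuine obstacle here; the only thing requiring slight care is the bookkeeping that, within a negative cyclic block $\gamma_i^{e_i}$, it suffices to check the single inequality for the \emph{last} iteration (since the counter is monotonically decreasing across the iterations of a negative cycle, the last one is the hardest), together with the nadir condition guaranteeing that each individual traversal of $\gamma_i$ from its nadir never dips below zero relative to its entry value — this is where the ``taken from a nadir'' clause of the linear-form definition is used. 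Conversely, the non-negative cycles need no per-iteration check at all, precisely because they too are taken from a nadir, so their depth is $0$ in the shifted form. Collecting these observations yields the claimed polynomial-time algorithm.
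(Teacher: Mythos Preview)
Your approach is correct and essentially identical to the paper's: maintain the running counter along the underlying path, handle non-negative cycles for free via the nadir clause, and reduce each negative block $\gamma_i^{e_i}$ to the single inequality $v-(e_i-1)|\effect{\gamma_i}|\ge\depth{\gamma_i}$ for the last iteration. One small correction: the linear-form definition only guarantees that \emph{non-negative} cycles are taken from a nadir, so your remark that negative cycles ``too'' are taken from a nadir is not warranted---but this does not matter, since your depth inequality for the last iteration already accounts for the full drop inside each traversal of a negative $\gamma_i$ without needing any nadir assumption.
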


\Cref{lem:checking linear path scheme} shows that, given $n$ in binary, we can check whether $n\in \Lang{\initialstate,c_0}$ in $\NP$. Indeed, we guess the structure of an accepting run in linear form (including the exponents of the cycles), and check in polynomial time whether this run is executable, and whether it is accepting. 

In order to complete our algorithmic scheme for universality, it remains to show how we can check in $\NP$, given $n$ in binary, whether $n\notin \LangVia{\initialstate,c_0}{r}$ for every $r$. In contrast to the case of unary encoding, this is fairly simple.

Given $r$, we can construct an OCN $\?A^r$ such that $\Lang[\?A^r]{\initialstate,c_0}=\LangVia[\?A]{\initialstate,c_0}{r}$ by taking two copies of $\?A$, and allowing a transition to the second copy only once $r$ is reached. The accepting states are then those of the second copy.
Thus, checking whether $n\notin \LangVia{\initialstate,c_0}{r}$ amounts to checking whether $n\notin \Lang[\?A^r]{\initialstate,c_0}$. We can now complete the algorithmic scheme.
\begin{restatable}{theorem}{unaryUnivBinary}
	\label{thm:singleton alphabet binary Sigma2}
	The universality problem for singleton-alphabet one-counter nets with transitions encoded in binary is in $\SigmaTwo$.
\end{restatable}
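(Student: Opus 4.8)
The plan is to combine the witness-size bound of \Cref{lem:singleton alphabet witness bound} with the \NP\ membership test that \Cref{lem:linear form of paths} and \Cref{lem:checking linear path scheme} make available for a number $n$ given in binary, and to organise the resulting procedure for \emph{non}-universality as a guess of $n$ followed by a \coNP\ verification. Recall from \Cref{lem:singleton alphabet witness bound} that $\Lang{\initialstate,c_0}\ne\N$ iff there is an $n$ with either $n<\bound{2}$ and $n\notin\Lang{\initialstate,c_0}$, or $\bound{3}^{|\states|}\le n\le 2\bound{3}^{|\states|}$ and $n\notin\PS$. Although $\bound{2}$ and $\bound{3}$ may be exponential in the binary encoding of $\?A$, they are polynomial in $\norm{\transitions}$ and $|\states|$, so every such witness $n$ has polynomially many bits and can be produced by a single existential guess; deciding which of the two cases applies, and that $n$ lies in the stated range, is a polynomial arithmetic computation. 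Using $\PS=\bigcup_{r\in\Pump}\LangVia{\initialstate,c_0}{r}$ with $|\Pump|\le|\states|$, the condition to be verified after the guess is a Boolean combination of polynomially many statements of the form ``$n\notin L$'', where $L$ is $\Lang{\initialstate,c_0}$ or some $\LangVia{\initialstate,c_0}{r}$. Once I show that \emph{membership} in each such $L$ is in \NP, this condition lies in \coNP, so non-universality is decidable by an \NP-computation with a \coNP\ (equivalently \NP) oracle; thus non-universality lies in $\SigmaTwo$ and universality is placed within the second level of the polynomial hierarchy.

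It therefore remains to argue that, for $n$ in binary, ``$n\in\Lang{\initialstate,c_0}$'' is in \NP; the case of $\LangVia{\initialstate,c_0}{r}$ reduces to it. For $\Lang{\initialstate,c_0}$ I would guess an accepting run in linear form: by \Cref{lem:linear form of paths} it may be taken with underlying path of length at most $2|\states|^2$, and its cycle exponents $e_1,\dots,e_k$ then satisfy $\sum_i |\gamma_i| e_i\le n$, so each $e_i$ has only polynomially many bits; hence the guessed linear form has polynomial size, and one verifies in polynomial time that it has length $n$, ends in an accepting state, and --- by \Cref{lem:checking linear path scheme} --- is executable from $c_0$. For $\LangVia{\initialstate,c_0}{r}$ one runs exactly this test on the OCN $\?A^r$ obtained from two copies of $\?A$ that switch, upon first visiting $r$, to the second copy (whose states are the only accepting ones): it satisfies $\Lang[\?A^r]{\initialstate,c_0}=\LangVia{\initialstate,c_0}{r}$ and has the same transition weights, hence up to a constant factor the same encoding size, as $\?A$. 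This completes the algorithmic scheme.

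The one delicate point --- the rest being routine quantifier counting --- is the size bookkeeping in the linear-form guess: I must confirm that a linear form in the canonical shape of \Cref{def:linear path scheme} (with each non-negative simple cycle written starting from a nadir, a property checkable of a guessed cycle in polynomial time) admits a polynomial-size description, namely an underlying path of $O(|\states|^2)$ transitions together with $O(|\states|^2)$ exponents of polynomially many bits each, and that \Cref{lem:checking linear path scheme} is invoked so as to certify executability of the \emph{entire} guessed run from the given value $c_0$, not merely that each non-negative cycle is executable from its own nadir. With these in place the \NP\ membership test, and hence the claimed $\SigmaTwo$ upper bound, goes through.
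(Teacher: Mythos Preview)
Your proposal is correct and follows essentially the same approach as the paper: guess the witness $n$ (in binary, of polynomial bit-length by \Cref{lem:singleton alphabet witness bound}), and verify $n\notin\Lang{\initialstate,c_0}$ respectively $n\notin\LangVia{\initialstate,c_0}{r}$ for all $r\in\Pump$ via an \NP\ oracle, where membership is decided by guessing a linear form (\Cref{lem:linear form of paths}) and checking it with \Cref{lem:checking linear path scheme}; the paper uses the identical two-copy construction $\?A^r$ to reduce $\LangVia{\initialstate,c_0}{r}$ to ordinary membership. Your explicit bookkeeping on exponent sizes and the nadir check merely spells out details the paper leaves implicit.
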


\subsection{Initial-Value Universality}
The characterization of universality given in \cref{lem:singleton alphabet witness bound} can be simplified in the case of initial-value universality, in the sense that the freedom in choosing an initial value allows us to work with the underlying automaton of the OCN, disregarding the transition effects. This also allows us to obtain the same complexity results under unary and binary encodings.

Recall that $\Pump$ is the set of states that admit good cycles (see \cref{def:pump states}).
Let $\?N$ be the underlying NFA of $\?A$. For a state $r\in \Pump$, define $\LangVia[\?N]{\initialstate}{r}$ to be the set of words accepted by $\?N$ via a run that visits $r$. Overloading the notation of \cref{subsec: singleton universality}, we define $\PS\eqdef\bigcup_{r\in \Pump}\LangVia[\?N]{\initialstate}{r}$.
\begin{restatable}{lemma}{unaryIVchar}
	\label{lem:singleton alphabet init-univ characterization}
	There exists $c_0$ such that $\Lang[\?A]{\initialstate,c_0}=\N$ iff $\Lang[\?N]{\initialstate}=\N$ and $\N\setminus \PS$ is finite.
    \end{restatable}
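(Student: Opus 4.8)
The plan is to prove both directions by reducing to the structural analysis already developed for plain universality, exploiting the fact that a sufficiently large initial counter renders counter constraints irrelevant along bounded-length prefixes.

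For the direction ($\Rightarrow$), suppose $\Lang[\?A]{\initialstate,c_0}=\N$ for some $c_0$. Then trivially every word has an accepting run in $\?A$, which is also an accepting run in $\?N$ (forgetting effects), so $\Lang[\?N]{\initialstate}=\N$. For the second conjunct, I would argue exactly as in the paragraph following \cref{lem:pump state DFA}: any word $w$ accepted by $\?A$ from $(\initialstate,c_0)$ that does \emph{not} visit a state of $\Pump$ must be accepted by a run all of whose simple cycles are negative (since any non-negative cycle contains a non-negative simple cycle, whose nadir state lies in $\Pump$). Such runs have bounded length — bounded by $c_0$ plus a polynomial in $|\states|$ — and there are finitely many of them, so only finitely many words of $\N$ avoid $\PS$; hence $\N\setminus\PS$ is finite.

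For the direction ($\Leftarrow$), assume $\Lang[\?N]{\initialstate}=\N$ and $\N\setminus\PS$ is finite. I want to produce a single $c_0$ that works for all $n$. Split into the finitely many ``short'' witnesses and the ``long'' ones. For $n\in\N\setminus\PS$ (finitely many), pick for each such $n$ an accepting run of $\?N$ on $a^n$; since $\?N$ has $|\states|$ states this run can be taken of length $<|\states|$ when $n<|\states|$, and in general it can be taken in linear form with underlying path of length $O(|\states|^2)$ via \cref{lem:linear form of paths} applied to $\?N$ viewed with all weights $0$ — but more simply, each such $n$ is below some fixed bound $B$ since $\N\setminus\PS$ is finite, and the accepting run of $\?N$ on $a^n$ has length exactly $n<B$; its depth as an $\?A$-path is at most $B\cdot\norm{\transitions}$. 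For $n\in\PS$, by definition the word $a^n$ is accepted by $\?N$ via a run visiting some $r\in\Pump$, which admits a good cycle $\gamma_r$ in $\?A$ (simple, non-negative, depth $0$). Here I would invoke the argument of \cref{lem:linear form with pump state}, transplanted to $\?N$: such a run can be rewritten to the form $\eta_1\gamma_r^t\eta_2$ with $\eta_1,\eta_2$ of length at most some $\bound{1}\in\poly(|\states|,\norm{\transitions})$. Because $\gamma_r$ is non-negative with depth $0$, iterating it never decreases the counter and is executable from any value; so the only counter demands of this run come from $\eta_1$ and from the single copy of $\gamma_r$ preceding $\eta_2$, whose combined depth is at most $(\bound{1}+|\states|)\cdot\norm{\transitions}$. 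Taking $c_0$ to be the maximum of $B\cdot\norm{\transitions}$ and $(\bound{1}+|\states|)\cdot\norm{\transitions}$, every $n\in\N$ has an accepting run of $\?A$ from $(\initialstate,c_0)$, i.e.\ $\Lang[\?A]{\initialstate,c_0}=\N$.

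The main obstacle is the ($\Leftarrow$) direction, specifically making precise that an $\?N$-run visiting a pump state can be reshaped into short-$\eta_1$, pumped-$\gamma_r$, short-$\eta_2$ form \emph{with uniformly bounded counter demand independent of $n$}. This is essentially \cref{lem:linear form with pump state} again, but one must check two things: first, that the reshaping, which there is stated for $\?A$-runs, goes through when we only have an $\?N$-run to start with (it does — we run \cref{lem:linear form of paths} on the $0$-weighted automaton to get a short linear form, then reintroduce the actual $\?A$-weights and use that $\gamma_r$ is good to absorb the removed simple cycles); and second, that the depth of $\eta_1$ and of the initial $\gamma_r$-copy is genuinely bounded by a function of $|\states|$ and $\norm{\transitions}$ only, not of $t$ — which holds because $\eta_1$ has length at most $\bound{1}$ and each $\gamma_r$-iteration is non-decreasing. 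Once these bookkeeping points are settled the uniform bound on $c_0$ follows immediately.
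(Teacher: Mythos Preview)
Your proposal is correct and follows the same approach as the paper's proof. One refinement the paper adds is that, because the reshaping into $\eta_1\gamma_r^t\eta_2$ is carried out purely in $\?N$ (where executability is automatic), one may remove \emph{all} surplus simple cycles rather than retaining a $\norm{\transitions}$-dependent buffer as in \cref{lem:linear form with pump state}, giving $|\eta_i|\in\poly(|\states|)$ instead of your $\poly(|\states|,\norm{\transitions})$---immaterial for this lemma, but relevant for the DFA size bound in \cref{lem:singleton alphabet init-univ DFA}.
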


Following similar arguments to those in \cref{lem:linear form with pump state,lem:pump state DFA}, and using the fact that we work with the underlying NFA, we can show the following.
\begin{restatable}{lemma}{unaryIVdfa}
	\label{lem:singleton alphabet init-univ DFA}
	There exists a bound $\bound{4}\in \poly(|\states|)$ such that,
	for every $r\in \Pump$ there exists a DFA that accepts
	$\LangVia{\initialstate}{r}$ and which is of size at most $\bound{4}$.
\end{restatable}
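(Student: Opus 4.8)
The plan is to follow the two-step strategy used for the OCN versions in \cref{lem:linear form with pump state,lem:pump state DFA}, which becomes considerably simpler over the underlying NFA $\?N$ since there is no counter to keep executable. Fix $r\in\Pump$ together with its fixed good cycle $\gamma_r$, and set $p\eqdef|\gamma_r|\le|\states|$. If $r$ is unreachable from $\initialstate$, or no accepting state is reachable from $r$, then $\LangVia[\?N]{\initialstate}{r}=\emptyset$ and a one-state DFA suffices; so assume otherwise.

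\emph{Step 1 (canonical form).} I would first show that there is a bound $\bound{1}'\in\poly(|\states|)$ such that $n\in\LangVia[\?N]{\initialstate}{r}$ if, and only if, there exist paths $\eta_1$ from $\initialstate$ to $r$ and $\eta_2$ from $r$ to an accepting state, both of length at most $\bound{1}'$, together with $t\in\N$ such that $n=|\eta_1|+tp+|\eta_2|$. The ``if'' direction is immediate, since $\eta_1\gamma_r^t\eta_2$ is then an accepting run of $\?N$ visiting $r$ of length $|\eta_1|+tp+|\eta_2|$. For ``only if'', take any accepting run visiting $r$ of length $n$ and split it at its first visit to $r$ into a path $\pi_1\colon\initialstate\to r$ and a path $\pi_2\colon r\to F$. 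Regarding $\?N$ as an OCN with all effects $0$ (so that every cycle is non-negative, every nadir is trivial, and every path is executable), apply \cref{lem:linear form of paths} to $\pi_1$: this yields a path of the same length in linear form $\tau_0\delta_1^{e_1}\cdots\delta_j^{e_j}\tau_j$ whose underlying path has length at most $2|\states|^2$. Now replace each exponent $e_i$ by $e_i\bmod p$; since $e_i-(e_i\bmod p)$ is a multiple of $p$, the resulting path $\eta_1\colon\initialstate\to r$ satisfies $|\eta_1|\equiv|\pi_1|\pmod p$ and $|\eta_1|\le 2|\states|^2+2|\states|^2\cdot p\cdot|\states|\le\bound{1}'$. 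Doing the same to $\pi_2$ gives $\eta_2\colon r\to F$ with $|\eta_2|\equiv|\pi_2|\pmod p$ and $|\eta_2|\le\bound{1}'$. Then $m\eqdef|\eta_1|+|\eta_2|\equiv n\pmod p$, so $n=m+tp$ with $t=(n-m)/p\in\N$, as claimed.

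\emph{Step 2 (the DFA).} Let $B\eqdef\{\,|\eta_1|+|\eta_2| \;:\; \eta_1\colon\initialstate\to r,\ \eta_2\colon r\to F,\ |\eta_1|,|\eta_2|\le\bound{1}'\,\}\subseteq\{0,\dots,2\bound{1}'\}$. By Step 1, $\LangVia[\?N]{\initialstate}{r}=B+p\N$. This set is eventually periodic with period $p$, and all non-periodic behaviour occurs below $2\bound{1}'+p$: for each residue class $i$ modulo $p$ that meets $B$ the set equals $\{m_i,m_i+p,m_i+2p,\dots\}$ where $m_i=\min\{m\in B: m\equiv i\}\le 2\bound{1}'$, and it is empty on residue classes disjoint from $B$. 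Hence a ``lasso''-shaped DFA recognises it: states $0,1,\dots,2\bound{1}'+p-1$, with $a$-transitions $k\mapsto k{+}1$ for $k<2\bound{1}'+p-1$ and $2\bound{1}'+p-1\mapsto 2\bound{1}'$, where state $k$ is accepting iff $k\in B+p\N$ (a condition that, for $k\ge 2\bound{1}'\ge\max B$, depends only on $k\bmod p$, so it is consistent with the length-$p$ loop). Setting $\bound{4}\eqdef 2\bound{1}'+p\in\poly(|\states|)$ yields a DFA of the required size.

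The main obstacle is Step 1: one must verify that truncating the exponents of the linear form really preserves the length modulo $p=|\gamma_r|$, keeps the path a genuine $\initialstate$-to-$r$ (resp. $r$-to-$F$) path that still visits $r$, and stays within a polynomial length bound. This is precisely where moving from $\?A$ to $\?N$ helps: with all effects $0$, every rearrangement is trivially executable, so \cref{lem:linear form of paths} applies verbatim and none of the counter bookkeeping of \cref{lem:linear form with pump state} is needed — which is also why the same bound holds regardless of whether weights are encoded in unary or binary.
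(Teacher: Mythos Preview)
Your proposal is correct and follows essentially the same two-step strategy the paper indicates (canonical form $\eta_1\gamma_r^t\eta_2$ with short $\eta_1,\eta_2$, then a lasso DFA of size $\poly(|\states|)$). The only minor technical difference is in Step~1: the paper's analogue of \cref{lem:linear form with pump state} removes groups of cycles whose combined length is a multiple of $|\gamma_r|$ and substitutes iterations of $\gamma_r$ in place, keeping the length exactly $n$ throughout, whereas you reduce each exponent in the linear form modulo $p=|\gamma_r|$ (shortening the path by a multiple of $p$) and then reinsert $\gamma_r^t$ at the end---both yield the same canonical form and the same $\poly(|\states|)$ bound.
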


We can now solve the initial-value universality problem.
\begin{theorem}
	\label{thm:Singleton-iv-universality}
	The initial-value universality problem for one-counter nets (in unary or binary encoding) is \coNP-complete.
\end{theorem}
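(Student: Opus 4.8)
The plan is to show that the complement problem --- deciding that \emph{no} initial counter value makes the language universal --- lies in $\NP$, which places initial-value universality in $\coNP$, and then to match this with a straightforward $\coNP$-hardness reduction. The starting point is \cref{lem:singleton alphabet init-univ characterization}: writing $\?N$ for the underlying NFA and $\PS=\bigcup_{r\in\Pump}\LangVia[\?N]{\initialstate}{r}$, the net $\?A$ is initial-value universal iff $\Lang[\?N]{\initialstate}=\N$ \emph{and} $\N\setminus\PS$ is finite. Hence $\?A$ is \emph{not} initial-value universal iff at least one of the following holds: (i) $\Lang[\?N]{\initialstate}\ne\N$, or (ii) $\N\setminus\PS$ is infinite, and it suffices to certify each of (i) and (ii) in $\NP$. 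First I would note that $\Pump$ can be computed exactly in polynomial time --- it is a structural property of $\?A$, with the weights entering only through sums along simple cycles, which have length at most $\card{\states}$ --- and that computing it exactly matters: enlarging $\Pump$ only shrinks $\N\setminus\PS$, which could certify (ii) spuriously.

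Certifying (i) is classical: the underlying NFA $\?N$ is independent of how the weights are encoded, and non-universality of an NFA over $\{a\}$ is in $\NP$ \cite{MS1973} --- guess a length $n$ of polynomial bit-length and check $n\notin\Lang[\?N]{\initialstate}$ in time $O(\log n)\cdot\poly(\card{\states})$ by repeated squaring of the transition matrix. The interesting part is (ii). By \cref{lem:singleton alphabet init-univ DFA} (with an effective construction), for every $r\in\Pump$ we obtain a DFA $D_r$ over $\{a\}$ with at most $\bound{4}\in\poly(\card{\states})$ states recognising $\LangVia[\?N]{\initialstate}{r}$; since $\bound{4}$ depends only on $\card{\states}$, this holds under both the unary and the binary encoding. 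Now $\N\setminus\PS=\bigcap_{r\in\Pump}\overline{\LangVia[\?N]{\initialstate}{r}}$ is recognised by the product of the complemented automata $\overline{D_r}$, a DFA over $\{a\}$ with at most $B\eqdef\bound{4}^{\card{\states}}$ states, and by the pumping lemma for DFA over a unary alphabet such a language is infinite iff it contains some $n$ with $B\le n<2B$. As $B$ has polynomial bit-length, (ii) is witnessed in $\NP$: compute $\Pump$ and the automata $D_r$, guess $n\in[B,2B)$ in binary, and check that $D_r$ rejects $n$ for every $r\in\Pump$ by repeated squaring. Thus non-universality is in $\NP$ and the problem is in $\coNP$ under either encoding.

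For $\coNP$-hardness I would reduce from universality of NFA over $\{a\}$, which is $\coNP$-hard \cite{MS1973}: given such an NFA $M$, regard it as an OCN by giving every transition counter effect $0$; the counter then plays no role, so $\Lang{\initialstate,c_0}$ equals the language accepted by $M$ for every $c_0$, and hence the OCN is initial-value universal iff $M$ is universal. Since all effects are $0$, this reduction is identical under unary and binary encoding, giving $\coNP$-hardness in both cases and therefore $\coNP$-completeness. The step I expect to require most care is the analysis of (ii): one must observe that although the threshold and period of the unary language $\PS$ are exponential \emph{in value}, they have polynomial \emph{bit-length} (both at most $B=\bound{4}^{\card{\states}}$, via the product-automaton size bound), so that a non-universality witness can be guessed in binary; and one must verify both directions of the pumping characterisation --- that any accepted $n\ge B$ certifies infinitude by pumping up, and that, conversely, infinitude of $\N\setminus\PS$ forces some such length into the window $[B,2B)$.
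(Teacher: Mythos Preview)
Your proposal is correct and follows essentially the same approach as the paper: both invoke \cref{lem:singleton alphabet init-univ characterization} to reduce to checking NFA universality and finiteness of $\N\setminus\PS$, use \cref{lem:singleton alphabet init-univ DFA} to bound the product DFA by $\bound{4}^{\card{\states}}$, guess a witness in the resulting exponential window and verify it by repeated squaring on the polynomial-size $D_r$'s, and obtain hardness from unary-NFA universality. Your write-up is in fact slightly more careful than the paper's in spelling out the polynomial-time computability of $\Pump$ and the explicit zero-effect reduction for hardness.
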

\begin{proof}
	First, observe that the problem is \coNP-hard by reduction from the universality problem for NFAs. We now turn to show the upper bound.
	
	By \cref{lem:singleton alphabet init-univ characterization}, it is enough to decide whether $\Lang[\?N]{\initialstate}=\N$ and $\N\setminus \PS$ is finite. Checking whether $\Lang[\?N]{\initialstate}=\N$,
	i.e., deciding the universality problem for NFA over a single-letter alphabet, can be done in \coNP~\cite{MS1973}. 
	
	By \cref{lem:singleton alphabet init-univ DFA},
	there exists a DFA $\?D$
	for $\N\setminus \PS$ of size
	at most $M=\bound{4}^{\card{\states}}$, by taking the intersection of the respective DFAs over every 
	$r\in\Pump$. Thus, $\N\setminus \PS$ is infinite iff $\?D$ accepts a word of length $M< n\le 2M$ (as such a word induces infinitely many other words). Thus, we can decide in $\NP$ whether $\N\setminus \PS$ is infinite, by guessing $M<n\le 2M$, and checking that it is in $\LangVia{\initialstate}{r}$ for every $r\in \Pump$
	(using repeated squaring on the respective DFAs).
	
	We conclude that both checking whether $\Lang[\?N]{\initialstate}=\N$ and whether $\N\setminus \PS$ is finite can be done in $\coNP$, and so the initial value universality problem is also in $\coNP$.
\end{proof}

\subsection{Bounded Universality}
For bounded universality, the states in \Pump are not restrictive enough: in order to keep the counter bounded, a state must admit a $0$-effect cycle. However, these cycles need not be simple. Thus, we need to adjust our definitions somewhat. Fortunately, however, once the correct definitions are in place, most of the proofs carry out similarly to those of \cref{subsec: singleton universality}.

\begin{definition}
	\label{def:stable states}
	A state $q\in \states$ is \emph{stable} if either:
	\begin{enumerate}
		\item it is at the nadir of a simple positive cycle, and admits a negative cycle, or
		\item it is at the nadir of a simple zero cycle.
	\end{enumerate}
	We denote by $\Stable$ the set of stable states.
\end{definition}
Identifying stable states can be done in polynomial time
(see e.g.~\cref{lem:DOCN-conditions}). The motivation behind this definition is to identify states
that admit a zero-effect (not necessarily simple) cycle.
\begin{restatable}{lemma}{unaryBUstable}
	\label{lem:singleton alphabet bounded-univ stable zero cycles}
	There exists a bound $\bound{5}\in \poly(|\states|,\norm{\transitions})$ such that,
	every stable state $q$ admits a zero cycle of length and depth at most $\bound{5}$.
\end{restatable}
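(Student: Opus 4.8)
I would treat the two cases of \cref{def:stable states} separately.

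\emph{Case 2 (nadir of a simple zero cycle).} This case is immediate: if $q$ is at the nadir $d$ of a simple zero cycle $\gamma$, then by \cref{obs:shifted cycle} the shifted cycle $\gamma^{\leftarrow d}$ is good, so it is a cycle based at $q$ with effect $0$ and depth $0$, and, being simple, has length at most $|\states|$.

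\emph{Case 1 (nadir of a simple positive cycle, plus a negative cycle).} Here the idea is to cancel a positive effect against a negative one. First I would shift the simple positive cycle to its nadir using \cref{obs:shifted cycle}, obtaining a \emph{good} positive cycle $\alpha$ based at $q$ --- simple (length $\le|\states|$), of depth $0$, with $1\le\effect{\alpha}\le|\states|\cdot\norm{\transitions}$. Next I would extract a negative cycle $\beta$ based at $q$ of polynomial length: decomposing a negative closed walk at $q$ into simple cycles yields a simple negative cycle $\sigma$ based at some visited state $r$, together with simple paths from $q$ to $r$ and from $r$ to $q$ (each of length $<|\states|$); pre- and post-pending these paths around $\sigma^k$ for $k=2|\states|\cdot\norm{\transitions}+1$ makes the total effect negative, yielding $\beta$ with $|\beta|$ and $|\effect{\beta}|$ polynomial in $|\states|$ and $\norm{\transitions}$. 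Then, with $g=\gcd(\effect{\alpha},-\effect{\beta})$, $a=-\effect{\beta}/g$ and $b=\effect{\alpha}/g$, the exponents $a,b\ge 1$ are polynomial and satisfy $a\cdot\effect{\alpha}+b\cdot\effect{\beta}=0$; since $\alpha$ and $\beta$ are both based at $q$, the cycle $\gamma\eqdef\alpha^a\beta^b$ is a zero cycle at $q$ of length $a|\alpha|+b|\beta|$, which is polynomial.

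It remains to bound $\depth{\gamma}$. Along the prefix $\alpha^a$ the counter stays at or above its initial level, since $\alpha$ is good ($\depth{\alpha}=0$) and $\effect{\alpha}\ge 0$. Along the following $\beta^b$, after $j\le b$ complete copies the counter relative to the start of $\gamma$ equals $a\cdot\effect{\alpha}+j\cdot\effect{\beta}$, which is $\ge a\cdot\effect{\alpha}+b\cdot\effect{\beta}=0$ because $\effect{\beta}<0$, and within a copy it dips by at most $\depth{\beta}$; hence every prefix of $\gamma$ has effect $\ge-\depth{\beta}$, so $\depth{\gamma}\le\depth{\beta}\le|\beta|\cdot\norm{\transitions}$, again polynomial. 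Taking $\bound{5}$ to be the maximum of the length and depth bounds obtained in the two cases concludes the proof. I expect the main obstacle to be precisely this depth estimate in Case 1: it is essential to place all copies of $\alpha$ before all copies of $\beta$ and then use the monotonicity of $j\mapsto a\cdot\effect{\alpha}+j\cdot\effect{\beta}$, which is exactly why the exponents are chosen so that $a\cdot\effect{\alpha}=b\cdot(-\effect{\beta})$; a secondary technicality is converting an a priori unbounded negative cycle through $q$ into one of polynomial length.
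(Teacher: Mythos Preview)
Your proposal is correct and follows essentially the same route as the paper: handle the simple-zero-cycle case directly, and in the other case combine a good positive simple cycle at $q$ with a polynomially constructed negative cycle at $q$ (obtained by going to a state on a simple negative sub-cycle, iterating it enough times to dominate the connecting paths, and returning) using suitable exponents so the effects cancel. The paper uses the cruder exponents $-\effect{\chi}$ and $\effect{\eta}$ rather than dividing by the gcd, and it does not spell out the depth bound; your explicit argument that $\depth{\alpha^a\beta^b}\le\depth{\beta}$, via the monotonicity of $j\mapsto a\cdot\effect{\alpha}+j\cdot\effect{\beta}$, is a welcome addition that the paper glosses over.
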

By \cref{lem:singleton alphabet bounded-univ stable zero cycles}
we can fix, for each $q\in \Stable$, some zero-cycle $\zeta_q$ with effect and depth bounded by $\bound{5}$.
Recall that $\LangVia{\initialstate,c_0}{r}$ is the set of words that are accepted with a path that passes through $r$. 
Let $\?S\eqdef\bigcup_{r\in \Stable}\LangVia{\initialstate,c_0}{r}$.  We prove an analogue of \cref{lem:linear form with pump state}.
\begin{restatable}{lemma}{unaryBUlinearForm}
	\label{lem:singleton alphabet bounded-univ linear form stable state}
	There exists a bound $\bound{6}\in\poly(|Q|,\norm{\transitions})$ such that
	every $n\in \LangVia{\initialstate,c_0}{r}$ has an accepting run of the form $\eta_1\zeta_r^t\eta_2$ for paths $\eta_1,\eta_2$ of length at most $\bound{6}$.
\end{restatable}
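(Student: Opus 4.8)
The plan is to mirror the proof of \cref{lem:linear form with pump state}, with the good cycle $\gamma_r$ replaced throughout by the zero cycle $\zeta_r$ fixed for the stable state $r$. First I would take an accepting run $\rho$ on $a^n$ that visits $r$ and split it at the first occurrence of $r$ as $\rho=\rho_1\rho_2$, where $\rho_1$ leads from $(\initialstate,c_0)$ to some configuration $(r,c)$ and $\rho_2$ from $(r,c)$ to an accepting configuration. Applying \cref{lem:linear form of paths} separately to $\rho_1$ and to $\rho_2$ yields linear-form paths $\pi_1,\pi_2$ of the same respective lengths, such that $\pi_1$ runs from $(\initialstate,c_0)$ to $(r,c')$ with $c'\ge c$, and $\pi_2$ --- executable from $(r,c)$ and hence, by monotonicity, also from $(r,c')$ --- still ends in an accepting state. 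Hence $\pi_1\pi_2$ is an accepting run on $a^n$ visiting $r$ whose two underlying paths each have length at most $2|\states|^2$; all remaining length sits in the exponents of the simple cycles of $\pi_1$ and $\pi_2$.

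The second step is to rewrite $\pi_1$ and $\pi_2$ so that every simple cycle has a polynomially bounded exponent, transferring the removed length into copies of $\zeta_r$ inserted at the junction state $r$. Because $\zeta_r$ has effect $0$, inserting or deleting copies of it leaves every counter value along the run unchanged; the only requirement is that the counter at $r$ be at least $\depth{\zeta_r}\le\bound{5}$ when the $\zeta_r^t$-block is entered. Exponents of zero and negative simple cycles may be reduced freely, since this only raises all later counter values (recall that non-negative simple cycles in a linear form are taken from their nadir and are executable with any counter value). Reducing the exponent of a positive simple cycle lowers later counters, so one reduces such exponents only down to the level still sufficient for everything downstream --- including entering the $\zeta_r$-block; once the zero and negative exponents are bounded, all downstream requirements are polynomial, so a polynomial residual positive exponent suffices. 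Choosing the removed amounts to be multiples of $|\zeta_r|$, up to a polynomially small remainder that is re-absorbed as a few extra iterations of one of the surviving simple cycles (using that the cycle lengths together with $|\zeta_r|$ generate a numerical semigroup of polynomial Frobenius number), then lets me pick $t\in\N$ with $|\eta_1|+t|\zeta_r|+|\eta_2|=n$, where $\eta_1$ is the rewritten $\pi_1$ and $\eta_2$ the rewritten $\pi_2$. Taking $\bound{6}$ to be the resulting polynomial bound on $|\eta_1|$ and $|\eta_2|$ finishes the proof. When $r$ is stable of the second kind I would take $\zeta_r$ shifted to start at its nadir, so $\depth{\zeta_r}=0$ by \cref{obs:shifted cycle} and no counter lifting is needed; when $r$ is of the first kind, a bounded number of iterations of the positive simple cycle at $r$ (which is taken from $r$ itself, hence executable with any counter value) can be prepended inside $\eta_1$ to lift the counter above $\depth{\zeta_r}$.

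The hard part will be this bookkeeping in the second step: reducing exponents in an order --- non-positive cycles first, then the positive ones from right to left --- that keeps the whole path executable throughout, while simultaneously controlling the total length modulo $|\zeta_r|$ so that the slack is taken up exactly by a $\zeta_r^t$-block at $r$, and making sure the counter at $r$ is high enough to execute that block. Everything here stays polynomial and parallels the corresponding step of \cref{lem:linear form with pump state}, the extra care being needed only because $\zeta_r$ can have positive depth, unlike the depth-$0$ good cycle $\gamma_r$.
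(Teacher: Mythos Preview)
Your proposal is correct and follows essentially the same approach as the paper: mirror the proof of \cref{lem:linear form with pump state} with $\zeta_r$ in place of $\gamma_r$, and handle the one new obstacle --- that $\zeta_r$ may have positive depth --- by first prepending a bounded number of iterations of the positive simple cycle at $r$ (on which $r$ is a nadir) to lift the counter above $\depth{\zeta_r}$. The paper's proof sketch says exactly this, with your Frobenius-style length bookkeeping being a reasonable way to spell out the ``remove subsets of cycles whose combined length is a multiple of $|\zeta_r|$'' step that the earlier lemma uses.
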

\begin{proof}
	The proof follows \emph{mutatis-mutandis} that of \cref{lem:linear form with pump state}, with one important difference: before replacing cycles with iterations of the zero cycle $\zeta_r$, we replace a bounded number of cycles with the positive cycle on $r$, on which $r$ is at a nadir,\footnote{That is, unless $r$ is the nadir of a zero cycle, in which case the proof requires no changes.} so that the counter value goes above $\depth{\zeta_r}$, enabling us to take $\zeta_r$ arbitrarily many times.
	Note that this lengthens the prefix $\eta_1$ at most polynomially in $(|\states|\cdot \norm{\transitions})$.
\end{proof}

\Cref{lem:singleton alphabet bounded-univ linear form stable state} 
implies that every word $n\in\?S$ 
can be accepted by a run whose counter values are bounded
because there must by an accepting run that, except for some bounded prefix and suffix,
only iterates some zero-cycle $\zeta_r$.
More precisely, we have the following.
\begin{theorem}
	\label{cor:singleton alphabet bounded-univ bounded langvia}
	There exists $\bound{6}\in \poly(|\states|,\norm{\transitions})$ such that every word $n\in \?S$ is accepted by a run whose counter value remains below $2\bound{6}+c_0$.
\end{theorem}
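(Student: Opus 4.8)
The plan is to obtain this as a short corollary of \cref{lem:singleton alphabet bounded-univ linear form stable state}, which already produces, for every word in $\?S$, an accepting run of a very restricted shape; all that remains is to bound the counter along such a run. Concretely, fix $n\in\?S$, so that $n\in\LangVia{\initialstate,c_0}{r}$ for some $r\in\Stable$. By \cref{lem:singleton alphabet bounded-univ linear form stable state}, $n$ has an accepting run $\rho=\eta_1\zeta_r^t\eta_2$ in which $|\eta_1|,|\eta_2|\le\bound{6}$ and the middle part is a power of the fixed zero-cycle $\zeta_r$ attached to $r$; and by \cref{lem:singleton alphabet bounded-univ stable zero cycles} we may take $|\zeta_r|\le\bound{5}$. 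Since $\rho$ is executable its counter values are already $\ge 0$, so only an upper bound is needed.

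First I would bound the counter over the prefix $\eta_1$: since every transition changes the counter by at most $\norm{\transitions}$ in absolute value, starting from $c_0$ the counter stays below $c_0+\bound{6}\cdot\norm{\transitions}$ throughout $\eta_1$; call $c_1$ its value at the end of $\eta_1$. Next I would use two facts about $\zeta_r$: it has $\effect{\zeta_r}=0$, so the counter value at the start of each iteration of $\zeta_r$ equals $c_1$ (and so does the value at the end of $\zeta_r^t$); and it has length at most $\bound{5}$, so within a single iteration the counter can rise by at most $\bound{5}\cdot\norm{\transitions}$ above $c_1$. Hence throughout $\zeta_r^t$ the counter stays below $c_1+\bound{5}\cdot\norm{\transitions}\le c_0+(\bound{6}+\bound{5})\cdot\norm{\transitions}$. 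Finally the suffix $\eta_2$ starts from counter value $c_1$ and has length at most $\bound{6}$, so the counter stays below $c_1+\bound{6}\cdot\norm{\transitions}\le c_0+2\bound{6}\cdot\norm{\transitions}$ throughout $\eta_2$.

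Putting these three estimates together, the counter along $\rho$ never exceeds $c_0+(2\bound{6}+\bound{5})\cdot\norm{\transitions}$. To bring this into the form stated, I would then just rename the polynomial, letting $\bound{6}$ stand for $(\bound{6}+\bound{5})\cdot\norm{\transitions}$ — still in $\poly(|\states|,\norm{\transitions})$ since both $\bound{5}$ and $\bound{6}$ are — which makes all counter values along $\rho$ smaller than $2\bound{6}+c_0$. I do not expect a genuine obstacle: the only point requiring care is that $\zeta_r$ need not be a \emph{simple} cycle, so a priori iterating it could inflate the counter; this is precisely ruled out by \cref{lem:singleton alphabet bounded-univ stable zero cycles}, which guarantees $\zeta_r$ has polynomially bounded length and hence polynomially bounded height.
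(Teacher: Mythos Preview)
Your proposal is correct and follows exactly the approach the paper intends: the statement is presented there as an immediate corollary of \cref{lem:singleton alphabet bounded-univ linear form stable state}, with the one-line justification that the accepting run consists of a bounded prefix, iterations of a zero-cycle, and a bounded suffix. Your write-up simply spells out the arithmetic of that implication (including the use of \cref{lem:singleton alphabet bounded-univ stable zero cycles} to bound the height of $\zeta_r$), and the final renaming of the polynomial is harmless since the theorem only asserts existence of some $\bound{6}\in\poly(|\states|,\norm{\transitions})$.
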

In addition, \cref{lem:singleton alphabet bounded-univ linear form stable state} immediately gives us (with an identical proof) an analogue of \cref{lem:pump state DFA}.
\begin{restatable}{lemma}{unaryBUdfa}
	\label{lem:singleton alphabet bounded-univ DFA}
	There exists a bound $\bound{7}\in\poly(|Q|,\norm{\transitions})$ such that,
	for every $r\in \Stable$ there exists a DFA that accepts 
	$\LangVia{\initialstate,c_0}{r}$ and is of size at most $\bound{7}$.
\end{restatable}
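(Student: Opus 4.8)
The plan is to mimic the proof of \cref{lem:pump state DFA}, substituting the good cycle $\gamma_r$ by the fixed zero cycle $\zeta_r$. Throughout, fix $r\in\Stable$, and recall that $\zeta_r$ is a cycle at $r$ with $\effect{\zeta_r}=0$ and with both its length and its depth bounded by $\bound{5}$. By \cref{lem:singleton alphabet bounded-univ linear form stable state} there is $\bound{6}\in\poly(|\states|,\norm{\transitions})$ such that every $n\in\LangVia{\initialstate,c_0}{r}$ is the length of some accepting run of the shape $\eta_1\zeta_r^{\,t}\eta_2$ with $|\eta_1|,|\eta_2|\le\bound{6}$.

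First I would establish the combinatorial description
\[
	\LangVia{\initialstate,c_0}{r}=L_{\mathrm{sh}}\ \cup\ \bigcup_{(\eta_1,\eta_2)\in\Pi}\{\,|\eta_1|+|\eta_2|+t\cdot|\zeta_r| : t\in\N\,\},
\]
where $L_{\mathrm{sh}}\eqdef\LangVia{\initialstate,c_0}{r}\cap\{0,1,\dots,2\bound{6}\}$ and $\Pi$ is the (finite) set of pairs of paths $\eta_1,\eta_2$ of length at most $\bound{6}$ such that $\eta_1$ is executable from $(\initialstate,c_0)$, ends in some configuration $(r,c_1)$ with $c_1\ge\depth{\zeta_r}$, and $\eta_2$ is executable from $(r,c_1)$ and ends in an accepting state. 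The inclusion "$\supseteq$" is the crux and is exactly where the defining properties of $\zeta_r$ are used: since $\effect{\zeta_r}=0$, the counter at $r$ after reading $\zeta_r^{\,t}$ is again $c_1$ for every $t$, and since $\depth{\zeta_r}\le c_1$ every iteration of $\zeta_r$ stays nonnegative; hence for each $(\eta_1,\eta_2)\in\Pi$ and every $t\in\N$ the run $\eta_1\zeta_r^{\,t}\eta_2$ is executable, accepting, and visits $r$. For "$\subseteq$": a word $n\le 2\bound{6}$ lies in $L_{\mathrm{sh}}$; for $n>2\bound{6}$ a linear-form run $\eta_1\zeta_r^{\,t}\eta_2$ given by \cref{lem:singleton alphabet bounded-univ linear form stable state} must have $t\ge1$ (as $|\eta_1|+|\eta_2|\le2\bound{6}<n$), which forces the counter $c_1$ at the end of $\eta_1$ to satisfy $c_1\ge\depth{\zeta_r}$, and $\eta_2$ is executable from $c_1$ (the counter at the end of $\zeta_r^{\,t}$), so $(\eta_1,\eta_2)\in\Pi$ and $n$ lies in the corresponding progression.

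Next I would collapse the right-hand side to a small ultimately periodic set. All progressions share the single period $p\eqdef|\zeta_r|\le\bound{5}$ and all their offsets $|\eta_1|+|\eta_2|$ lie in $\{0,\dots,2\bound{6}\}$. Grouping by residue class $\rho$ modulo $p$ and writing $m_\rho\in\{0,\dots,2\bound{6}\}\cup\{\infty\}$ for the least offset occurring in class $\rho$, the union of progressions equals $\{n\in\N : n\ge m_{\,n\bmod p}\}$. Therefore, for every $n>2\bound{6}$, membership of $n$ in $\LangVia{\initialstate,c_0}{r}$ depends only on $n\bmod p$; that is, $\LangVia{\initialstate,c_0}{r}$ is ultimately periodic with threshold at most $2\bound{6}+1$ and period $p\le\bound{5}$. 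A language of this form is recognised by the standard ``lasso'' DFA: a path of $2\bound{6}+1$ states feeding into a cycle of $p$ states, with states marked accepting according to $L_{\mathrm{sh}}$ below the threshold and according to $\{\rho : m_\rho<\infty\}$ on the cycle. Its size is $\bound{7}\eqdef 2\bound{6}+1+\bound{5}\in\poly(|\states|,\norm{\transitions})$, proving the claim. (We only need existence of the DFA, so it is irrelevant that the $m_\rho$ are not computed explicitly here.)

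The one point needing care — and it is precisely where this argument is a notch simpler than \cref{lem:pump state DFA} — is the verification that a single feasible iteration count $t\ge1$ certifies feasibility of $\eta_1\zeta_r^{\,t}\eta_2$ for all $t\in\N$. Here this is immediate because $\zeta_r$ has effect $0$ and depth at most $\bound{5}$, so iterating it neither changes the counter at $r$ nor threatens nonnegativity; in \cref{lem:pump state DFA} one additionally appeals to monotonicity of executability to absorb the (non-negative) drift of $\gamma_r$. Everything else is bookkeeping identical to that lemma.
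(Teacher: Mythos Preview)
Your proposal is correct and follows essentially the same approach as the paper, which simply states that \cref{lem:singleton alphabet bounded-univ linear form stable state} ``immediately gives us (with an identical proof) an analogue of \cref{lem:pump state DFA}.'' You have spelled out that identical proof carefully: the linear form $\eta_1\zeta_r^{t}\eta_2$ with bounded $|\eta_1|,|\eta_2|$ shows $\LangVia{\initialstate,c_0}{r}$ is ultimately periodic with period $|\zeta_r|\le\bound{5}$ and threshold $\le 2\bound{6}$, whence a lasso DFA of the stated size exists. Your explicit treatment of both inclusions and your remark that $\effect{\zeta_r}=0$ makes the ``all $t$ work'' step even simpler than in \cref{lem:pump state DFA} are accurate elaborations of what the paper leaves implicit.
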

We can now characterize bounded universality in terms of $\?S$, the set of stable states.
\begin{restatable}{lemma}{unaryBUchar}
	\label{lem:singleton alphabet bounded-univ characterization}
	$\Lang{\initialstate,c_0}$ is bounded-universal if, and only if, 
	the underlying automaton $\?N$ is universal ($\Lang[\?N]{\initialstate}=\N$) and $\N\setminus \?S$ is finite.
\end{restatable}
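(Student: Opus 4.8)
\emph{Overview and forward direction ($\Rightarrow$).}
The plan is to prove the two implications separately: the forward direction is a counting/pumping argument on bounded runs, and the backward direction assembles the bounds already established, chiefly \cref{cor:singleton alphabet bounded-univ bounded langvia}. So suppose $\Lang{\initialstate,c_0}$ is bounded-universal, witnessed by a bound $b\in\N$: every $n\in\N$ has a $b$-bounded accepting run from $(\initialstate,c_0)$. Universality of $\?N$ is then immediate, since forgetting the counter turns such a run into an accepting run of $\?N$ on the same word. To see that $\N\setminus\?S$ is finite, I would show that every $n>|\states|\cdot(b+1)$ lies in $\?S$. Fix a $b$-bounded accepting run $\rho$ on $a^n$; it visits $n+1$ configurations, all in $\states\x\{0,\dots,b\}$, a set of size $|\states|\cdot(b+1)<n+1$, so $\rho$ revisits some configuration $(q,c)$. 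The infix of $\rho$ between the two visits is a closed walk $\delta$ at $q$ with $\effect{\delta}=0$. Decompose $\delta$ into simple cycles. If some simple cycle $\gamma$ in this decomposition has $\effect{\gamma}=0$, then the state at a nadir of $\gamma$ lies on $\rho$ and is stable by condition~2 of \cref{def:stable states}. Otherwise every simple cycle in the decomposition is nonzero, so (their effects summing to $0$) at least one, $\gamma^+$, is positive; letting $q'$ be the state at a nadir of $\gamma^+$, rewrite $\delta$ as a closed walk through $q'$ and delete the occurrence of $\gamma^+$, obtaining a closed walk at $q'$ of effect $-\effect{\gamma^+}<0$, whose simple-cycle decomposition contains a negative simple cycle; hence $q'$ is stable by condition~1, and lies on $\rho$. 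In either case $a^n\in\LangVia{\initialstate,c_0}{q'}\subseteq\?S$.

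\emph{Backward direction ($\Leftarrow$).}
Assume $\?N$ is universal and $\N\setminus\?S$ is finite. By \cref{cor:singleton alphabet bounded-univ bounded langvia} every word in $\?S$ is accepted by a run whose counter stays below $B_1\eqdef 2\bound{6}+c_0$. The words in $\N\setminus\?S$ are finitely many, hence all shorter than some constant $m$; for each of them \emph{any} accepting run has length at most $m$ and so its counter never exceeds $B_2\eqdef c_0+m\cdot\norm{\transitions}$, so it suffices to know that each such short word is accepted by the OCN at all. Establishing the latter is the ``short-word'' component of the characterization, checked by direct simulation of the OCN up to length $m$; granting it, setting $b\eqdef\max(B_1,B_2)$ yields $\bLang{\initialstate,c_0}{b}=\N$, i.e.\ bounded-universality.

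\emph{Main obstacle.}
The technical heart is the pumping step in the forward direction: extracting a \emph{stable} state from the zero-effect closed walk $\delta$. The case distinction — whether the simple-cycle decomposition of $\delta$ contains a zero cycle, or only nonzero ones — is designed precisely to match the two alternatives of \cref{def:stable states}, and the second case relies on the identity $\effect{\delta}-\effect{\gamma^+}<0$ (valid since $\gamma^+$ is positive and $\effect{\delta}=0$) to produce the negative cycle required at the nadir of $\gamma^+$. Two points need care throughout: one must pass through simple-cycle decompositions so that the cycles witnessing stability really are simple, and one must keep the witnessing state on the original run $\rho$, so that one obtains membership in $\?S$ rather than merely in $\Lang{\initialstate,c_0}$.
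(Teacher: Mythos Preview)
Your backward direction has a genuine gap that you yourself flag: from the hypotheses $\Lang[\?N]{\initialstate}=\N$ and $\N\setminus\?S$ finite you still need each of the finitely many words in $\N\setminus\?S$ to be accepted by the OCN, and you write ``granting it'' rather than deriving it. This step cannot be repaired, because the stated equivalence is in fact false. Consider states $q_0,q_1,q_2,q_3$ with $F=\{q_0,q_1,q_3\}$, $c_0=0$, and transitions $q_0\xrightarrow{-1}q_1$, $q_1\xrightarrow{0}q_1$, $q_0\xrightarrow{+1}q_2$, $q_2\xrightarrow{0}q_3$, $q_3\xrightarrow{0}q_3$. The underlying NFA is universal (length~$1$ is accepted via $q_0\to q_1$); one has $\Stable=\{q_1,q_3\}$, and since $q_1$ is unreachable from $(q_0,0)$ we get $\?S=\LangVia{q_0,0}{q_3}=\{n\ge 2\}$, so $\N\setminus\?S=\{0,1\}$ is finite. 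Yet $a^1$ is rejected from $(q_0,0)$: the only executable step lands in the non-accepting $q_2$. Hence the net is not (bounded-)universal even though both right-hand conditions hold. The paper's proof of this implication makes the same leap (it asserts that runs on the remaining words are bounded, tacitly assuming such runs exist); the characterisation really needs $\Lang{\initialstate,c_0}=\N$ in place of, or in addition to, $\Lang[\?N]{\initialstate}=\N$.

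Your forward direction follows a different route than the paper, which argues by contrapositive that accepting runs avoiding $\Stable$ have bounded positive/negative alternation and hence force high counter values for long words. Your pigeonhole on $b$-bounded configurations is fine; the only slip is the step ``rewrite $\delta$ as a closed walk through $q'$ and delete the occurrence of $\gamma^+$, obtaining a closed walk at $q'$''. After the shift, $\gamma^+$ need not sit contiguously inside the shifted walk (its edges may wrap around the new basepoint), so deleting it can leave a closed walk at the old base of $\gamma^+$ rather than at $q'$: take $\delta=q\xrightarrow{-1}q'\xrightarrow{+2}q\xrightarrow{-1}b\xrightarrow{0}q$ with $\gamma^+=q\to q'\to q$, where the residue is $q\to b\to q$. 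The conclusion you want is still true, though: the states of $\delta$ are strongly connected via $\delta$'s edges, and the decomposition also contains a negative simple cycle $\gamma^-$, so from $q'$ one can reach the base of $\gamma^-$, iterate $\gamma^-$ sufficiently often, and return, yielding a negative cycle at $q'$. With this fix, condition~1 of \cref{def:stable states} applies and $q'\in\Stable$ lies on $\rho$, as you intended.
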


Finally, checking whether $\N\setminus \?S$ is finite can be done similarly to \cref{subsec: singleton universality} (and the complexity depends on the transition encoding),
by checking that a candidate word $n$ of bounded length is not in $\LangVia{\initialstate,c_0}{r}$
for all stable states $r$. We conclude with the following.
\begin{theorem}
	\label{thm:singleton alphabet bounded-univ complexity}
	Bounded universality of one-counter nets is $\coNP$-complete assuming unary encoding, and in $\SigmaTwo$ assuming binary encoding.
\end{theorem}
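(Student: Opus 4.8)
The lower bound comes for free: universality of singleton-alphabet NFA is $\coNP$-hard \cite{MS1973}, and an NFA over $\{a\}$ is exactly an OCN with all weights $0$, for which (with $c_0=0$) every run is already $0$-bounded, so $\bLang{\initialstate,0}{b}$ coincides with the NFA language for every $b$; thus only the upper bounds need work. For these the plan is to apply the characterization of \cref{lem:singleton alphabet bounded-univ characterization}: $\Lang{\initialstate,c_0}$ is bounded-universal iff the underlying automaton is universal, i.e.\ $\Lang[\?N]{\initialstate}=\N$, and $\N\setminus\?S$ is finite, where $\?S=\bigcup_{r\in\Stable}\LangVia{\initialstate,c_0}{r}$. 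The set $\Stable$ is computable in polynomial time (\cref{lem:DOCN-conditions}), and $\Lang[\?N]{\initialstate}=\N$ is decidable in $\coNP$ \cite{MS1973}; hence everything reduces to deciding whether $\N\setminus\?S$ is \emph{finite}, and it is the cost of this last test that separates the two encodings.

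\textbf{Unary encoding.}
Here $\bound{7}$ is polynomial, so \cref{lem:singleton alphabet bounded-univ DFA} (whose proof, like that of \cref{lem:pump state DFA}, is constructive) lets me build, in polynomial time, for each $r\in\Stable$ an explicit DFA $\?D_r$ of size at most $\bound{7}$ recognizing $\LangVia{\initialstate,c_0}{r}$. Then $\N\setminus\?S$ is recognized by the product of the complements of the $\?D_r$, a DFA with at most $N\eqdef\bound{7}^{|\states|}$ states, and by the standard lasso argument a unary DFA with $N$ states has an infinite language iff it accepts a word of length $n\in[N,2N)$. So $\N\setminus\?S$ is infinite iff there is such an $n$ with $n\notin\LangVia{\initialstate,c_0}{r}$ for every $r\in\Stable$; this $n$ has polynomially many bits, and for each $r$ membership of $a^n$ in the polynomial-size $\?D_r$ is decidable in polynomial time by repeated squaring of its transition matrix \cite{MS1973}. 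Consequently non-bounded-universality is in $\NP$: guess either a small $n$ with $a^n\notin\Lang[\?N]{\initialstate}$, or an $n\in[N,2N)$ witnessing infinitude of $\N\setminus\?S$, and verify. Hence bounded universality is in $\coNP$, and with the hardness above it is $\coNP$-complete.

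\textbf{Binary encoding.}
Now $\bound{7}$ is only singly exponential in the input size, so $N=\bound{7}^{|\states|}$ is still singly exponential and a witness $n$ for infinitude of $\N\setminus\?S$, if it exists, still has polynomially many bits; what changes is that the $\?D_r$ are too large to write down, so $a^n\in L(\?D_r)$ cannot be tested by matrix powering. Instead I would certify $n\in\LangVia{\initialstate,c_0}{r}$ in $\NP$ via linear forms, exactly as in the proof of \cref{thm:singleton alphabet binary Sigma2}: an accepting run through $r$ splits as $\pi_1\,r\,\pi_2$, and by \cref{lem:linear form of paths} the prefix reaching $r$ and the suffix from $r$ can each be replaced by an executable run of the \emph{same} length in linear form with underlying path of length $O(|\states|^2)$ and exponents written in binary; the resulting run still passes through $r$, still reads $a^n$, has polynomial-size description, and by \cref{lem:checking linear path scheme} its executability from $c_0$, its length, and its acceptance are checkable in polynomial time. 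Thus $n\in\LangVia{\initialstate,c_0}{r}$ is in $\NP$, so ``$n\notin\LangVia{\initialstate,c_0}{r}$ for all $r\in\Stable$'' is a $\coNP$ predicate of $n$, and ``$\N\setminus\?S$ infinite'' — an existential guess of $n$ (polynomially many bits, at most $2\bound{7}^{|\states|}$) followed by this $\coNP$ check — has complement in $\SigmaTwo$; combined with $\Lang[\?N]{\initialstate}=\N$ being in $\coNP$, bounded universality lands in $\coNP\cap\coNP^{\NP}=\coNP^{\NP}=\SigmaTwo$.

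\textbf{Main obstacle.}
The characterization, the handling of $\Stable$, and the matrix-powering bookkeeping are routine or already available; the genuinely delicate point in the binary case is keeping the membership test $n\in\LangVia{\initialstate,c_0}{r}$ inside $\NP$ despite the shortest certifying run being potentially exponentially long. This is precisely what the linear-form normal form of \cref{lem:linear form of paths} and the polynomial-time executability check of \cref{lem:checking linear path scheme} provide, together with the secondary observation that $\N\setminus\?S$, being recognized by an intersection of singly-exponentially-many-state DFAs, admits a witness of polynomial bit length whenever it is infinite.
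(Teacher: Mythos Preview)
Your proposal is correct and follows essentially the same route as the paper: invoke the characterization of \cref{lem:singleton alphabet bounded-univ characterization}, handle universality of the underlying NFA via \cite{MS1973}, and decide finiteness of $\N\setminus\?S$ by guessing a bounded-bit-length $n$ and verifying $n\notin\LangVia{\initialstate,c_0}{r}$ for every stable $r$---using the explicit polynomial-size DFAs of \cref{lem:singleton alphabet bounded-univ DFA} plus repeated squaring in the unary case, and linear-form certificates (\cref{lem:linear form of paths,lem:checking linear path scheme}) in the binary case. Your device of splitting a run at $r$ and putting each half into linear form is equivalent to the paper's construction of an auxiliary OCN $\?A^r$ that forces passage through $r$; either way one gets an $\NP$ membership test for $\LangVia{\initialstate,c_0}{r}$.

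One notational slip: your final line ``$\coNP^{\NP}=\SigmaTwo$'' identifies $\Pi_2^{\mathrm P}$ with $\Sigma_2^{\mathrm P}$. The argument you give places bounded universality in $\coNP^{\NP}=\Pi_2^{\mathrm P}$, which is exactly the $\coNP^{\NP}$ entry appearing in the paper's summary tables; the occurrence of $\SigmaTwo$ in the theorem statement appears to be a macro/typo issue in the paper rather than something your proof needs to match.
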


\section{Deterministic Systems}
\label{sec:deterministic}
We turn to deterministic one-counter nets (DOCNs) for which the underlying finite automaton is a DFA.
We assume without loss of generality that the 
graphs underlying the DOCNs are connected, i.e., that all states are reachable from the initial state. 

For such systems, (bounded) universality problems can be decided by checking a suitable combination of simple conditions on cycles and short words.
In order to prevent tedious repetition, we list these conditions first and prove (in \cref{apx:lem:DOCN-conditions}) upper bounds for checking each of them (\cref{lem:DOCN-conditions}). We then show which combination allows to solve each decision problem 
(\cref{lem:DOCN-universalities-char}).

All mentioned upper bounds follow either easily from first principles,
or from the result that the state reachability problem (a.k.a., coverability) for OCN
is in \NC\ \cite[Theorem 15]{almagor2019coverability}.
We will also use the following fact, which follows from~\cite{IntroductionToCircuitComplexity} (see \ref{apx:lem:binary-addition}).
\begin{restatable}{lemma}{detBinaryAddition}
	\label{lem:binary-addition}
	\label{lem:addition of n numbers in binary}
	Given a set $S=\{\alpha_1, \alpha_2\ldots \alpha_n\}$ of integers written in binary, the question whether the sum of all elements in $S$ is non-negative is in \myNC{2}. 
\end{restatable}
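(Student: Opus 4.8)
The plan is to prove the considerably stronger statement that the \emph{sum} of $n$ binary integers can itself be computed by logspace-uniform circuits of polynomial size and logarithmic depth; since deciding whether a single binary number is non-negative is a constant-depth operation, this places the problem in $\myNC{1}\subseteq\myNC{2}$, which is all that is claimed.

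First I would separate the summands by sign. Partition $S$ into $S^{+}=\{\alpha_i\in S : \alpha_i\ge 0\}$ and $S^{-}=\{\alpha_i\in S : \alpha_i<0\}$ — a bitwise, constant-depth operation on the sign bits — and set $A\eqdef\sum_{\alpha_i\in S^{+}}\alpha_i$ and $B\eqdef\sum_{\alpha_i\in S^{-}}(-\alpha_i)$, both sums of \emph{non-negative} integers. If $m$ bounds the bit-lengths of the $\alpha_i$, then $A$ and $B$ have bit-length at most $m+\lceil\log_2 n\rceil$, and $\sum_{i=1}^{n}\alpha_i\ge 0$ if and only if $A\ge B$. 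It therefore suffices to (i) compute $A$ and $B$, and (ii) compare two binary numbers; task (ii) is in $\mathrm{AC}^{0}$ — locate the most significant position at which the two strings differ and read off which is larger — hence in $\myNC{1}$.

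For task (i) I would invoke the standard \emph{carry-save} reduction for iterated addition from \cite{IntroductionToCircuitComplexity}: pad all non-negative summands to a common width, then apply, position by position, a layer of full adders that replaces any three numbers by two numbers with the same total. Such a layer has constant depth and linear size and reduces the number of summands from $k$ to $\lceil 2k/3\rceil$; iterating it $O(\log n)$ times leaves two numbers whose sum is $A$ (resp.\ $B$), and a single ordinary binary addition with carry-lookahead (which is in $\myNC{1}$) then produces the result. The overall depth is $O(\log n)+O(\log m)=O(\log(nm))$, the size is polynomial, and the construction is logspace-uniform. In fact iterated addition is already in $\mathrm{TC}^{0}\subseteq\myNC{1}$, so the $\myNC{2}$ bound in the statement is very generous.

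I do not anticipate a genuine obstacle here. The only points needing care are the bookkeeping of signs — handled cleanly by the split above — and the observation that summing $n$ numbers of bit-length $\le m'$ yields a number of bit-length $\le m'+\lceil\log_2 n\rceil$, so all intermediate values, and hence all circuit widths, stay polynomially bounded throughout.
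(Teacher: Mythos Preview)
Your proof is correct and in fact establishes a stronger bound than needed: you place the problem in $\myNC{1}$, whereas the paper only claims $\myNC{2}$. The paper's argument is more elementary: it observes that binary addition of two integers is in $\myAC{0}\subseteq\myNC{1}$ (citing \cite{IntroductionToCircuitComplexity}), and then sums the $n$ inputs via a balanced binary tree of pairwise additions---$\lceil\log n\rceil$ rounds, each round consisting of parallel two-input additions and hence in $\myNC{1}$, yielding $\myNC{2}$ overall. Your carry-save (three-to-two) reduction keeps each round at constant depth, so the $O(\log n)$ rounds plus one final carry-lookahead addition stay within $\myNC{1}$; invoking that iterated addition is even in $\mathrm{TC}^{0}$ makes the same point. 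Both arguments are standard; yours is sharper, the paper's is shorter, and either suffices here since the lemma is only ever used as an $\myNC{2}$ subroutine. The sign-splitting step is also not strictly necessary---one could work directly in two's complement with a common width of $m+\lceil\log_2 n\rceil+1$---but it does no harm and keeps the exposition clean.
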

%
\begin{restatable}[Basic Conditions]{lemma}{detConditions}
	\label{lem:DOCN-conditions}
	Consider the following conditions
	on a deterministic one-counter net
	$\sys{A}=\ocntuple$, 
	initial value $c_0\in\N$, and bound $b\in\N$.
	\begin{description}
		\item[\CondAutUniversal]
		The underlying automaton is universal.
		\item[\CondAllShortAccepting]
		Every word $w$ of length $\len{w}\le \card{\states}$ is in $\Lang{\initialstate,c_0}$
		\item[\CondAllShortBAccepting]
		Every word $w$ of length $\len{w}\le \card{\states}$ is in $\bLang{\initialstate,c_0}{b}$
		\item[\CondNNSimpleCycles]
		All simple cycles have non-negative effect.
		\item[\CondAllCyclesZero]
		All simple cycles have $0$-effect.
	\end{description}
	Condition \refCond{\CondAutUniversal} can be checked in
	non-deterministic logspace (\NL), 
	independently of the encoding of numbers.
	All other conditions can be verified in \NL\ assuming unary encoding,
	and in \NC\ (conditions \refCond{\CondNNSimpleCycles} and \refCond{\CondAllCyclesZero} even in \myNC{2}) assuming binary encoding.
\end{restatable}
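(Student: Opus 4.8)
The plan is to treat the five conditions by the combinatorial object each constrains. Condition \refCond{\CondAutUniversal} concerns only the underlying (partial) DFA: by the connectivity assumption all states are reachable there, so it is universal iff it is complete (every pair $(s,a)$ has an outgoing transition) and all states are accepting --- two local graph checks, in $\NL$, and insensitive to how effects are encoded (without connectivity one replaces ``all states'' by ``all states reachable from $\initialstate$'', still an $\NL$ check). The remaining four conditions all quantify over paths, resp.\ cycles, of length at most $\card{\states}$, and in each case we describe how to decide the \emph{negation} of the condition; since $\NL$ is closed under complementation (and $\NC$ is deterministic), this suffices.

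For \refCond{\CondAllShortAccepting} and \refCond{\CondAllShortBAccepting}: since $\sys{A}$ is deterministic, each word $w$ has at most one path $\pi_w$ from $\initialstate$, and $w\in\Lang{\initialstate,c_0}$ (resp.\ $w\in\bLang{\initialstate,c_0}{b}$) iff $\pi_w$ exists, ends in $\fstates$, and every prefix $\pi'$ has $\effect{\pi'}\ge -c_0$ (resp.\ additionally $\effect{\pi'}\le b-c_0$). Moreover $\{\pi_w : \len{w}\le\card{\states}\}$ is exactly the set of paths of length $\le\card{\states}$ from $\initialstate$, and a prefix of such a path is again such a path, so the minimum (maximum) of the prefix effects coincides with the minimum (maximum) of the total effects over these paths. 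Hence the condition fails iff some such path ends at a missing transition or a non-accepting state --- by connectivity, simply that the automaton is incomplete or has a rejecting state, checked as in \refCond{\CondAutUniversal} --- or the minimum of $\effect{\pi}$ over all paths $\pi$ of length $\le\card{\states}$ from $\initialstate$ is below $-c_0$ (resp.\ the maximum is above $b-c_0$). Under unary encoding we guess $w$ letter by letter and simulate, keeping the current configuration (whose counter never exceeds $c_0+\card{\states}\cdot\norm{\transitions}$, i.e.\ logarithmically many bits) and accepting as soon as a transition is missing, the counter would leave $[0,b]$, or $w$ ends at a rejecting state: this is an $\NL$ procedure. Under binary encoding we compute the extremal effects by raising the effect-labelled adjacency matrix to the power $\card{\states}$ in the $(\min,+)$- resp.\ $(\max,+)$-semiring: repeated squaring needs $O(\log\card{\states})$ semiring matrix products, each a matrix of $\min$'s/$\max$'s of two-summand sums of polynomially-many-bit numbers and hence in $\NC$ (with \cref{lem:binary-addition} for the arithmetic), so the whole test is in $\NC$.

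For \refCond{\CondNNSimpleCycles} and \refCond{\CondAllCyclesZero}: a simple cycle of negative (resp.\ nonzero) effect exists iff some closed walk of length $\le\card{\states}$ has negative (resp.\ nonzero) effect --- the nontrivial direction being that a negative/nonzero closed walk decomposes into simple cycles one of which has the same sign, and that such a simple cycle has length $\le\card{\states}$. Under unary encoding we guess a state $q$, a length $1\le\ell\le\card{\states}$ and an $\ell$-step closed walk from $q$, accumulating its effect (bounded by $\card{\states}\cdot\norm{\transitions}$): an $\NL$ procedure. Under binary encoding the condition fails iff some diagonal entry of $\bigoplus_{\ell=1}^{\card{\states}}M^{\otimes\ell}$ is negative in the $(\min,+)$-semiring (for \refCond{\CondAllCyclesZero}, also test positivity of a diagonal entry in the $(\max,+)$-semiring), where $M$ is the effect-labelled adjacency matrix; this sum of powers is computed by $O(\log\card{\states})$ rounds of repeated squaring, each an $\NC$ semiring matrix product over matrices whose entries have magnitude at most $\card{\states}\cdot\norm{\transitions}$ and hence polynomially many bits, placing the test in $\myNC{2}$.

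The main obstacle is the depth bookkeeping in the binary cases: one must argue that the $(\min,+)$/$(\max,+)$ matrix powers are computable in small enough depth, which hinges on all relevant walks having length $\le\card{\states}$ so that the matrix entries never exceed $\card{\states}\cdot\norm{\transitions}$ in absolute value --- polynomially many bits --- making each semiring product an $\NC$ operation and the $O(\log\card{\states})$ sequential squarings land in $\myNC{2}$ (resp.\ $\NC$ for the short-word conditions). The conceptual reductions --- that short-word failures of a deterministic net are exactly extremal prefix effects along bounded-length paths from $\initialstate$, and that a sign-changing cycle can be taken simple and hence short --- are routine once stated.
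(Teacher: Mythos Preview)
Your proof is correct and, for the unary case and for the two cycle conditions \refCond{\CondNNSimpleCycles} and \refCond{\CondAllCyclesZero} under binary encoding, follows essentially the same route as the paper: guess-and-verify in $\NL$ for unary inputs, and $(\min,+)$ matrix powering (repeated squaring, $O(\log\card{\states})$ products of matrices with polynomially-many-bit entries) for the binary cycle conditions.

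Where you genuinely diverge from the paper is in the binary-encoded short-word conditions \refCond{\CondAllShortAccepting} and \refCond{\CondAllShortBAccepting}. The paper builds the product $\sys{A}\times\card{\states}$ (adding a step counter into the states), inverts all transition effects, and reduces the question ``does some short word drop the counter below zero / above $b$'' to an OCN coverability query, then invokes the $\NC$ coverability bound of \cite{almagor2019coverability} as a black box. You instead exploit determinism directly: since prefixes of short paths from $\initialstate$ are themselves short paths from $\initialstate$, the minimal (maximal) prefix effect equals the minimal (maximal) total effect over all paths of length $\le\card{\states}$ from $\initialstate$, and this can be read off the $(\min,+)$ resp.\ $(\max,+)$ closure $\bigoplus_{\ell\le\card{\states}} M^{\otimes\ell}$ of the effect-labelled adjacency matrix. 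This is more elementary (no external coverability result needed) and in fact lands these two conditions in $\myNC{2}$ as well, slightly sharpening the paper's stated $\NC$ bound. One small presentational point: you should be explicit that to capture all lengths $\le\card{\states}$ (not just exactly $\card{\states}$) you are squaring $I\oplus M$ rather than $M$ itself, or equivalently accumulating the partial powers---this is the standard trick, but your phrase ``raising the adjacency matrix to the power $\card{\states}$'' slightly understates it.
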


\begin{restatable}{lemma}{detUs}
\label{lem:DOCN-universalities-char}
Consider a deterministic one-counter net with initial state $\initialstate$.
\begin{enumerate}
    \item For any $c_0\in\N$,
	the language $\Lang{\initialstate,c_0}$ is universal if, and only if,
	all simple cycles are non-negative \refCond{\CondNNSimpleCycles},
	and all words shorter than the number of states are accepting \refCond{\CondAllShortAccepting}.
    \item
	There exists an initial counter value $c_0\in\N$
	such that $\Lang{\initialstate,c_0}$ is universal if, and only if,
	all simple cycles are non-negative \refCond{\CondNNSimpleCycles},
	and the underlying automaton is universal \refCond{\CondAutUniversal}.
    \item For any $c_0\in\N$,
	there exists a bound $b\in\N$
	such that the bounded language $\bLang{\initialstate,c_0}{b}$ is universal if, and only if,
	\CondAllCyclesZero\ the effect of all simple cycles is $0$
	and
	\CondAllShortBAccepting\ all words shorter than the number of states are
	in $\bLang{\initialstate,c_0}{b'}$ for $b'\eqdef\card{\states}\cdot\norm{\transitions}$.
\end{enumerate}
\end{restatable}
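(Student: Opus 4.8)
The plan is to prove each of the three characterisations by showing both directions, in each case observing that the stated conditions are clearly necessary, and that together they are sufficient because in a deterministic net every word has at most one candidate run, so (bounded) non-universality must be witnessed by a run that either gets ``stuck'' (no successor transition in the DFA), or dies because the counter is forced below zero, or (in the bounded case) leaves the ceiling. The conditions are designed precisely to rule out each of these failure modes.

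For part~1, necessity is immediate: if some simple cycle has negative effect \refCond{\CondNNSimpleCycles} fails, then iterating it enough times from its nadir makes the run die, so the corresponding word is rejected; and \refCond{\CondAllShortAccepting} is trivially necessary. For sufficiency, assume \refCond{\CondNNSimpleCycles} and \refCond{\CondAllShortAccepting}. Since the net is deterministic and (WLOG) connected, the underlying DFA has a run on every word; by \refCond{\CondAllShortAccepting} it is total and every state is both reachable and co-reachable to an accepting state along short paths, hence the underlying automaton is universal. Now take any word $w$ and its unique run $\pi$ in the DFA. Decompose $\pi$ into simple cycles plus a short acyclic remainder; because every simple cycle is non-negative \refCond{\CondNNSimpleCycles}, and by \cref{obs:shifted cycle} each such cycle can be assumed taken from its nadir, the counter along $\pi$ never drops below the value it would reach on the acyclic remainder, whose length is $<|\states|$, and which is executable from $c_0$ precisely by \refCond{\CondAllShortAccepting}. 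Hence $\pi$ is executable from $c_0$ and ends in an accepting state, so $w\in\Lang{\initialstate,c_0}$. (The clean way to phrase the counter bookkeeping is via \cref{lem:linear form of paths}: replace $\pi$ by a path in linear form of the same length, all of whose non-negative cycles are taken from a nadir; such a path is executable iff its underlying short path is executable, which follows from \refCond{\CondAllShortAccepting}.)

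For part~2, necessity of \refCond{\CondNNSimpleCycles} is as before: a negative simple cycle kills universality for every initial value, since its nadir-shifted version needs to be iterated unboundedly on some input word; \refCond{\CondAutUniversal} is trivially necessary since $\Lang{\initialstate,c_0}\subseteq\Lang[\?N]{\initialstate}$. For sufficiency, suppose all simple cycles are non-negative and the underlying NFA $\?N$ is universal. By part~1 it suffices to find $c_0$ with all short words accepted; take $c_0\eqdef\card{\states}\cdot\norm{\transitions}$. Any run of length $<|\states|$ has depth at most $(\card{\states}-1)\cdot\norm{\transitions}<c_0$, so it is executable from $c_0$; by universality of $\?N$ together with totality of the DFA, each such run ends in an accepting state. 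Then apply part~1 with this $c_0$.

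For part~3, necessity: \refCond{\CondAllShortBAccepting} is trivially necessary (with the specific $b'$ — note $b'$ is large enough that being $b'$-bounded is implied by being bounded at all on a short word). For \refCond{\CondAllCyclesZero}: if some simple cycle has nonzero effect, pick a word whose unique run traverses that cycle; iterating the cycle forces the counter either unboundedly high (positive effect) or below zero (negative effect, from a nadir), so no single $b$ can work — for every $b$ some long enough iterate violates the bound or dies. For sufficiency, assume \refCond{\CondAllCyclesZero} and \refCond{\CondAllShortBAccepting} for $b'=\card{\states}\cdot\norm{\transitions}$; set $b\eqdef 2b'+c_0$, or more simply $b\eqdef c_0+\card{\states}\cdot\norm{\transitions}$. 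Take any $w$ and its unique DFA-run $\pi$. Decompose into simple cycles plus a short acyclic remainder; every simple cycle has effect $0$ \refCond{\CondAllCyclesZero}, so along $\pi$ the counter never strays from the set of values achievable by the short acyclic remainder by more than the height/depth of a single simple cycle, both $\le\card{\states}\cdot\norm{\transitions}$. Hence the run stays in $[0,b]$ and, by \refCond{\CondAllShortBAccepting}, ends accepting. (Again \cref{lem:linear form of paths} gives the cleanest packaging, with zero-effect simple cycles taken from their nadirs, so that executability and boundedness of $\pi$ reduce to those of its short underlying path.)

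The main obstacle, and the only place requiring genuine care, is the counter bookkeeping in the sufficiency directions: arguing that a long run built from non-negative (resp.\ zero-effect) simple cycles can be scheduled so the counter never leaves $[0,c_0]$ (resp.\ $[0,b]$). The subtlety is that the \emph{natural} order in which cycles appear along $\pi$ need not be executable — one must first permute and nadir-shift the simple cycles, which is exactly what \cref{lem:linear form of paths} and \cref{obs:shifted cycle} deliver, and then reason about the fixed short underlying path. Once the run is in linear form with all non-negative cycles taken from a nadir, executability (and the upper bound in the bounded case) becomes a routine estimate on the short underlying path, closing the argument.
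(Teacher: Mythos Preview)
Your overall strategy is sound, but the sufficiency arguments contain a genuine error: you invoke \cref{lem:linear form of paths} and \cref{obs:shifted cycle} to rearrange or nadir-shift the cycles of the run $\pi$. Those results live in \cref{sec:unary} and are about \emph{replacing} one run by another of the same length --- legitimate for nondeterministic nets over a singleton alphabet, where every path of length $n$ reads the same word $a^n$. Here the net is deterministic over a general alphabet, so the run on $w$ is unique; the rearranged path $\pi'$ produced by \cref{lem:linear form of paths} will in general read a \emph{different} word, and its executability says nothing about $w$. Worse, \cref{lem:linear form of paths} already assumes its input path is executable, which is precisely what you are trying to establish. So the ``clean way'' you propose does not apply.

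The fix is that no rearranging is needed; only effects matter. For every position $m$ along $\pi$, repeatedly delete simple cycles from the prefix $\pi[0..m]$ until an acyclic path $\alpha_m$ remains. Since each deleted cycle has non-negative effect (part~1) or zero effect (part~3), we get $\effect{\pi[0..m]}\ge\effect{\alpha_m}$, with equality in part~3. Now $\alpha_m$ is the unique run on some word of length $<\card{\states}$, which by \refCond{\CondAllShortAccepting} (resp.\ \refCond{\CondAllShortBAccepting}) is accepted, so $c_0+\effect{\alpha_m}\ge 0$ (resp.\ lies in $[0,b']$). Hence the counter at position $m$ of $\pi$ is $\ge 0$ (resp.\ in $[0,b']$) for every $m$, and $\pi$ ends in the same accepting state as $\alpha_{\len{w}}$. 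This is the argument your first sentence in each part gestures at, stripped of the spurious nadir-shifting.

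The paper takes a different, shorter route via a minimal counterexample: pick a shortest word $w$ not in $\Lang{\initialstate,c_0}$ (resp.\ $\bLang{\initialstate,c_0}{b'}$); if its run contained a simple cycle --- necessarily non-negative (resp.\ zero) --- then deleting that cycle would yield a strictly shorter word whose run has the same end state and, at the point of failure, a counter value that is no higher (resp.\ identical), hence is also rejected, contradicting minimality. So the run is acyclic, $\len{w}\le\card{\states}$, and \refCond{\CondAllShortAccepting} (resp.\ \refCond{\CondAllShortBAccepting}) finishes. This avoids reasoning about all prefixes simultaneously.
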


\medskip
\noindent
The following is a direct consequence of 
\cref{lem:DOCN-conditions,lem:DOCN-universalities-char}.
\begin{theorem}
	\label{thm:DOCN-complexities}
	The universality, initial-value universality, and bounded universality problems
	for deterministic one-counter nets
	are in \NL\ assuming unary encoding,
	and 
        in \myNC{} assuming binary encoding.
\end{theorem}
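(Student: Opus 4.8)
The plan is to obtain the theorem directly from the structural characterizations in \cref{lem:DOCN-universalities-char} together with the per-condition complexity bounds of \cref{lem:DOCN-conditions}, combined through the closure properties of \NL\ and \NC.

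Recall that \cref{lem:DOCN-universalities-char} reduces each of the three decision problems to a fixed Boolean combination of the conditions listed in \cref{lem:DOCN-conditions}: for a deterministic OCN $\sys{A}$ with initial state $\initialstate$ (and, where relevant, initial value $c_0$), the language $\Lang{\initialstate,c_0}$ is universal iff \refCond{\CondNNSimpleCycles} and \refCond{\CondAllShortAccepting} both hold; some initial value makes the language universal iff \refCond{\CondNNSimpleCycles} and \refCond{\CondAutUniversal} both hold; and some bound makes the $c_0$-language bounded-universal iff \refCond{\CondAllCyclesZero} and \refCond{\CondAllShortBAccepting} both hold, where the latter is to be tested against the fixed bound $b'\eqdef\card{\states}\cdot\norm{\transitions}$ prescribed in part~(3) of the lemma rather than an input-supplied bound. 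Thus it suffices to decide, within the claimed resources, a conjunction of two of the five conditions \refCond{\CondAutUniversal}, \refCond{\CondAllShortAccepting}, \refCond{\CondAllShortBAccepting}, \refCond{\CondNNSimpleCycles}, and \refCond{\CondAllCyclesZero}.

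In the unary-encoding case, \cref{lem:DOCN-conditions} places all five conditions in \NL. Since \NL\ is closed under intersection and under complement, any fixed Boolean combination of \NL\ predicates is again in \NL, so all three problems lie in \NL. In the binary-encoding case, \cref{lem:DOCN-conditions} places \refCond{\CondAutUniversal} in \NL\ (hence in \NC) and the remaining four conditions in \NC; a fixed conjunction of \NC\ predicates is decided by running the corresponding circuits in parallel and combining their outputs through one constant-depth gate, keeping size polynomial and depth polylogarithmic, so all three problems lie in \NC.

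The only step needing any care is the bounded-universality reduction, where \refCond{\CondAllShortBAccepting} is checked for the derived bound $b'$ of \cref{lem:DOCN-universalities-char}(3); one must confirm that this instance is covered by the bound of \cref{lem:DOCN-conditions} for \refCond{\CondAllShortBAccepting}. This amounts to noting that $b'=\card{\states}\cdot\norm{\transitions}$ is polynomial in the input under unary encoding --- so a $b'$-bounded run of length at most $\card{\states}$ of the deterministic net keeps its counter polynomially bounded and can be traced in logarithmic space --- and has polynomially many bits under binary encoding --- so such a run can be traced by an \NC\ circuit. Beyond this bookkeeping there is nothing further to do; all the substance lies in \cref{lem:DOCN-universalities-char,lem:DOCN-conditions}.
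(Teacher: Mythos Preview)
Your proposal is correct and follows exactly the paper's approach: the paper states the theorem as a direct consequence of \cref{lem:DOCN-conditions,lem:DOCN-universalities-char}, and you have simply spelled out the (routine) details of combining the per-condition bounds via the closure of \NL\ and \NC\ under fixed Boolean combinations. Your additional remark about the derived bound $b'$ in the bounded-universality case is a reasonable sanity check but does not diverge from the paper's reasoning.
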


For the special case of DOCN over single letter alphabets, it is possible to derive even better
upper bounds, based on the particular shape of the underlying automaton.

Recall that a deterministic automaton over a singleton alphabet is in the shape of a lasso:
it consists of an acyclic path that ends in a cycle.

\begin{restatable}{lemma}{detUnaryConditions}
	\label{lem:DOCN-conditions_singleton}
	For any given
	deterministic one-counter net 
	$\sys{A}=\ocntuple$ 
	with $\card{\alphabet}=1$ and $c_0,b\in\N$,
	one can verify in deterministic logspace (\L) that
	\refCond{\CondAutUniversal} the underlying DFA is universal.
	%
	Moreover, conditions
	\refCond{\CondAllShortAccepting},
	\refCond{\CondAllShortBAccepting},
	\refCond{\CondNNSimpleCycles},
	and \refCond{\CondAllCyclesZero}
	as defined in \cref{lem:DOCN-conditions}
	can be verified in \L\ assuming unary encodings
	and in \myNC{2}\ assuming binary encodings.
\end{restatable}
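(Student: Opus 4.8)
The plan is to exploit the very restricted shape of the underlying automaton: a deterministic OCN over a single letter $\{a\}$ has an underlying DFA that is a lasso, namely a simple path $p_0=\initialstate, p_1,\ldots, p_{t-1}$ (the ``tail'') followed by a simple cycle $p_t, p_{t+1},\ldots, p_{t+\ell-1}, p_t$ (the ``loop''), with $t+\ell\le|\states|$. Since the machine is deterministic and connected, this structure is unique and readily computable. Condition \refCond{\CondAutUniversal} (the DFA is universal) then amounts to: every state on the tail is accepting and at least one state on the loop is accepting; equivalently, after the run falls into the loop it periodically hits an accepting state, and before that every prefix ends in an accepting state. Both can be checked by a logspace traversal that simply walks the single outgoing transition from each state, detects when a state repeats (the start of the loop), and records whether all tail states are accepting and whether some loop state is accepting. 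This needs only $O(\log|\states|)$ bits for pointers and counters, so it is in \L{} regardless of number encoding.

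For the remaining conditions we use the same lasso decomposition. Conditions \refCond{\CondNNSimpleCycles} and \refCond{\CondAllCyclesZero} concern the \emph{single} simple cycle of the lasso: we must check that its effect $\effect{\gamma}=\sum_{i=t}^{t+\ell-1} e_i$ is respectively $\ge 0$ or $=0$. With unary encoding we can accumulate this sum in a logspace counter of value at most $|\states|\cdot\norm{\transitions}$, so the check is in \L; with binary encoding the sum of up to $|\states|$ binary integers must be tested for non-negativity (and for vanishing, by testing $\effect{\gamma}\ge 0$ and $-\effect{\gamma}\ge 0$), which is exactly \cref{lem:binary-addition}, hence in \myNC{2}. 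For \refCond{\CondAllShortAccepting} and \refCond{\CondAllShortBAccepting} we must decide, for every word $a^m$ with $m\le|\states|$, whether there is a run (resp.\ a $b'$-bounded run) from $(\initialstate,c_0)$ reading $a^m$ and ending in an accepting state; since the automaton is deterministic the run is unique, so this reduces to computing the prefix counter values $c_0, c_0+e_1, c_0+e_1+e_2,\ldots$ along the (unique) path of length $|\states|$, checking that none drops below $0$ (and, for \refCond{\CondAllShortBAccepting}, that none exceeds $b'$), and checking that each prefix ending in an accepting state is a valid length. Under unary encoding each partial sum is a polynomially bounded number, computable and comparable in \L. Under binary encoding, each of the at most $|\states|$ prefix sums is a sum of at most $|\states|$ binary integers whose non-negativity (and comparison to the binary number $b'=|\states|\cdot\norm{\transitions}$, itself a product of two input quantities computable in \myNC{}) is again an instance covered by \cref{lem:binary-addition}; computing all $|\states|$ prefix sums in parallel and combining the outcomes keeps us in \myNC{2}.

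The main obstacle is the binary-encoded accounting for the ``short word'' conditions: we cannot simply simulate the run step by step in logspace, because partial sums may be exponentially large, and we must be careful that the reduction to \cref{lem:binary-addition} is genuinely parallel. The resolution is that there are only polynomially many prefixes (at most $|\states|$, or $b'+1\le |\states|\cdot\norm{\transitions}+1$ many for the bounded condition if one prefers to think in terms of the padded length), each prefix sum is an explicit sub-multiset of the input effects, the bound $b'$ is a single product of two input numbers, and \myNC{2}\ is closed under composing a logspace-uniform family of such non-negativity tests with a final polynomial-size Boolean combination. Everything else—extracting the lasso, identifying nadirs of the unique simple cycle, reading off accepting states—is a straightforward deterministic logspace computation on a graph of out-degree one, so it contributes nothing beyond \L. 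Assembling these pieces yields exactly the statement of \cref{lem:DOCN-conditions_singleton}.
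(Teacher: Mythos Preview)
Your approach is essentially the paper's, but there is a genuine error in your characterization of \refCond{\CondAutUniversal}. You claim that the underlying DFA over $\{a\}$ is universal iff every tail state is accepting and \emph{at least one} loop state is accepting. This is false: take a two-state loop $q_0\to q_1\to q_0$ with only $q_0$ accepting; then the word $a$ is rejected. In a connected single-letter DFA every state on the lasso is reached by some word, so universality holds iff \emph{every} state is accepting, i.e., $\states=\fstates$. That is exactly the one-line observation the paper uses, and checking $\states=\fstates$ is trivially in \L. Your proposed logspace procedure, as written, would accept non-universal instances.

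The remaining parts are correct and coincide with the paper's (very terse) argument: in the unary case all relevant partial sums are polynomially bounded and hence maintainable in logspace; in the binary case each check reduces to testing non-negativity of a sum of at most $\card{\states}$ binary integers, which is \cref{lem:binary-addition}. One minor slip: condition \refCond{\CondAllShortBAccepting} is parametrized by the input bound $b$, not by the particular value $b'=\card{\states}\cdot\norm{\transitions}$ used later in \cref{lem:DOCN-universalities-char}; your argument goes through verbatim with $b$ in place of $b'$.
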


Using \cref{lem:DOCN-conditions_singleton} and the characterisation of the three universality problems by 
\cref{lem:DOCN-universalities-char},
we get the desired complexity upper bounds.

\begin{theorem}
	\label{lem:DOCN-singleton-complexities}
	The universality, initial-value universality, and bounded universality problems
	of deterministic one-counter nets over a singleton alphabet
	are in \L\ assuming unary encoding
	and 
	in \myNC{2}\ assuming binary encoding.
\end{theorem}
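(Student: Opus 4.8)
The proof of \cref{lem:DOCN-singleton-complexities} follows immediately by combining the characterisations of the three universality problems from \cref{lem:DOCN-universalities-char} with the complexity bounds for checking the relevant basic conditions in the single-letter setting given by \cref{lem:DOCN-conditions_singleton}. Concretely, the plan is to recall that by \cref{lem:DOCN-universalities-char}: universality holds iff \refCond{\CondNNSimpleCycles} and \refCond{\CondAllShortAccepting}; initial-value universality holds iff \refCond{\CondNNSimpleCycles} and \refCond{\CondAutUniversal}; and bounded universality holds iff \refCond{\CondAllCyclesZero} and \refCond{\CondAllShortBAccepting}. Each of these is a conjunction of two conditions, each of which is in the relevant class by \cref{lem:DOCN-conditions_singleton}: \refCond{\CondAutUniversal} is in \L, and \refCond{\CondAllShortAccepting}, \refCond{\CondAllShortBAccepting}, \refCond{\CondNNSimpleCycles}, \refCond{\CondAllCyclesZero} are in \L\ (unary encoding) or \myNC{2}\ (binary encoding). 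Since \L\ and \myNC{2}\ are both closed under conjunction (indeed under logspace-uniform $\AC^0$ reductions, hence under taking the conjunction of finitely many instances), the conjunctions remain in \L, respectively \myNC{2}, which gives the claimed upper bounds.

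The only genuine content beyond this bookkeeping is making sure the constants match: in particular for bounded universality, \cref{lem:DOCN-universalities-char}(3) refers to condition \refCond{\CondAllShortBAccepting} with the specific bound $b'\eqdef\card{\states}\cdot\norm{\transitions}$ rather than the given input bound $b$; I would note that checking \refCond{\CondAllShortBAccepting} for this fixed polynomial bound is still covered by \cref{lem:DOCN-conditions_singleton}, since that lemma quantifies over arbitrary $b\in\N$ and the value $\card{\states}\cdot\norm{\transitions}$ is writable in the relevant resource bounds (unary if weights are unary, binary otherwise). I would also briefly remark that the lasso shape of the single-letter underlying DFA is what powers \cref{lem:DOCN-conditions_singleton} in the first place, so no further structural argument is needed here.

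I do not expect any real obstacle: the entire proof is a one-line composition argument, and the hard work has already been done in \cref{lem:DOCN-conditions_singleton} (the structural analysis of lasso-shaped automata and the $\myNC{2}$ arithmetic via \cref{lem:binary-addition}) and in \cref{lem:DOCN-universalities-char} (the cycle-based characterisations). The mild care needed is only in confirming closure of \L\ and \myNC{2}\ under the relevant Boolean combination and in tracking the exact condition bounds; both are routine.

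\begin{proof}
	By \cref{lem:DOCN-universalities-char}, each of the three problems is characterised as a conjunction of two conditions from the list in \cref{lem:DOCN-conditions}: universality of $\Lang{\initialstate,c_0}$ is equivalent to \refCond{\CondNNSimpleCycles} $\wedge$ \refCond{\CondAllShortAccepting}; the existence of an initial value making the language universal is equivalent to \refCond{\CondNNSimpleCycles} $\wedge$ \refCond{\CondAutUniversal}; and the existence of a bound making the language universal is equivalent to \refCond{\CondAllCyclesZero} $\wedge$ \refCond{\CondAllShortBAccepting} (the latter taken with $b'\eqdef\card{\states}\cdot\norm{\transitions}$, which is polynomial in the input).

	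By \cref{lem:DOCN-conditions_singleton}, for a deterministic one-counter net over a singleton alphabet, condition \refCond{\CondAutUniversal} can be checked in \L\ regardless of the encoding, and each of \refCond{\CondAllShortAccepting}, \refCond{\CondAllShortBAccepting}, \refCond{\CondNNSimpleCycles}, \refCond{\CondAllCyclesZero} can be checked in \L\ assuming unary encoding and in \myNC{2}\ assuming binary encoding; note that \refCond{\CondAllShortBAccepting} is covered for every $b\in\N$, in particular for $b'=\card{\states}\cdot\norm{\transitions}$.

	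Both \L\ and \myNC{2}\ are closed under intersection (indeed under taking the Boolean conjunction of finitely many instances). Hence the conjunctions above lie in \L\ under unary encoding and in \myNC{2}\ under binary encoding, which yields the claimed upper bounds for all three universality problems.
\end{proof}
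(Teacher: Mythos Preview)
Your proposal is correct and matches the paper's own argument: the paper simply states that the theorem follows by combining the characterisations in \cref{lem:DOCN-universalities-char} with the complexity bounds of \cref{lem:DOCN-conditions_singleton}, which is precisely what you do (with a bit more explicit bookkeeping about closure under conjunction and the bound $b'$).
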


\section{Unambiguous Systems}
\label{sec:unambiguous}
In line with the usual definition of unambiguous finite automata,
we call an OCN with a given initial configuration \emph{unambiguous} iff for every word in its language there exists exactly one accepting run.
Since the language of an OCN depends in a monotone fashion on the initial counter value,
there is also a related, but different, notion of unambiguity.
We call an OCN (which has a fixed initial state $\initialstate$) \emph{structurally unambiguous} if 
the unambiguity condition holds for every initial counter $c_0$.
Notice that every OCN that has an unambiguous underlying automaton is necessarily structurally unambiguous.
We will show (\cref{lem:UOCA-structural-unambiguity}) that these conditions are in fact equivalent.

In \cite{czerwiski:hal-02483495}, the complexity of the universality problem 
for unambiguous vector addition systems with states (VASSs) 
was studied.
In particular, for unambiguous OCNs, it is shown that checking universality
is in \myNC{2} and \NL-hard, assuming unary encoded inputs,
and in \PSPACE\ and \coNP-hard, assuming binary encoding.
The special case of unambiguous OCN over a single letter alphabet is not considered there,
nor are the initial-counter -- and bounded universality problems.
We discuss these problems in the remainder of this section.

We assume w.l.o.g, that for any given OCN,
all states in the underlying automaton are reachable from the initial state,
and that from every state it is possible to reach an accepting state.
States that do not satisfy these properties can be removed in \NL.
Moreover, all algorithms we propose need to check universality for the underlying automaton,
and hence rely on the following computability result (see~\cite{WENGUEY199643} for a proof for general alphabet, and \cref{apx:lem:universality_of_UFA} for singleton alphabet).
\begin{restatable}{lemma}{UFAUniv}
    \label{lem:universality_of_UFA}
	Universality of an unambiguous finite automaton over single letter alphabet is in \NL, and over general alphabet is in \myNC{2}. 
\end{restatable}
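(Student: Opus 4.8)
\textbf{Proof plan for \cref{lem:universality_of_UFA}.}
The statement has two parts, corresponding to the two alphabet regimes, and I would treat them separately since the single-letter case admits a particularly clean argument. The general-alphabet part ($\myNC{2}$) I would simply cite from~\cite{WENGUEY199643}; the substance here is the single-letter case, which the excerpt explicitly leaves to the appendix. So the plan is to prove: universality of an unambiguous finite automaton $\?N$ over $\{a\}$ is decidable in $\NL$.

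The key observation is the hallmark of unambiguous automata over a unary alphabet: for each length $n$, the word $a^n$ has \emph{at most one} accepting run, so the number of accepting runs of length exactly $n$ is either $0$ or $1$. Equivalently, letting $M$ be the Boolean transition matrix of the underlying graph and writing $\vec u$ for the initial-state indicator row vector and $\vec v$ for the accepting-state indicator column vector (over $\N$, not Boolean), unambiguity forces $\vec u\, M^n\, \vec v \in \{0,1\}$ for every $n$; and $\?N$ is universal iff this quantity equals $1$ for all $n\in\N$. The plan is to turn this into a reachability question in a polynomial-size graph. Consider the product/subset-style construction whose vertices are the \emph{reachable} sets $S\subseteq\states$ of the form $S = \{q : \vec u\, M^n$ puts a nonzero in coordinate $q\}$ (the standard unary subset construction), together with the multiplicity information. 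Crucially, unambiguity bounds the multiplicities: since from any reachable configuration the automaton is still unambiguous on the remaining suffixes, no reachable ``counting vector'' can have an entry exceeding $|\states|$ — if it did, one could route $|\states|+1$ distinct partial runs through a common state and then to an accepting state, contradicting unambiguity. Hence the relevant configuration space is the set of vectors in $\{0,1,\dots,|\states|\}^{\states}$, of which there are at most $(|\states|+1)^{|\states|}$ — still exponential, so I need to be more careful.

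The fix is to avoid materializing the configuration graph and instead argue directly with the $\{0,1\}$-valued sequence $f(n) = \vec u\, M^n\, \vec v$. Since $f(n)\in\{0,1\}$ for all $n$ by unambiguity, and $f$ is (after the transient) eventually governed by the strongly connected structure of $M$, non-universality ($\exists n\ f(n)=0$) is witnessed by a "small" $n$: either $n < |\states|$ in the acyclic transient part, or $n$ lies within the first period after entering the recurrent part, which has length at most the lcm of simple-cycle lengths but — because $f$ is $\{0,1\}$-valued — one can show a polynomial bound on a witness suffices, using that the accepted set $\Lang[\?N]{\initialstate}\subseteq\N$ is, for a unary UFA, a semilinear set whose period and offsets are polynomially bounded (this is exactly the content one extracts from~\cite{MS1973}/\cite{WENGUEY199643} specialized to unary UFAs: the ambiguity-one restriction collapses the naive exponential period bound). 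Granting such a polynomial bound $p(|\states|)$ on a non-universality witness, the algorithm is: nondeterministically guess $n \le p(|\states|)$ in binary, then verify $f(n) = 0$ by computing $\vec u\, M^n\, \vec v$ via repeated squaring. The last step needs care: iterated integer matrix powering is not obviously in $\NL$. But here unambiguity rescues us again — because every intermediate vector $\vec u\, M^k$ has entries bounded by $|\states|$, we can instead track the single reachable \emph{set} of states (an $\NL$-computable reachability/graph-accessibility fact) and simultaneously verify that $\vec u\, M^n\,\vec v = 0$ by checking that no accepting state is reachable from $\initialstate$ by a path of length exactly $n$, which is a reachability query in the product of $\?N$ with a mod-$d$ counter for the appropriate period $d$ — and $d$ can be taken polynomial by the same semilinearity argument. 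This product graph has polynomial size, so the query is in $\NL$.

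The main obstacle, and the step I would spend the most effort on, is establishing the \emph{polynomial} bound on a non-universality witness (equivalently, a polynomial bound on the period/thresholds of the unary language of a UFA). The naive lasso-shape argument for a general unary NFA only gives an lcm-of-cycle-lengths period, which is exponential; the whole point of restricting to unambiguous automata is that this blows down to polynomial, and the clean way to see it is precisely the $\{0,1\}$-valuedness of $f(n) = \vec u\,M^n\,\vec v$ together with the entrywise bound $\le|\states|$ on $\vec u\, M^n$, which together pin down the eventual behaviour with only polynomially much data. I would state this as a self-contained combinatorial lemma, prove it by the pigeonhole/route-splitting argument sketched above (more than $|\states|$ runs through one vertex $\Rightarrow$ two accepting runs on some $a^m$), and then assemble the $\NL$ algorithm as the guess-a-small-witness-and-check-a-polynomial-size-reachability routine. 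The general-alphabet $\myNC{2}$ claim is then just a citation.
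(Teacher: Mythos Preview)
Your general-alphabet part (cite \cite{WENGUEY199643}) matches the paper. For the single-letter case the paper's proof is a two-liner: it invokes a known bound from Colcombet \cite{DBLP:conf/dcfs/Colcombet15} (Lemma~2) that for a unary UFA the shortest rejected word has length at most $|\states|$; universality then reduces to checking that $a^i$ is accepted for each $0\le i\le |\states|$, a conjunction of $|\states|+1$ length-bounded reachability queries in the underlying NFA, hence in $\NL$ via Immerman--Szelepcs\'enyi. No repeated squaring, no product with a modular counter.

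Your plan has a genuine gap exactly at the step you yourself flag as the ``main obstacle'': you try to re-derive a polynomial witness bound from scratch, and the argument you sketch does not deliver it. The pigeonhole ``more than $|\states|$ partial runs through a vertex $\Rightarrow$ ambiguity'' correctly bounds the \emph{entries} of $\vec u\,M^n$ (in fact they lie in $\{0,1\}$ after trimming, not merely $\le|\states|$). But bounded entries only say the orbit $(\vec u\,M^n)_n$ lives in $\{0,1\}^{|\states|}$, a set of size $2^{|\states|}$; the linear map $v\mapsto vM$ restricted to that set can a priori have exponential period, so pigeonhole on the orbit yields an exponential --- not polynomial --- witness. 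Your appeals to \cite{MS1973} and \cite{WENGUEY199643} do not close this: the former is the $\coNP$ result for general unary NFAs, where the witness bound \emph{is} exponential, and the latter is the $\myNC{2}$ result for arbitrary alphabets, which does not specialize to a short-witness statement for unary. The linear bound $|\states|$ is a separate combinatorial fact about unary UFAs; the paper outsources it to the citation, and once you have it the remaining machinery in your plan is unnecessary.
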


We will start by considering the universality problem
for unambiguous OCNs over a single letter alphabet.
Here, unambiguity implies a strong restriction on accepting runs:
if a run is accepting then it contains at most one positive cycle (which may be iterated multiple times).

\begin{restatable}{lemma}{unambOneLoop}
    \label{lem:UOCA-single-loop}
	Let $\pi=\pi_1\pi_2\pi_3$ be an accepting run where $\pi_2$ is a positive simple cycle.
	Then 
	$\pi_3=\pi_2^k\pi_4$
	for some $k\in\N$ and acyclic path $\pi_4$.
\end{restatable}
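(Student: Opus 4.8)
The plan is to argue by contradiction, using unambiguity together with monotonicity of OCNs and the fact that $\pi_2$ has strictly positive effect. Let $r$ be the base state of $\pi_2$, so that $\pi_1$ ends and $\pi_3$ begins at $r$, and let $k\ge 0$ be maximal with $\pi_3=\pi_2^k\pi_3'$. If $\pi_3'$ is acyclic we are done (take $\pi_4=\pi_3'$), so assume it is not. Picking the earliest-starting simple cycle that occurs as an infix of $\pi_3'$, I would write $\pi_3'=\alpha\delta\beta$ with $\alpha$ acyclic, $\delta$ a simple cycle on its final state $r'$, and $\beta$ a path from $r'$ to the accepting endpoint of $\pi$. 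A short check gives $\delta\neq\pi_2$: if $\alpha$ is empty this contradicts maximality of $k$, and if $\alpha$ is non-empty then, being acyclic and starting at $r$, it must end at some $r'\neq r$, so $\delta$ and $\pi_2$ sit on different states.

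Next I would consider the family $\rho_{a,d}\eqdef\pi_1\,\pi_2^{\,a}\,\alpha\,\delta^{\,d}\,\beta$, which all end at the accepting endpoint of $\pi$ and read the unary word of length $\len{\pi_1}+a\len{\pi_2}+\len{\alpha}+d\len{\delta}+\len{\beta}$; thus $\rho_{a_1,d_1}$ and $\rho_{a_2,d_2}$ read the same word exactly when $a_1\len{\pi_2}+d_1\len{\delta}=a_2\len{\pi_2}+d_2\len{\delta}$. The key claim is that $\rho_{a,d}$ is executable from $c_0$ whenever $d\ge 1$ and $a$ is large enough. Since $\pi$ is executable, $\pi_1\pi_2^{k+1}$ is executable; because $\effect{\pi_2}>0$, repeated use of monotonicity shows $\pi_1\pi_2^{\,a}$ is executable for all $a\ge k+1$ and reaches $r$ with a counter value that tends to infinity as $a\to\infty$. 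Monotonicity then makes $\alpha$ executable from there (reaching $r'$ with an arbitrarily large counter for large $a$), hence $\delta$ can be iterated any prescribed $d$ times, and finally $\beta$ is executable by monotonicity once more. This is the one place positivity is essential: for negative $\delta$ one cannot iterate $\delta$ many times from a fixed counter, but prepending extra copies of the positive cycle $\pi_2$ provides exactly the counter budget needed. I expect this executability bookkeeping in the $\effect{\delta}<0$ case to be the main obstacle.

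To finish, I would fix $d_1=1$, $d_2=1+\len{\pi_2}$, $a_2=N$ and $a_1=N+\len{\delta}$ for $N$ large enough that both $\rho_{a_1,d_1}$ and $\rho_{a_2,d_2}$ are executable; the length identity $a_1\len{\pi_2}+\len{\delta}=a_2\len{\pi_2}+(1+\len{\pi_2})\len{\delta}$ holds, so these are two accepting runs on one word. They are distinct because their final counter values differ by $\len{\delta}\effect{\pi_2}-\len{\pi_2}\effect{\delta}$, which is nonzero unless $\delta$ is itself positive with the same effect-to-length ratio as $\pi_2$; in that last degenerate case I would instead separate the two runs by the shape of their cycle decomposition --- using $\delta\neq\pi_2$ and $a_1\neq a_2$, e.g.\ taking $N$ so large that the two runs attain different maximal counter values along the run. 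Two distinct accepting runs on a single word contradict unambiguity, which completes the proof.
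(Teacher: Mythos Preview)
Your approach is essentially the paper's: argue by contradiction, pump the positive cycle $\pi_2$ to accumulate counter budget, and then trade copies of $\pi_2$ for copies of a second simple cycle in the suffix to obtain two accepting runs on the same unary word. The paper is simply more terse---given a decomposition $\pi=\pi_1\pi_2\pi_3\pi_4\pi_5$ with $\pi_4$ a simple cycle, it directly compares $\pi_1\pi_2^{|\pi_4|+c|\pi_2|}\pi_3\pi_4\pi_5$ with $\pi_1\pi_2^{c|\pi_2|}\pi_3\pi_4^{|\pi_2|}\pi_5$ for $c=|Q|\cdot\norm{\transitions}$, without your preliminary step of maximising $k$ and isolating the earliest-starting simple cycle in the remainder.

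One genuine loose end in your proposal is the ``degenerate case'' where $\effect{\delta}/|\delta|=\effect{\pi_2}/|\pi_2|$. Your proposed fix---``taking $N$ so large that the two runs attain different maximal counter values''---does not work: when the effect-to-length ratios agree, the maximal counter values along $\rho_{a_1,1}$ and $\rho_{a_2,1+|\pi_2|}$ can coincide for every $N$, since replacing $|\delta|$ copies of $\pi_2$ by $|\pi_2|$ copies of $\delta$ preserves total effect and the surrounding paths are identical. The correct route is the other one you mention but do not carry out, namely distinctness of the underlying transition sequences. For instance, after the common prefix $\pi_1\pi_2^{N}$ you need $\pi_2^{|\delta|}\alpha\delta=\alpha\delta^{1+|\pi_2|}$ as paths; when $\alpha$ is empty this gives $\pi_2^{|\delta|}=\delta^{|\pi_2|}$, which by Fine--Wilf forces $\pi_2=\delta$ (both being simple, hence primitive), a contradiction; the nonempty-$\alpha$ case needs a separate state-sequence argument using that $\alpha$ is acyclic from $r$ and $\delta$ is based at $r'\neq r$. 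The paper also glosses over distinctness, so this is not a divergence from its proof, but your text promises an argument that does not go through as stated.
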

\begin{proof}
	Assume towards contradiction that there is an accepting run
	$\pi=\pi_1\pi_2\pi_3\pi_4\pi_5$, where $\pi_2$ is a positive simple cycle
	and $\pi_4$ is a simple cycle.
	Based on this we show that the system cannot be unambiguous.
	Let $c=\card{\states}\cdot\norm{\transitions}$ and denote by $\len{\pi}$ the length of path $\pi$.
	
	Since $\pi_2$ has a positive effect, it follows that $\pi'=\pi_1\pi_2^{\len{\pi_4}+c\cdot \len{\pi_2}}\pi_3\pi_4\pi_5$ is an accepting run. But there is a second run that reads the same word, namely $\pi''=\pi_1\pi_2^{c\cdot \len{\pi_2}}\pi_3\pi_4^{\len{\pi_2}}\pi_5$. The second run is indeed a run
	as the increment along $ \pi_2^{c\cdot\len{\pi_2}}$ is bigger than any possible negative
	effect of $\pi_4^{\len{\pi_2}}$.
	Moreover the lengths of both runs are the same
	as $\pi_2^{\len{\pi_4}}=\pi_4^{\len{\pi_2}}$. 
\end{proof}

A consequence of \cref{lem:UOCA-single-loop} is that if along any accepting run the value of the counter exceeds
$\bound{0}=\card{\states}\cdot\norm{\transitions}$ then it cannot drop to zero afterwards, 
as it would require at least one negative cycle to do so.
One can therefore encode all counter values up to $\bound{0}$ into the
finite-state control and solve universality for the resulting UFA. \Cref{lem:universality_of_UFA} thus yields the following.

\begin{restatable}{theorem}{unambU}
    \label{thm:UOCA-universality-unary-single}
	The universality problem of unary encoded unambiguous one-counter nets over a singleton alphabet is in \NL.
\end{restatable}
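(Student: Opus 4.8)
The plan is to combine \cref{lem:UOCA-single-loop} with the NL-bound of \cref{lem:universality_of_UFA}, by encoding bounded counter values into the finite-state control. First I would set $\bound{0}=\card{\states}\cdot\norm{\transitions}$ and observe the key structural consequence of \cref{lem:UOCA-single-loop}: along any accepting run the counter is ``monotone above $\bound{0}$'' in the sense that once the counter strictly exceeds $\bound{0}$ it can never again fall to $0$. Indeed, dropping from a value above $\bound{0}$ down to $0$ requires a net negative effect of more than $\bound{0}=\card{\states}\cdot\norm{\transitions}$, which forces the traversal of at least one simple cycle with negative effect; but \cref{lem:UOCA-single-loop} says that after any positive simple cycle in an accepting run there are no further simple cycles other than repetitions of that same positive cycle, so a negative cycle cannot occur after the counter has climbed above $\bound{0}$. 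Hence for accepting runs it is harmless to ``cap'' the counter: once it exceeds $\bound{0}$, the exact value is irrelevant for the question of whether the run can be completed to an accepting one.

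The second step is to build, from the unambiguous OCN $\?A=\ocntuple$ and initial configuration $(\initialstate,c_0)$, a finite automaton $\?B$ over the same singleton alphabet whose states are pairs $(s,c)$ with $s\in\states$ and $c\in\{0,1,\dots,\bound{0},\top\}$, where $\top$ is a fresh symbol standing for ``counter value $>\bound{0}$''. Transitions of $\?B$ simulate those of $\?A$: a transition $(s,a,d,s')\in\transitions$ yields, from $(s,c)$ with $c\le\bound{0}$, a transition to $(s',c+d)$ whenever $0\le c+d\le\bound{0}$, to $(s',\top)$ whenever $c+d>\bound{0}$, and no transition if $c+d<0$; from $(s,\top)$ it yields a transition to $(s',\top)$ whenever $d\ge 0$, and by the observation above we may simply forbid the transition when $d<0$ (any accepting run using such a step would contradict the monotonicity consequence of \cref{lem:UOCA-single-loop}). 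The initial state of $\?B$ is $(\initialstate,c_0)$ if $c_0\le\bound{0}$ and $(\initialstate,\top)$ otherwise, and the accepting states are $(f,c)$ with $f\in\fstates$. Since weights are unary-encoded, $\bound{0}$ is polynomial in the input, so $\?B$ has polynomially many states and can be constructed in logspace. By construction $\Lang[\?B]{} = \Lang[\?A]{\initialstate,c_0}$, and crucially $\?B$ inherits unambiguity from $\?A$: distinct accepting runs of $\?B$ on the same word would project to distinct accepting runs of $\?A$ (the capping map $(s,c)\mapsto$ state of $\?A$ is injective on runs in the sense needed, because the $\top$-region behaves deterministically in how it was entered — or more simply, any two distinct $\?B$-runs give two distinct $\?A$-runs once we reconstruct actual counter values, using that above $\bound{0}$ the run can only go up).

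Finally I would invoke \cref{lem:universality_of_UFA}: $\?B$ is an unambiguous finite automaton over a singleton alphabet, hence its universality is decidable in \NL. Composing the logspace construction of $\?B$ with the \NL\ test gives an \NL\ procedure for the universality of $\?A$ from $(\initialstate,c_0)$, which proves the theorem.

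The main obstacle I anticipate is verifying cleanly that $\?B$ is genuinely unambiguous and that forbidding negative steps from the $\top$-region does not change the language. This needs a careful argument that every accepting run of $\?A$ avoids any negative-effect step once the counter is above $\bound{0}$; the heart of it is exactly \cref{lem:UOCA-single-loop}, but one must combine the ``at most one positive cycle, iterated, then acyclic'' shape with the counting estimate that a descent of more than $\card{\states}\cdot\norm{\transitions}$ cannot be achieved by an acyclic path alone. A secondary, more routine point is bookkeeping the logspace-constructibility of $\?B$, which is immediate from unary encoding of weights. Everything else is standard once the capped automaton is in place.
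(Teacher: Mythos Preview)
Your overall approach matches the paper's: cap the counter at $\bound{0}=\card{\states}\cdot\norm{\transitions}$, build a polynomial-size UFA, and invoke \cref{lem:universality_of_UFA}. However, your construction has a genuine error in how the $\top$-region is handled.

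You forbid negative-effect transitions from $(s,\top)$, justified by the observation that once the counter exceeds $\bound{0}$ it cannot drop to $0$. But ``cannot drop to $0$'' is not the same as ``takes no negative-effect step''. The positive simple cycle $\pi_2$ of \cref{lem:UOCA-single-loop} may well contain individual transitions with negative effect, and so may the acyclic suffix $\pi_4$. Concretely: take states $p_0,p_1$ (both accepting), initial configuration $(p_0,0)$, and transitions $p_0\to p_1$ with effect $+2$ and $p_1\to p_0$ with effect $-1$. This OCN is deterministic (hence unambiguous) and universal; here $\bound{0}=4$. The unique run first enters $\top$ at step $7$ in state $p_1$, but the only outgoing transition from $p_1$ has effect $-1$, which you forbid. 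Your $\?B$ therefore rejects all lengths $\ge 8$ and is not universal, so your algorithm answers incorrectly. The claim you propose to verify in your ``main obstacle'' paragraph---that every accepting run of $\?A$ avoids any negative-effect step once the counter is above $\bound{0}$---is simply false, and no amount of care will salvage it.

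The fix is what the paper does: once the counter exceeds $\bound{0}$, switch to an unrestricted copy of the underlying automaton, allowing \emph{all} transitions regardless of sign. The resulting automaton has $\card{\states}\cdot(\bound{0}+1)+\card{\states}$ states. Correctness then rests on the genuinely true consequence of \cref{lem:UOCA-single-loop}: an accepting run whose counter exceeds $\bound{0}$ never afterwards reaches $0$ (because that would require a negative simple cycle, which the lemma excludes). This is what makes every accepting $\?B$-run lift to an executable accepting run of $\?A$, and hence also what gives unambiguity of $\?B$.
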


We consider next the initial-value universality problem for unambiguous OCNs.
Since whether an OCN is unambiguous depends on the initial counter value,
the initial-value universality problem is only meaningful for structurally unambiguous
systems, those which are unambiguous regardless of the initial counter.
We first observe a simple fact about these definitions.

\begin{restatable}{lemma}{unambSU}
    \label{lem:UOCA-structural-unambiguity}
	An OCN is structurally unambiguous if and only if its underlying automaton is unambiguous.
\end{restatable}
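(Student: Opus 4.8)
The plan is to prove the two implications separately, with the left-to-right direction being essentially the observation already recorded in the text, and the right-to-left direction being a short ``lifting'' argument.

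For the easy direction, suppose the underlying automaton is unambiguous. Fix any initial counter value $c_0\in\N$ and a word $w$, and suppose the OCN had two distinct accepting runs $\rho\neq\rho'$ starting in $(\initialstate,c_0)$ and reading $w$. By the correspondence between runs and paths established in the preliminaries, $\rho$ and $\rho'$ determine paths $\pi,\pi'$ in the underlying automaton reading $w$; since for a fixed $c_0$ each path has at most one corresponding run, $\rho\neq\rho'$ forces $\pi\neq\pi'$. Both $\pi,\pi'$ end in accepting states (because $\rho,\rho'$ are accepting) and start in $\initialstate$, so they are two distinct accepting runs of the underlying NFA on $w$, contradicting unambiguity. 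Hence the OCN is unambiguous from every $c_0$, i.e.\ structurally unambiguous.

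For the converse I would argue by contraposition: assume the underlying automaton is ambiguous, so there is a word $w$ admitting two distinct accepting paths $\pi\neq\pi'$ from $\initialstate$ in the underlying NFA. Set $c_0\eqdef\max\{\depth{\pi},\depth{\pi'}\}$. Then both $\pi$ and $\pi'$ are executable from $c_0$ (their depth does not exceed $c_0$, so the counter never goes negative), so each induces an accepting run of the OCN on $w$ from $(\initialstate,c_0)$, and these two runs are distinct since the underlying paths are distinct. Thus the OCN with initial configuration $(\initialstate,c_0)$ is not unambiguous, so the OCN is not structurally unambiguous.

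I do not anticipate a genuine obstacle here; the only points requiring care are purely bookkeeping: that distinct accepting NFA paths reading the same word lift to distinct accepting OCN runs once the initial counter is large enough, and that executability is exactly the condition ``depth $\le c_0$''. Both are immediate from the run/path correspondence and the definition of executability in \cref{sec:preliminaries}.
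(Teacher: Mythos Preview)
Your proposal is correct and follows essentially the same approach as the paper: both directions match, with the contrapositive for the nontrivial implication choosing an initial counter large enough to make both ambiguous NFA paths executable. Your choice $c_0=\max\{\depth{\pi},\depth{\pi'}\}$ is in fact tighter than the paper's cruder bound $(\len{\pi}+\len{\pi'})\cdot\norm{\transitions}$, but the idea is identical.
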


\begin{restatable}{lemma}{unambSUstruct}
\label{lem:SUOBA-structure}
	Consider a structurally unambiguous OCN with initial state $\initialstate$.
	There exists an initial counter $c_0$ so that $\Lang{\initialstate,c_0}=\alphabet^*$
	if, and only if, the underlying automaton is universal and has no negative cycles.
\end{restatable}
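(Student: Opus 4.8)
The plan is to prove both directions of the equivalence for \cref{lem:SUOBA-structure}, using the structural characterization from \cref{lem:UOCA-structural-unambiguity} (so that the underlying automaton $\?N$ is unambiguous) and the single-positive-cycle structure of accepting runs from \cref{lem:UOCA-single-loop}. Note that \cref{lem:UOCA-single-loop} is stated for singleton alphabets, so I should first check whether its argument extends to general alphabets; in fact it does, since the pumping argument $\pi_2^{\len{\pi_4}}=\pi_4^{\len{\pi_2}}$ only uses the lengths of the cycles and that both read words with those lengths, while the surgery produces two distinct accepting runs over the \emph{same} word — contradicting unambiguity. So the key fact I would invoke is: in a structurally unambiguous OCN, no accepting run contains a positive simple cycle followed (later in the run) by any further simple cycle.

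For the easy direction ($\Leftarrow$): suppose the underlying automaton is universal and has no negative cycles. I would pick $c_0$ large enough to execute every simple cycle from its nadir, say $c_0 = \card{\states}\cdot\norm{\transitions}$. Given any word $w$, since $\?N$ is universal there is an accepting path $\pi$ in $\?N$ reading $w$. I need to show $\pi$ is executable from $c_0$. Decompose $\pi$ into maximal acyclic stretches interleaved with simple cycles; since all cycles have non-negative effect, iterating the standard ``cycle-shifting to a nadir'' argument (\cref{obs:shifted cycle}) and the linear-form machinery (\cref{lem:linear form of paths}) — or, more directly, a hands-on induction on the number of cycles — shows that $\pi$ can be reorganized into an executable run: the counter, starting at $c_0$, never needs to drop below $0$ because each cycle's depth is bounded by $\card{\states}\cdot\norm{\transitions}$ and non-negative cycles only help, while the acyclic pieces shift the counter by at most $\card{\states}\cdot\norm{\transitions}$ in total. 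Hence $w\in\Lang{\initialstate,c_0}$, so $\Lang{\initialstate,c_0}=\alphabet^*$.

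For the hard direction ($\Rightarrow$): assume there exists $c_0$ with $\Lang{\initialstate,c_0}=\alphabet^*$. Universality of the underlying automaton is immediate, since every run of the OCN projects to a path in $\?N$ reading the same word. The real work is showing there are no negative cycles. Suppose for contradiction there is a simple negative cycle on some state $s$. Since (by the w.l.o.g.\ assumption) $s$ is reachable from $\initialstate$ and can reach an accepting state, I would build a word $w$ whose only accepting paths in $\?N$ are forced to traverse this negative cycle ``too many'' times to remain executable from $c_0$ — for instance by routing through $s$ and looping the negative cycle $N$ times for a large $N$, padding with letters so that no alternative accepting path exists. The subtlety, and I expect this to be the main obstacle, is that $\?N$ may offer \emph{other} accepting paths for the same $w$ that avoid the negative cycle; unambiguity of $\?N$ does not by itself forbid many \emph{non-accepting} paths, only many accepting ones. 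I would handle this by choosing $w$ to lie in $\LangVia[\?N]{\initialstate}{s}$ in a way that every accepting path reading $w$ must pass through $s$ the prescribed number of times: concretely, first pump a positive cycle reachable after $s$ is impossible by \cref{lem:UOCA-single-loop} once a positive cycle has been used, so instead I pad using the negative cycle itself and an acyclic tail to an accepting state, relying on the lasso-like constraints that unambiguity forces. For a cleaner argument I would instead use the contrapositive via \cref{lem:UOCA-single-loop}: if there is a negative simple cycle $\gamma^-$ on $s$, then in any executable accepting run on a suitably pumped word the counter must be ``lifted'' before $\gamma^-$ by a positive cycle, but then $\gamma^-$ is a simple cycle occurring after a positive simple cycle in an accepting run, contradicting \cref{lem:UOCA-single-loop}; and if it is never lifted, iterating $\gamma^-$ enough times drives the counter negative, contradicting executability of \emph{every} accepting run on that word. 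Making the word $w$ explicit so that these are the \emph{only} possibilities — i.e.\ pinning down its accepting runs in $\?N$ — is where the care is needed, and I would do it by taking $w$ to read the fixed path $\initialstate\to s$, then $\gamma^-$ iterated $c_0+1$ times, then a fixed acyclic path $s\to f$ for an accepting state $f$, and arguing that unambiguity of $\?N$ plus reachability assumptions leave no room for an alternative accepting path of the same length.
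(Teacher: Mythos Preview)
Your proposal contains one genuine error and one unnecessary detour, but also the seed of a correct argument that you do not fully commit to.

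\medskip
\textbf{The error.} Your claimed extension of \cref{lem:UOCA-single-loop} to general alphabets is false. Over a singleton alphabet the two runs $\pi_1\pi_2^{\len{\pi_4}+c\len{\pi_2}}\pi_3\pi_4\pi_5$ and $\pi_1\pi_2^{c\len{\pi_2}}\pi_3\pi_4^{\len{\pi_2}}\pi_5$ read the same word simply because they have the same length. Over a general alphabet this fails: $\pi_2^{\len{\pi_4}}$ and $\pi_4^{\len{\pi_2}}$ read the words $u^{\len{\pi_4}}$ and $v^{\len{\pi_2}}$ for the labels $u,v$ of $\pi_2,\pi_4$, and these coincide only when $u$ and $v$ are powers of a common word. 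So the entire detour through \cref{lem:UOCA-single-loop} should be dropped.

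\medskip
\textbf{What actually works.} Your last sentence is the right idea and is much simpler than you make it sound. By \cref{lem:UOCA-structural-unambiguity} the underlying automaton $\?N$ is unambiguous. If $\gamma^-$ is a negative simple cycle on a (reachable and co-reachable) state $s$, then for each $k$ the path $\pi_1(\gamma^-)^k\pi_3$ is an accepting path in $\?N$ and hence the \emph{unique} accepting path on its word $w_k$. For $k$ large enough this path is not executable from $c_0$, so $w_k\notin\Lang{\initialstate,c_0}$, contradicting universality. There is no ``subtlety'' about alternative accepting paths: unambiguity of $\?N$ rules them out immediately. You should state this cleanly instead of hedging.

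\medskip
\textbf{Comparison with the paper.} The paper takes a different route for the ``only if'' direction: it does not invoke \cref{lem:UOCA-structural-unambiguity} at all. Starting from an accepting run $\pi_1\pi_2\pi_3$ with $\effect{\pi_2}<0$, it pumps $\pi_2$ to make $\pi_1\pi_2^k\pi_3$ non-executable from $c_0$; universality then supplies a \emph{different} accepting run $\pi_4$ on the same word; finally, raising the initial counter to $c_0+\norm{\transitions}\cdot\len{\pi_2}\cdot k$ makes both $\pi_4$ and $\pi_1\pi_2^k\pi_3$ executable, directly contradicting structural unambiguity. Your route (via unambiguity of $\?N$) is arguably more direct once \cref{lem:UOCA-structural-unambiguity} is available; the paper's route avoids that lemma but needs the counter-lifting trick.

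\medskip
\textbf{The ``if'' direction.} Your argument here is over-engineered and, as written, wrong for general alphabets: ``reorganizing'' cycles (linear forms, nadir shifts) changes the word read, and since $\?N$ is unambiguous you are not free to replace the accepting path by another one. The correct argument is the paper's one-liner: if every simple cycle has non-negative effect then every prefix of every path decomposes as a simple path plus non-negative simple cycles, hence has effect at least $-\card{\states}\cdot\norm{\transitions}$. Taking $c_0=\card{\states}\cdot\norm{\transitions}$ therefore makes \emph{every} path in $\?N$ executable, so $\Lang{\initialstate,c_0}$ coincides with the (universal) language of $\?N$.
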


The following is a direct consequence of
\cref{lem:SUOBA-structure}
and the complexity bounds
provided by
\cref{lem:universality_of_UFA,lem:DOCN-conditions}, for the cycle condition \refCond{\CondNNSimpleCycles}.
\begin{theorem}
	\label{thm:SUOCN-iv-universality}
	The initial-value universality problem of structurally unambiguous one-counter nets 
	is 
	in \myNC{2} assuming binary encoding,
	and 
	in \NL\ assuming unary encoding and single-letter alphabets.
\end{theorem}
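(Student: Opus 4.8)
The plan is to invoke \cref{lem:SUOBA-structure} to reduce the initial-value universality problem to the conjunction of two independent structural checks on the net: (i)~the underlying automaton is universal, and (ii)~the underlying automaton has no negative cycles. The stated bounds then follow from the complexity of these two checks together with the fact that both \NL\ and \myNC{2} are closed under complement and under (polynomially many) conjunctions. As a harmless preprocessing step we first trim the net, removing states that are unreachable from $\initialstate$ or from which no accepting state is reachable; this is possible in \NL\ $\subseteq$ \myNC{2}.

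For check (i): by \cref{lem:UOCA-structural-unambiguity} a structurally unambiguous OCN has an unambiguous underlying automaton, so ``$\Lang[\?N]{\initialstate}=\alphabet^*$'' is exactly a universality query for a UFA. By \cref{lem:universality_of_UFA} this is in \NL\ when $\card{\alphabet}=1$ and in \myNC{2} for a general alphabet; in both cases the bound is insensitive to how the transition weights are encoded, since the underlying automaton ignores them.

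For check (ii): the underlying automaton has a negative cycle if and only if it has a \emph{simple} negative cycle, because any cycle decomposes into simple cycles whose effects sum to the effect of the cycle, so if all simple cycles were non-negative the cycle itself would be. Hence ``has no negative cycles'' is precisely the negation of condition \refCond{\CondNNSimpleCycles} of \cref{lem:DOCN-conditions}. Although that lemma is phrased for deterministic nets, condition \refCond{\CondNNSimpleCycles} is a purely structural property of the weighted control graph and its verification (a negative-cycle computation) makes no use of determinism; thus the bounds carry over verbatim: \refCond{\CondNNSimpleCycles} is checkable in \NL\ for unary-encoded weights and in \myNC{2} for binary-encoded weights, and since these classes are closed under complement, so is its negation.

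Combining the two checks: for binary encoding both (i) and (ii) are in \myNC{2}, and for unary encoding over a single-letter alphabet both are in \NL, so running them and taking the conjunction stays within the respective class, as claimed. The only point needing a word of justification is the transfer of \cref{lem:DOCN-conditions} from the deterministic to the structurally unambiguous setting for condition \refCond{\CondNNSimpleCycles}; this is immediate once one notes the condition speaks only about the weighted control graph, not about the automaton's behaviour.
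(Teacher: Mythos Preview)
Your proposal is correct and follows the same approach as the paper, which derives the theorem directly from \cref{lem:SUOBA-structure} together with the complexity bounds of \cref{lem:universality_of_UFA} and \cref{lem:DOCN-conditions} for condition \refCond{\CondNNSimpleCycles}. One small slip: ``has no negative cycles'' is exactly condition \refCond{\CondNNSimpleCycles} (all simple cycles non-negative), not its negation; since you invoke closure under complement anyway, the conclusion is unaffected.
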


Finally, we turn our attention to the bounded universality problem for unambiguous OCNs.
This turns out to be quite easy, due to the following observation.

\begin{restatable}{lemma}{unambBUnoLoops}
\label{lem:UOCN-no-positive-loops}
	If an unambiguous OCN is bounded universal then no accepting run contains a positive cycle.
\end{restatable}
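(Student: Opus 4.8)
The plan is to argue by contradiction, exploiting the tension between two facts already available: OCNs are monotone in the counter, so positive cycles can be iterated freely, and an unambiguous system has exactly one accepting run per word, so it cannot offer a ``better'', bounded witness when the canonical run is unbounded. Assume the OCN is unambiguous with initial configuration $(\initialstate,c_0)$ and bounded universal, fix a bound $b$ witnessing $\bLang{\initialstate,c_0}{b}=\alphabet^*$, and suppose towards a contradiction that some accepting run factors as $\rho=\rho_1\rho_2\rho_3$ where $\rho_2$ is a cycle of positive effect $d\eqdef\effect{\rho_2}>0$, reading the factorisation $w=w_1w_2w_3$. Let $(s,c_1)$ be the configuration reached after the prefix $\rho_1$.

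First I would pump the positive cycle: since $\rho_2$ has positive effect and is executable from $c_1$, it is executable from every value $\ge c_1$, so $\rho_1\rho_2^k$ is a valid run ending in $(s,c_1+kd)$ for every $k\ge 1$; and since $c_1+kd\ge c_1+d$ and $\rho_3$ was executable from $(s,c_1+d)$, monotonicity of executability makes $\rho^{(k)}\eqdef\rho_1\rho_2^k\rho_3$ a valid accepting run on the word $w_1w_2^kw_3$, one whose counter attains the value $c_1+kd$. Next I would invoke unambiguity: $\rho^{(k)}$ is an accepting run on $w_1w_2^kw_3$, hence the \emph{unique} one. Choosing $k$ with $c_1+kd>b$ (e.g.\ $k=b+1$, as $d\ge 1$), the word $w_1w_2^kw_3$ therefore has no $b$-bounded accepting run, since its only accepting run leaves the window $[0,b]$. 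This contradicts $\bLang{\initialstate,c_0}{b}=\alphabet^*$, and the contrapositive is exactly the claimed statement.

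I do not anticipate a genuine obstacle here; the only care needed is the executability bookkeeping for $\rho^{(k)}$ (iterating a positive cycle never hurts, and the suffix $\rho_3$ stays executable because its starting counter only grows with $k$), together with the observation that ``contains a positive cycle'' can be read directly as ``has an infix that is a closed path of positive effect'', so no reduction to simple cycles is required --- although, if one prefers, a simple positive sub-cycle can always be extracted first by decomposing the cycle into simple cycles, one of which must carry the positive effect. The entire force of the argument is the interplay of monotonicity (pump up) and unambiguity (no alternative, bounded witness), both already in hand.
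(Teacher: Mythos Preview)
Your proof is correct and follows essentially the same approach as the paper's: both argue by contradiction, pumping the positive cycle to force the (unique, by unambiguity) accepting run on the pumped word to exceed any proposed bound. Your version is considerably more careful about the executability bookkeeping than the paper's two-line sketch, but the underlying idea is identical.
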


\begin{restatable}{theorem}{unambBU}
	\label{thm:UOCN-bu}
	\label{thm:UOCN-bu-bin}
	The bounded universality problem of unambiguous one-counter nets with unary-encoded transition weights
	is in \myNC{2}, and in \NL\ if the alphabet has only one letter, and for binary-encoded transition weights it is in \PSPACE.
\end{restatable}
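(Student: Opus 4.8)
The plan is to reduce bounded universality of an unambiguous OCN to an \emph{ordinary} universality question on a finite automaton that inherits unambiguity, and then to dispatch the three cases by known results.

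Fix the unambiguous OCN $\?A=\ocntuple$ with initial configuration $(\initialstate,c_0)$ and set $b\eqdef c_0+\card{\states}\cdot\norm{\transitions}$. First I would establish the characterisation: \emph{$\?A$ is bounded universal if and only if $\bLang{\initialstate,c_0}{b}=\alphabet^*$}. The ``if'' direction is immediate. For ``only if'', assume $\?A$ is bounded universal; then it is in particular universal, and by \cref{lem:UOCN-no-positive-loops} no accepting run from $(\initialstate,c_0)$ contains a positive cycle. A pigeonhole argument on counter values then shows every accepting run stays at counter value at most $b$: if some prefix of the run attained effect exceeding $(\card{\states}-1)\cdot\norm{\transitions}$, then some state would be revisited along the run with strictly larger prefix-effect, so the infix between these two visits is a positive cyclic infix, contradicting \cref{lem:UOCN-no-positive-loops}; hence every prefix-effect is at most $(\card{\states}-1)\cdot\norm{\transitions}$, i.e.\ the counter never exceeds $b$. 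Since every word has an accepting run and that run is $b$-bounded, $\bLang{\initialstate,c_0}{b}=\alphabet^*$. Finally, $\bLang{\initialstate,c_0}{b}$ is exactly the language of the finite automaton $\?B$ with states $\states\x\{0,\dots,b\}$, initial state $(\initialstate,c_0)$, accepting states $\fstates\x\{0,\dots,b\}$, and a transition $(q,i)\step{a}(q',i+d)$ for each $(q,a,d,q')\in\transitions$ with $0\le i+d\le b$; two distinct accepting runs of $\?B$ on a word yield two distinct accepting runs of $\?A$, so $\?B$ is an unambiguous finite automaton.

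If the transition weights (and $c_0$) are encoded in unary, then $b$ is polynomial, so $\?B$ has polynomially many states and is produced from $\?A$ by a logspace transducer. By the characterisation, bounded universality of $\?A$ equals universality of the unambiguous finite automaton $\?B$ over the alphabet $\alphabet$, which by \cref{lem:universality_of_UFA} is in $\myNC{2}$ in general and in $\NL$ when $\card{\alphabet}=1$; this gives the first two claimed bounds. For binary-encoded weights $b$, and hence $\?B$, can be exponential, and here one must \emph{avoid} reusing $\?B$ (unambiguity alone does not obviously place universality of a succinctly presented exponential automaton in \PSPACE). Instead I would phrase the height requirement as a coverability query: by the characterisation, $\?A$ is bounded universal iff $\Lang{\initialstate,c_0}=\alphabet^*$ and no accepting run from $(\initialstate,c_0)$ reaches counter value larger than $b$; and the negation of the latter holds iff some state $s$ has $(s,b+1)$ coverable from $(\initialstate,c_0)$ — ``only if'' is trivial, and for ``if'' every configuration $(s,c)$ with $c\ge b+1>\card{\states}\cdot\norm{\transitions}$ reaches an accepting configuration along a simple path (of length $<\card{\states}$, hence executable), using the standing assumption that every state reaches an accepting state. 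Thus $\?A$ is bounded universal iff $\Lang{\initialstate,c_0}=\alphabet^*$ and, for every state $s$, $(s,b+1)$ is not coverable from $(\initialstate,c_0)$: the first conjunct is in \PSPACE\ for binary-encoded unambiguous OCN \cite{czerwiski:hal-02483495}, and the second is a polynomial-size conjunction of complements of one-counter-net coverability instances, each in \PSPACE\ (cf.\ \cite{almagor2019coverability}); hence the problem is in \PSPACE.

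I expect the main obstacle to be the pigeonhole bound used in the characterisation: controlling the counter along a positive-cycle-free accepting run requires care precisely because the relevant cyclic infixes need not be simple, so the bookkeeping has to be carried out on prefix-effects at state visits rather than by naively deleting simple cycles. The secondary subtlety is the binary case, where one must route the bound-exceeding condition through OCN coverability rather than through the exponential product automaton $\?B$.
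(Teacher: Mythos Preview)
Your proposal is correct. For the unary-encoded case (both general and singleton alphabet) you follow exactly the paper's route: use \cref{lem:UOCN-no-positive-loops} to bound the counter by $b=c_0+\card{\states}\cdot\norm{\transitions}$, encode counter values up to $b$ into states to obtain a polynomial-size UFA, and invoke \cref{lem:universality_of_UFA}. Your pigeonhole justification of the bound is more explicit than the paper's one-line appeal to \cref{lem:UOCN-no-positive-loops}, and your self-identified obstacle (that the relevant cyclic infixes need not be simple, so one must track prefix-effects at first visits to states) is exactly the right point of care.

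The binary case is where you genuinely diverge. The paper keeps the same exponential UFA $\?B$ and argues that the \myNC{2} universality algorithm, being in polylogarithmic space, runs in polynomial space on the (implicitly represented) exponential input. You instead decompose bounded universality into ordinary universality of the unambiguous OCN (in \PSPACE\ by \cite{czerwiski:hal-02483495}) plus the absence of any run exceeding $b$, which you reduce to a polynomial conjunction of non-coverability queries using the standing assumption that every state can reach an accepting state. Your route sidesteps the need to argue that the \myNC{2} algorithm can be simulated on a succinctly presented exponential automaton, at the cost of importing the heavier \PSPACE\ bound for unambiguous-OCN universality as a black box; the paper's route is more self-contained but leaves that simulation argument implicit. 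Both are valid and yield the same \PSPACE\ upper bound.
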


\newpage
\bibliography{autocleaned}

\begin{thebibliography}{10}

\bibitem{AAHMKT2014}
Parosh~Aziz Abdulla, Mohamed~Faouzi Atig, Piotr Hofman, Richard Mayr,
  K.~Narayan Kumar, and Patrick Totzke.
\newblock Infinite-state energy games.
\newblock In {\em ACM/IEEE Symposium on Logic in Computer Science (LICS)}.
  {ACM}, 2014.
\newblock \href {https://doi.org/10.1145/2603088.2603100}
  {\path{doi:10.1145/2603088.2603100}}.

\bibitem{AACMT2018}
Parosh~Aziz Abdulla, Mohamed~Faouzi Atig, Richard~Mayr Radu~Ciobanu, and
  Patrick Totzke.
\newblock Universal safety for timed {Petri} nets is {PSPACE}-complete.
\newblock In {\em International Conference on Concurrency Theory (CONCUR)},
  2018.
\newblock URL: \url{http://dx.doi.org/10.4230/LIPIcs.CONCUR.2018.6}, \href
  {https://doi.org/10.4230/LIPIcs.CONCUR.2018.6}
  {\path{doi:10.4230/LIPIcs.CONCUR.2018.6}}.

\bibitem{AKY2008}
Parosh~Aziz Abdulla, Pavel Krcal, and Wang Yi.
\newblock R-automata.
\newblock In {\em International Conference on Concurrency Theory (CONCUR)},
  2008.

\bibitem{aho1974design}
Alfred~V Aho, John~E Hopcroft, and Jeffrey~D Ullman.
\newblock {\em The design and analysis of computer algorithms}.
\newblock Pearson, 1974.

\bibitem{ABK11}
S.~Almagor, U.~Boker, and O.~Kupferman.
\newblock What's decidable about weighted automata?
\newblock In {\em International Symposium on Automated Technology for
  Verification and Analysis (ATVA)}, 2011.

\bibitem{almagor2019coverability}
Shaull Almagor, Nathann Cohen, Guillermo~A. P{\'{e}}rez, Mahsa Shirmohammadi,
  and James Worrell.
\newblock Coverability in 1-vass with disequality tests.
\newblock In {\em International Conference on Concurrency Theory (CONCUR)},
  2020.

\bibitem{BFGHM15}
Michael Blondin, Alain Finkel, Stefan G{\"{o}}ller, Christoph Haase, and Pierre
  McKenzie.
\newblock Reachability in two-dimensional vector addition systems with states
  is {PSPACE}-complete.
\newblock In {\em ACM/IEEE Symposium on Logic in Computer Science (LICS)}.
  {IEEE} Computer Society, 2015.
\newblock \href {https://doi.org/10.1109/LICS.2015.14}
  {\path{doi:10.1109/LICS.2015.14}}.

\bibitem{BGHH2017}
Stanislav B{\"o}hm, Stefan G{\"o}ller, Simon Halfon, and Piotr Hofman.
\newblock {On B{\"u}chi One-Counter Automata}.
\newblock In {\em International Symposium on Theoretical Aspects of Computer
  Science (STACS)}. Schloss Dagstuhl--Leibniz-Zentrum fuer Informatik, 2017.
\newblock URL: \url{http://drops.dagstuhl.de/opus/volltexte/2017/7019}, \href
  {https://doi.org/10.4230/LIPIcs.STACS.2017.14}
  {\path{doi:10.4230/LIPIcs.STACS.2017.14}}.

\bibitem{BGJ2010}
Stanislav B\"{o}hm, Stefan G\"{o}ller, and Petr Jan\v{c}ar.
\newblock Bisimilarity of one-counter processes is pspace-complete.
\newblock In {\em International Conference on Concurrency Theory (CONCUR)},
  2010.

\bibitem{BFLMS2008}
Patricia Bouyer, Uli Fahrenberg, Kim~G. Larsen, Nicolas Markey, and
  Ji{\v{r}}{\'i} Srba.
\newblock Infinite runs in weighted timed automata with energy constraints.
\newblock In {\em International Conference on Formal Modeling and Analysis of
  Timed Systems (FORMATS)}, 2008.
\newblock \href {https://doi.org/10.1007/978-3-540-85778-5_4}
  {\path{doi:10.1007/978-3-540-85778-5_4}}.

\bibitem{DBLP:conf/dcfs/Colcombet15}
Thomas Colcombet.
\newblock Unambiguity in automata theory.
\newblock In {\em Descriptional Complexity of Formal Systems (DCFS)}. Springer,
  2015.
\newblock \href {https://doi.org/10.1007/978-3-319-19225-3\_1}
  {\path{doi:10.1007/978-3-319-19225-3\_1}}.

\bibitem{czerwiski:hal-02483495}
Wojciech Czerwi{\'n}ski, Diego Figueira, and Piotr Hofman.
\newblock {Universality Problem for Unambiguous VASS}.
\newblock In {\em International Conference on Concurrency Theory (CONCUR)},
  2020.

\bibitem{DDGRT10}
A.~Degorre, L.~Doyen, R.~Gentilini, J.F. Raskin, and S.~Torunczyk.
\newblock Energy and mean-payoff games with imperfect information.
\newblock In {\em Computer Science Logic (CSL)}, 2010.

\bibitem{Dem2013}
St{\'{e}}phane Demri.
\newblock On selective unboundedness of {VASS}.
\newblock {\em Journal of Computer and System Sciences}, 2013.

\bibitem{HLMT2016}
Piotr Hofman, Slawomir Lasota, Richard Mayr, and Patrick Totzke.
\newblock Simulation problems over one-counter nets.
\newblock {\em Logical Methods in Computer Science}, 2016.
\newblock \href {https://doi.org/10.2168/LMCS-12(1:6)2016}
  {\path{doi:10.2168/LMCS-12(1:6)2016}}.

\bibitem{HT2017}
Piotr Hofman and Patrick Totzke.
\newblock Trace inclusion for one-counter nets revisited.
\newblock {\em Theoretical Computer Science}, 2017.
\newblock URL:
  \url{http://www.sciencedirect.com/science/article/pii/S0304397517303961},
  \href {https://doi.org/https://doi.org/10.1016/j.tcs.2017.05.009}
  {\path{doi:https://doi.org/10.1016/j.tcs.2017.05.009}}.

\bibitem{JEM1999}
Petr Jancar, Javier Esparza, and Faron Moller.
\newblock {Petri Nets and Regular Processes}.
\newblock {\em Journal of Computer and System Sciences}, 1999.

\bibitem{KS1988}
S.~Rao Kosaraju and Gregory~F. Sullivan.
\newblock Detecting cycles in dynamic graphs in polynomial time (preliminary
  version).
\newblock In {\em Symposium on Theory of Computing (STOC)}. {ACM}, 1988.

\bibitem{LS2004}
J{\'e}r{\^o}me Leroux and Gr{\'e}goire Sutre.
\newblock On flatness for 2-dimensional vector addition systems with states.
\newblock In {\em International Conference on Concurrency Theory (CONCUR)}.
  Springer Berlin Heidelberg, 2004.

\bibitem{Mayr2003}
Richard Mayr.
\newblock Undecidability of weak bisimulation equivalence for 1-counter
  processes.
\newblock In {\em International Colloquium on Automata, Languages and
  Programming (ICALP)}. Springer, 2003.

\bibitem{Min67}
M.L. Minsky.
\newblock {\em Computation: Finite and Infinite Machines}.
\newblock Prentice Hall, 1 edition, 1967.

\bibitem{Rac1978}
Charles Rackoff.
\newblock The covering and boundedness problems for vector addition systems.
\newblock {\em Theoretical Computer Science}, 1978.
\newblock URL:
  \url{http://www.sciencedirect.com/science/article/pii/0304397578900361},
  \href {https://doi.org/https://doi.org/10.1016/0304-3975(78)90036-1}
  {\path{doi:https://doi.org/10.1016/0304-3975(78)90036-1}}.

\bibitem{RY1985}
Louis~E. Rosier and Hsu-Chun Yen.
\newblock A multiparameter analysis of the boundedness problem for vector
  addition systems.
\newblock In {\em International Symposium on Fundamentals of Computation Theory
  (FCT)}. Springer Berlin Heidelberg, 1985.

\bibitem{MS1973}
L.~J. Stockmeyer and A.~R. Meyer.
\newblock Word problems requiring exponential time(preliminary report).
\newblock In {\em Proceedings of the Fifth Annual ACM Symposium on Theory of
  Computing}. ACM, 1973.
\newblock URL: \url{http://doi.acm.org/10.1145/800125.804029}, \href
  {https://doi.org/10.1145/800125.804029} {\path{doi:10.1145/800125.804029}}.

\bibitem{Val1973}
Leslie~G. Valiant.
\newblock {\em Decision Procedures for Families of Deterministic Pushdown
  Automata}.
\newblock PhD thesis, University of Warwick, 1973.
\newblock URL: \url{http://wrap.warwick.ac.uk/34701/}.

\bibitem{IntroductionToCircuitComplexity}
Heribert Vollmer.
\newblock {\em Introduction to Circuit Complexity: A Uniform Approach}.
\newblock Springer-Verlag, 1999.

\bibitem{WENGUEY199643}
Tzeng Wen-Guey.
\newblock On path equivalence of nondeterministic finite automata.
\newblock {\em Information Processing Letters}, 1996.
\newblock URL:
  \url{http://www.sciencedirect.com/science/article/pii/0020019096000397},
  \href {https://doi.org/https://doi.org/10.1016/0020-0190(96)00039-7}
  {\path{doi:https://doi.org/10.1016/0020-0190(96)00039-7}}.

\end{thebibliography}

\newpage
\appendix

\section{Proofs of \cref{sec:nondet}}
\label{apx:thm:Undecidable1}

\undecBU*
\begin{proof}
\label{apx:thm:Undecidable2}
	We show that a given two-counter machine $\M$ halts if and only if the corresponding one-counter net $\A'$, as constructed in \cref{sec:NondetBounded}, is bounded universal for an initial counter value $0$.
	
	\subproof $\Rightarrow:$\emph{If $\M$ halts}, its (legal) run has some length $n-1$. We claim that $\A'$ is universal with the counter bound $2n$.
	
	Consider some word $w'$ over the alphabet of $\A'$. We shall describe an accepting run $\rho'$ of $\A'$ on $w'$.
	In the first $n$ steps, $\rho'$ remains in $q'_0$, increasing the counter to $n$. Then, it moves to $q_0$. 
	In the rest of the run, $\rho'$ continues as the accepting run $\rho$ of $\A$ on the word $w$ that is the suffix of $w'$ from the $n+1$ position (as described in the proof of \cref{thm:Undecidable1}), except for the following changes: whenever it is in $q_0$ and the counter is bigger than $n$, it goes to $q_7$ on $\#$. In $q_7$, it uses the self loop until the counter's value becomes $n$ and then goes to $q_0$. 
	
	If the length of $w'$ is up to $n$, then $\rho'$ is obviously accepting, as it remains in the accepting states $q'_0$ and $q_0$, and the counter need not exceed $2n$ nor go below $0$.
	
	If the length of $w'$ is more than $n$, we prove that for every segment between two consequent $\#$'s, as well as the segment after the last $\#$, the run $\rho'$ may either reach $heaven$ or reach $q_0$ with counter value at least $n$, and proceed from $q_0$ to $q_1..q_6$ with counter value exactly $n$. This will immediately imply that $\rho'$ is accepting.
	
	The challenge is to show that the counter of $\A'$ never needs to exceed $2n$. (It does not go below $0$, since we go from $q_0$ to $q_1..q_6$ with a counter value of at least $n$ (in this case exactly $n$), which satisfies the assumptions in the proof of \cref{thm:Undecidable1}.)
	
	Now, in states $q_1, q_2, q_4, q_6$, and $q_7$ there is no problem, as the counter never gets above its value when entering these states. Yet, in states $q_3$ and $q_5$ there is a potential problem, since $\A'$'s counter increases when $\M$'s counters increase. However, since the (legal) run of $\M$ is of length $n-1$, a violation must occur within up to $n$ steps. Hence, getting to states $q_3$ and $q_5$ with counter value of exactly $n$, the run $\rho'$ may return to $q_0$ over the first violation, and thus need not increase the counter's value to more than $2n$. Observe that when returning to $q_0$ the counter's value might be bigger than $n$, in which case $\rho'$ will later decrease it to exactly $n$ by going to $q_7$.

	\subproof $\Leftarrow:$ \emph{If $\M$ does not halt}, for every positive integer $n$, we build the word $w'_n$ and show that it is not accepted by $\A'$ for an initial counter value $0$ and a bound $n$ on the counter.
	
	The word $w'_n$ consists of $n+2$ segments between $\#$'s, where each segment is the prefix of length $n$ of the (legal) run of $\M$.
	Consider the possible runs of $\A'$ on $w'_n$. In $q'_0$ it can stay up to $n$ steps, entering $q_0$ with a counter value of up to $n$. Then it should accept from $q_0$ the suffix of $w'_n$, which contains $n+1$ segments as described above. However, as shown in the proof of \cref{thm:Undecidable1}, using all states except for $q_7$, it must decrease the counter value in each segment, and so is the case if using $q_7$. Hence, the run must stop after at most $n$ segments and cannot be accepting.
\end{proof}

\newpage
\section{Proofs of \cref{sec:unary}}
\unaryLPF*
\begin{proof}
	Let $n\in \Lang{\initialstate,c_0}$, and let $\pi=\initialstate,s_1,\ldots,s_n$ be an accepting run of the OCN on $n$. For each state $q$ visited by $\pi$, let $\first(q)$ and $\last(q)$ denote the first and last indices where $q$ occurs in $\pi$, respectively. Let $\Marks\eqdef\{\first(q),\last(q) : q\mbox{ occurs in }\pi\}$ be the set of all markings in $\pi$. Observe that $|\Marks|\le 2|\states|$. 
	
	We reshape $\pi$ into linear form in two phases. In the first phase, we move cycles around such that in the obtained path, any infix between two marked positions consists of a simple path, and a collection of simple cycles. In the second phase, we  replace most of the simple cycles, such that any infix between two marked positions
	consists of a relatively short path, and a single repeating cycle (which completes the linear form).
	Crucially, in both phases we must take care that the path remains executable. 
	The crux of the proof is that instead of simply shifting cycles, we also change their starting point, such that they always start from a nadir, thus making them executable with any counter value.
	
	For the first phase, consider an interval $[i,i+|\states|]$ in $\pi$ that does not intersect $\Marks$ (if no such interval exists, we proceed to the second phase). Since this interval has $|\states|+1$ states, it contains some simple cycle $\gamma=x_1,x_2,\ldots,x_k$. Let $d$ be a nadir of $\gamma$, and observe that necessarily $\first(x_d)<i$ and $\last(x_d)>i+|\states|$, since the interval $[i,i+|\states|]$ does not contain any marks.
	
	We now split into two cases. 
	\begin{itemize}
		\item If $\effect{\gamma}\ge 0$, we modify $\pi$ by removing the cycle $\gamma$ from the interval $[i,i+|\states|]$, and instead adding the shifted cycle $\gamma^{\leftarrow d}$ at index $\first(x_d)$. 
		
		Observe that the modified path is still executable, since by \cref{obs:shifted cycle} the cycle $\gamma^{\leftarrow d}$ is good, and can be executed with any counter value, and following its execution, the remaining path either has higher counters (up to where $\gamma$ occurred) or the same values as in $\pi$ (after where $\gamma$ occurred).
		
		\item If $\effect{\gamma}< 0$, we modify $\pi$ by removing the cycle $\gamma$ from the interval $[i,i+|\states|]$, and instead adding the shifted cycle $\gamma^{\leftarrow d}$ at index $\last(x_d)$. 
		
		Observe that the modified path is still executable. Indeed, by \cref{obs:shifted cycle} $\effect{\gamma^{\leftarrow d}}=-\depth{\gamma^{\leftarrow d}}$, and so $\gamma^{\leftarrow d}$ can be executed
		as long as the counter is at least $\effect{\gamma^{\leftarrow d}}$.
		Moreover, removing this negative cycle results in a
		run in which, all counter-values from the index of removal are increased by $-\effect{\gamma}$.
		In particular, at index $\last(x_d)$ it is at least $0+\effect{\gamma^{\leftarrow d}}$, so $\gamma^{\leftarrow d}$ can be executed. Notice that moving a negative cycle like this results in
		a path that is executable an has the same effect as $\pi$.
	\end{itemize}
	This completes the first phase. We remark that conceptually, this cycle modification takes place in a single ``shot'' for all cycles, so that the indices in $\Marks$ do not change after every cycle is moved, but are rather the same for all cycles being moved (otherwise intervals may ``expand'', and $\Marks$ becomes ill-defined).
	
	\medskip
	We now proceed to the second phase. Let $\pi'$ be the path obtained after the first phase. We refer to any cycle that was moved in $\pi$ as a \emph{dangling cycle}. Thus, $\pi'$ consists of at most $2|\states|$ intervals\footnote{The first and last indices of $\pi$ must be marked and so there are in fact at most $2|\states|-1$ intervals.} that contain no non-dangling cycles, and at most $2|\states|$ indices on which there are dangling cycles (namely the indices in $\Marks$). Furthermore, the dangling cycles always start at their respective nadirs. 
	
	We now proceed to eliminate most dangling cycles at each state. Consider some mark $\first(q)$ or $\last(q)$ in $\Marks$. For each $1\le t\le |\states|$, consider all simple cycles of length $t$ where $q$ is a nadir, and let $\mu_{q,t}$ be such a cycle of maximal effect. We now replace every dangling cycle of length $t$ in $\first(q)$ with $\mu_{q,t}$. Clearly the effect of the cycles does not decrease, so the path remains executable. Furthermore, we maintain the length of the paths, so the path still represents a run on $n$.
	
	Finally, within each mark, we can bunch the cycles by length, so that all cycles of the same length are executed consecutively. Thus, the obtained path consists of at most $2|\states|$ simple paths and $2|\states|\cdot |\states|=2|\states|^2$ simple cycles, which is a linear form as required.
\end{proof}

\unaryLPFpump*
\begin{proof}
	Let $\gamma_r$ be a shortest good cycle on $r$, and 
	let $\rho$ be a an accepting run that passes through $r$. We write $\rho=\pi_1,r,\pi_2$, where $\pi_r$ is a prefix of the run before it visits $r$ and $\pi_2$ is the suffix after visiting $r$ (note that $r$ may occur in $\pi_2$). Furthermore, by \cref{lem:linear form of paths} we can assume $\pi_1$ and $\pi_2$ are in linear form of length at most $2|Q|^2$. 
	Thus, we can write $\pi_1=\tau_0\gamma_1^{e_1}\tau_1\cdots \tau_{k-1}{\gamma_k}^{e_k}\tau_k$ with $k\le 2|Q|^2$, and similarly for $\pi_2$. 
	
	We now start by replacing negative cycles in $\pi_1$ and in $\pi_2$ by repetitions of $\gamma_r$ (the good cycle on $r$). This is done as follows. For every subset of cycles whose combined length equals $m|\gamma_r|$ for some $m\in \N$, we remove those cycles and replace them by $m$ iterations of the good cycle $\gamma_r$. Since we only remove negative cycles, and since $\gamma_r$ has non-negative effect and depth $0$, the run remains executable. Recall that the $\gamma_i$ cycles are simple, and are therefore of length at most $|Q|$. Thus, after removing cycles in this manner, we are left with at most $|\gamma_r|-1\le |\states|$ negative cycles of every length. 
	
	We now aim to remove non-negative cycles in the same fashion. This, however, requires some caution, as some cycles might have effect greater than that of $\gamma_r$, or appear before the run visits state $r$ for the first time,
	and therefore replacing them with $\gamma_r$ may cause the path to become non-executable. 
	Recall that by \cref{def:linear path scheme} (and indeed, by the construction in the proof of \cref{lem:linear form of paths})
	all the non-negative $\gamma_i$ cycles start from their nadir, and therefore have depth $0$. In addition, after removing the negative cycles as done above, the path length (excluding the non-negative cycles) is at most $2|\states|^2+|\states|^2=3|\states|^2$ in each of $\pi_1$ and $\pi_2$. Thus, the maximal depth possible along the entire path is $6|\states|^2\norm{\transitions}$. Thus, as long as a (strictly) positive cycle (or a combination thereof) is taken enough times to maintain the counter above $6|\states|^2\norm{\transitions}$, the path remains executable. We can now proceed to replace non-negative cycles with $\gamma_r$ in the same manner done for negative cycles, while maintaining executability.
	
	We thus end up with a modified run of the form $\eta_1 \gamma_r^{t} \eta_2$ where $\eta_1$ and $\eta_2$ are of length $\poly(|\states|,\norm{\transitions})$, which implies the claim.
\end{proof}

\unaryPumpDFA*
\begin{proof}
	From \cref{lem:linear form with pump state} it follows that there exists a bound $\bound{1}\in \poly(|\states|,\norm{\transitions})$ such that every word accepted with a run that goes through $r$ is of the form $x+y|\gamma_r|$ where $x,y\in \+N$ and $x\le \bound{0}$. Thus, we can construct a DFA of size $\bound{2}\eqdef\bound{1}+|\gamma_r|$ whose form is an initial prefix of length $\bound{1}$, followed by a cycle of length $|\gamma_r|$, and whose accepting states correspond to all the $x$ above, with corresponding accepting states on the cycle.
\end{proof}

\unaryWitnessBound*
\begin{proof}
	Let $\bound{2}$ be as per \cref{lem:pump state DFA}, and define $\bound{3}\eqdef\bound{2}^{|\Pump|}\le \bound{2}^{|\states|}$. 
	Observe that by taking the product of the DFAs obtained in \cref{lem:pump state DFA}, we can construct a DFA $\?D$ of size at most $\bound{3}$ for $\N\setminus \PS$. Then, $\N\setminus \PS$ is infinite iff there exists a word of length $\bound{3}\le n\le 2\bound{3}$ that is accepted by $\?D$ (as such a word is necessarily accepted by a run that contains a cycle in $\?D$).
	
	Towards the claim, if $\N\setminus \PS$ is infinite, then $\Lang{\initialstate,c_0}\neq \N$, and clearly if there exists $n<\bound{2}$ such that $n\notin \Lang{\initialstate,c_0}$ then again, $\Lang{\initialstate,c_0}\neq \N$.
	
	Conversely, assume $\Lang{\initialstate,c_0}\neq \N$. We claim that either there exists $n<\bound{2}$ with $n\notin \Lang{\initialstate,c_0}$, or $\N\setminus \PS$ is infinite. Indeed, observe that since $\?D$ is obtained as the product of singleton-alphabet DFAs, then it has a ``lasso'' shape: a finite prefix of states, followed by a cycle. Moreover, the size of the prefix is at most $\bound{2}$, namely the maximal size of the prefix in each of the DFAs in the product. Thus, if there exists $n<\bound{2}$ with   $n\notin \Lang{\initialstate,c_0}$ then we are done, and otherwise there is some $n>\bound{2}$ with $n\notin \Lang{\initialstate,c_0}$, and in particular $n\notin \PS$, so $\?D$ accepts some word along its cycle, and so accepts infinitely many words, and in particular some word $\bound{3}\le n\le 2\bound{3}$.
\end{proof}

\unaryCheckLPS*
\begin{proof}
	Checking that the transitions follow those of the OCN can be done in polynomial time, since we only need to check the underlying path, regardless of the exponents.
	In order to check that the counter value remains non-negative, we observe that for any cycle $\gamma_i$, if $\effect{\gamma_i}\ge 0$, then $\gamma_i$ is taken from a nadir (by \cref{def:linear path scheme}), and hence can be taken with any counter value.
	If that is the case, then we can compute directly $\effect{\gamma_i^{e_i}}=e_i \cdot\effect{\gamma_i}$. 
	Otherwise, if $\effect{\gamma_i}< 0$, then in order to check if $\gamma_i^{e_i}$ is executable from counter value $c$, it suffices to check that $(e_i-1)\cdot\effect{\gamma_i}-\depth{\gamma_i}\le c$. Indeed, for negative cycles, the last iteration is the ``hardest''.
	Again, we can now compute $\effect{\gamma_i^{e_i}}=e_i\cdot \effect{\gamma_i}$. 
	
	Thus, we can keep track of the counter value along the underlying path, and update it directly for every cycle. This takes polynomial time overall.
\end{proof}

\unaryIVchar*
\begin{proof}
	For the first direction, assume $\Lang[\?A]{\initialstate,c_0}=\N$ for some $c_0$. Clearly $\Lang[\?N]{\initialstate}=\N$
	as otherwise some word is not accepted in the underlying NFA, let alone the OCN. Assume by way of contradiction that $\N\setminus \PS$ is infinite,
	and recall that in every accepting run on a word $n\in \N\setminus \PS$, all cycles must be negative.
	Thus, for long enough words, the counter value, starting at $c_0$, must become negative, which is a contradiction. 
	
	Conversely, if $\N\setminus \PS$ is finite and $\Lang[\?N]{\initialstate}=\N$, we can take an initial counter value large enough so that all words not in $\PS$ have accepting runs. Then, similarly to \cref{lem:linear form with pump state}, we can show that every word in $\PS$ has an accepting run of the form $\tau_1\gamma_r^{t}\tau_2$ with $\tau_1$ and $\tau_2$ of length $\poly(|\states|)$ and where $\gamma_r$ is the canonical good cycle from state $r\in\Pump$
	with maximal effect.
	Notice here that the bound on the lengths of paths $\tau_1$ and $\tau_2$ is polynomial only in the number of states and not, as in \cref{lem:linear form with pump state}, also in $\norm{\transitions}$.
	This is because we can safely remove any combination of simple cycles in these sub-paths
	without preserving the executability of the resulting path in the net.
	A large enough counter value ensures that the prefix and suffix are executable, so all words in $\PS$ are accepted as well.
\end{proof}

\unaryBUstable*
\begin{proof}
	If $q$ is at the nadir of a simple zero cycle, then $\card{\states}$ bounds its length and we are done.
	
	Otherwise, since $q$ admits a negative cycle, then there is a state $x\in Q$ that admits a simple negative cycle $\gamma$ such that $x$ and $q$ are reachable from each other. Let $\tau_1$ and $\tau_2$ be simple paths from $q$ to $x$ and from $x$ to $q$, respectively. Let $s=\effect{\tau_1\tau_2}+1$, then $\chi=\tau_1\gamma^s\tau_2$ is a negative cycle of length at most $3|\states|\cdot \norm{\transitions}$. 
	
	Let $\eta$ be a simple positive cycle that has a nadir at $q$.	
	Then $q$ admits the zero cycle $\zeta_q=\eta^{-\effect{\chi}}\cdot \chi^{\effect{\eta}}$
	and $\bound{5}\eqdef
	\card{\states}  
	\cdot (\card{\states}\cdot\norm{\transitions}) 
	+ 
	(3\card{\states}\cdot\norm{\transitions}) 
	\cdot \card{\states} 
	$
	satisfies the claim.
\end{proof}

\unaryBUchar*
\begin{proof}
	By \cref{cor:singleton alphabet bounded-univ bounded langvia}, there exists a bound $\bound{7}$ such that all words in $\?S$ are accepted with paths whose counter values remains below $\bound{7}$. Hence, if there are only finitely many words that are outside $\?S$, and $\Lang[\?N]{c_0}=\N$, then the counter values among the runs on the remaining finite set of words are clearly bounded. Hence, $\Lang{\initialstate,c_0}$ is bounded-universal.
	
	Conversely, assume $\N\setminus \?S$ is infinite, we show that $\Lang{\initialstate,c_0}$ is not bounded-universal. First, if $\Lang[\?N]{\initialstate}\neq \N$ the OCN cannot be universal, and in particular it is not bounded-universal. Observe that by \cref{def:stable states}, words outside $\?S$ can be accepted only with paths on which the number of alternations between positive and negative cycles is at most $|\states|$, and that do not contain zero cycles. Since only finitely many words can be accepted using a bounded number of positive cycles, it follows that if $\N\setminus\?S$ is infinite, then for every $M\in\N$ there exists a word that is only accepted by runs that have a positive cycle taken at least $M$ times, and hence have effect at least $M$. 
	It follows that $\Lang{\initialstate,c_0}$ is not bounded-universal.
\end{proof}

\unaryUnivBinary*
\begin{proof}
	Following our algorithmic scheme, an $\NP^\NP$ algorithm for non-universality proceeds as follows. non-deterministically either (1) guess $n<\bound{3}$, and check (using an $\NP$ oracle as per \cref{lem:checking linear path scheme}) that $n\notin \Lang{\initialstate,c_0}$, or (2) guess $\bound{3}^{|\states|}\le n\le 2\bound{3}^{|\states|}$ and check that $n\notin \Lang[\?A^r]{\initialstate,c_0}$ for all $r\in \Pump$, using $|\states|$ calls to an $\NP$ oracle as per \cref{lem:checking linear path scheme}. 
\end{proof}

\newpage
\section{Proofs of \cref{sec:deterministic}}
\label{apx:lem:binary-addition}
\detBinaryAddition*
\begin{proof}
	Addition of two integers written in binary can be done in $\myAC{0}$ \cite{IntroductionToCircuitComplexity}, and therefore in $\myNC{1}$.
	As the summation of $n$ numbers can be done in $\log n$ iterations (whereby each iteration reduces the number of elements  by a factor of $2$ by adding up   $\alpha_{2i}$ and $\alpha_{2i+1}$, for every index $i$ up to half the number of elements), and each iteration is in $\myNC{1}$ (by performing in parallel all of these additions), we get that the overall problem is in \myNC{2}.
\end{proof}

\detConditions*
\begin{proof}
\label{apx:lem:DOCN-conditions}
	\textbf{Unary encoding}.
	All conditions can be shown to be in \NL\ 
	using the theorems of Savitch (reachability in finite directed graphs is in \NL)
	and Immerman–Szelepcsényi ($\NL=\coNL$).
	Indeed, \refCond{\CondAutUniversal} holds
	iff no non-accepting state is reachable in the underlying automaton.
	For the remaining conditions, just notice that the assumption that inputs are given in unary
	means that all relevant numbers are bounded polynomially in the input.
	For instance, to show that \refCond{\CondNNSimpleCycles} does not hold,
	one simply guesses the offending simple cycle and stepwise computes its effect in binary representation.
	
	\textbf{Binary encoding}.
	Let's first consider
	condition \refCond{\CondAllShortAccepting}.
	This fails iff
	there is a short word whose run in $\sys{A}$
	either ends in a non-accepting state or reduces the counter below zero.
	The first case is again a simple reachability condition in the underlying DFA.
	The second case reduces to a coverability problem as follows.
	
	For $k\in\N$, let $\sys{A}\x k \eqdef (\states\x\{0,1,\ldots,k\},\alphabet,\transitions',\fstates',\initialstate')$
	be the OCN that results from $\sys{A}$ by adding a step-counter up to $k$ into the
	states.
	That is,
	$\transitions'\eqdef \{((p,i),\alpha, e, (q,i+1)): (p,\alpha,e,q)\in \transitions, i\le k\}$,
	$\fstates'\eqdef \fstates\x\{0\ldots k\}$, and 
	$\initialstate'\eqdef (\initialstate,0)$.
	Further, let $\sys{B}$ denote the OCN $\sys{A}\x\card{\states}$, in which all transition effects are inverted.
	Notice that for every word $w$ of length $\len{w}\le \card{\states}$,
	the effect of its induced run in $\sys{A}$ (and $\sys{B}$) is between
	$-\bound{}$ and $\bound{}$, for $\bound{}\eqdef \card{\states}\cdot\norm{\transitions}$.
	Such a word cannot be accepted by $\sys{A}$ from $(\initialstate,c_0)$ iff
	the run it induces in $\sys{B}$ starting from
	$(\initialstate', \bound{})$ leads to some configuration $((q,\len{w}),(\bound{}+c_0+1))$.
	This reachability question about $\sys{B}$ can be answered in \myNC{} \cite[Lemma 1 and Theorem 15 ]{almagor2019coverability}, and since $\sys{A}$  and $\sys{B}$ are of polynomially the same size, also in \myNC{} with respect to $\sys{A}$.
	
	An \NC\ upper bound for condition \refCond{\CondAllShortBAccepting} is completely analogous and differs only in that an additional reachability check should be taken, in which the weights in $\sys{B}$ are not inverted and the target configuration is $((q,\len{w}),(\bound{}+b-c_0+1))$.
	
	\smallskip
	Conditions 
	\refCond{\CondNNSimpleCycles} and \refCond{\CondAllCyclesZero}
	on the effect of simple cycles can be verified in \NC\, by a similar reduction to coverability.
	For example, to check if a simple cycle with negative effect exists it suffices
	to check that it is possible in $\sys{B}$ to start in a configuration $((q,0),\bound{})$
	and cover a configuration $((q,k),(\bound{}+1))$ for some $0< k < |Q|$.
	
	We can do slightly better than that and check these conditions in \myNC{2}, as follows.
	Let $Q=\{p_1,p_2, \ldots, p_{|Q|}\}$, and for every $0< k < \card{\states}$,
	let $M_k$ denote the $\card{\states}\times \card{\states}$ matrix of elements in $\Z\cup\infty$, where the entry for $i,j$
	equals the minimal effect of a path of length $k$ from state $p_i$ to $p_j$.
	Then, $M_k$ can be computed in \myNC{2} using standard repeated-squaring in the min-plus semiring~\cite{aho1974design}
	
	
	To check condition \refCond{\CondNNSimpleCycles},
	that all simple  cycles have non-negative effect, we just need to check (in parallel)
	that all entries in the main diagonal of all the $M_k$ matrices are non-negative.
	The same procedure, applied to an OCN that is derived from  $\sys{A}$  by inverting all transition weights,
	allows to check for the presence of positive simple cycles, and hence for an \myNC{2} algorithm to check 
	condition \refCond{\CondAllCyclesZero}.
\end{proof}

\detUs*
\begin{proof}
\begin{enumerate}
    \item (Normal Universality):
	\label{apx:lem:DOCN-universality}
	Clearly both conditions are necessary for the system to be universal.
	To see why they are sufficient for universality,
	assume that \refCond{\CondNNSimpleCycles} holds
	and consider shortest word $w\not\in\Lang{\initialstate,c_0}$.
	Then the run on $w$ cannot contain 
	any non-negative cycle because this would contradict the minimality assumption. 
	Since we assume \refCond{\CondNNSimpleCycles}, that all cycles are non-negative, the run on $w$ must have no cycles.
	Thus, $\len{w}\le \card{\states}$ which is impossible due to \refCond{\CondAllShortAccepting}.

    \item (Initial-Value Universality):
	\label{apx:lem:DOCN-iv-universality}
	If both conditions hold then 
	any cycle on any run must have non-negative effect.
	So if one picks $c_0\eqdef \card{\states}\cdot\norm{\transitions}$
	then the counter cannot become negative on any run
	and the language $\Lang{\initialstate,c_0}$ equals that of the underlying automaton,
	namely $\alphabet^*$ by condition \refCond{\CondAutUniversal}.
	
	Conversely, since $\Lang{\initialstate,c_0}$ is always included in the language of the underlying automaton,
	condition \CondAutUniversal~is clearly necessary. If \CondNNSimpleCycles\ fails
	then, because the system is deterministic, for every number $c_0$ there must be a word $w(c_0)\in\alphabet^*$ whose run has an effect strictly below $-c_0$. Then $w\notin\Lang{\initialstate,c_0}$.
	Therefore both conditions are necessary.
    \item (Bounded Universality):
	\label{apx:lem:DOCN-bounded-universality}
	Trivially, both conditions are necessary. 
	For the opposite direction, assume that the conditions hold.
	We contradict the assumption that $\bLang{\initialstate,c_0}{b'}\neq\alphabet^*$. 
	If that was the case, we can pick a shortest word $w$ not in that language.
	The run of this word cannot contain a cycle, because by condition \CondAllCyclesZero\
	all cycles have zero effect on the counter and therefore the presence of a cycle on the run would contradict the assumed minimality of $\len{w}$.
	This implies that $w$ is no longer than the number of states,
	and by condition \CondAllShortBAccepting\ it must be in $\bLang{\initialstate,c_0}{b'}$. Contradiction.\qedhere
\end{enumerate}
\end{proof}

\detUnaryConditions*
\begin{proof}
	\label{apx:lem:DOCN-conditions_singleton}
	Condition \refCond{\CondAutUniversal} is equivalent to checking that all states are accepting ($\states=\fstates$).
	For the other conditions,
	notice that if all numbers are encoded in unary then one only needs to compute
	numbers bounded polynomially in $\card{Q}$ and $\norm{\transitions}$. This can be done in deterministic logspace
	by representing them in binary.
	If numbers are already encoded in binary then the \myNC{2} bounds 
	follow from \cref{lem:binary-addition}.
\end{proof}

\section{Proofs of \cref{sec:unambiguous}}

\UFAUniv*
\begin{proof}
	\label{apx:lem:universality_of_UFA}
	The lemma was proven in \cite{WENGUEY199643}, for the general alphabet.
	For the single letter alphabet we have that if the language is not universal then the shortest not accepted word is bounded by $|Q|$
	\cite{DBLP:conf/dcfs/Colcombet15} (Lemma 2). Thus to verify universality, we need to test if for every $0\leq i\leq |Q|$ there is an accepting run
	of length $i$, which can be tested in $\NL$.
\end{proof}
	
\unambU*
\begin{proof}
\label{apx:thm:UOCA-universality-unary-single}
	By \cref{lem:UOCA-single-loop} it is possible to construct an unambiguous finite automaton (UFA) of polynomial
	size, which is universal if and only if the net is universal. 
	This can be done by bounding the counter from above by $\bound{0}$,
	remembering its value in the states,
	and switching to a copy of the underlying automaton
	once the counter is observed to exceed this bound.
	It is easy to see that every run in the net induces a run in the automaton and vice-versa. 
	The number of states of this new finite automaton is $\card{\states}\cdot (1+\bound{0}) + \card{\states}$.
	Since the constructed UFA is still over a single letter alphabet,
	we can check if it is universal \NL\ by \cref{lem:universality_of_UFA}. 
\end{proof}

\unambSU*
\begin{proof}
	\label{apx:lem:UOCA-structural-unambiguity}
	If the underlying automaton is unambiguous then the net is as well, as every run of the net is also a run of the automaton.
	
	In the opposite direction, suppose that the underlying automaton is not unambiguous, then there is a word $w$ read by two accepting runs $\pi_1$ and $\pi_2$.
	If we start with the counter value bigger than $(\len{\pi_1}+\len{\pi_2}) \cdot \norm{\transitions}$ then the both runs in the underlying automaton will describe two different accepting runs in the OCN.
\end{proof}

\unambSUstruct*
\begin{proof}
\label{apx:lem:SUOBA-structure}
	\emph{``If''}.
	If all cycles have non-negative effect then an initial value of $c_0\eqdef\bound{0}$
	suffices to ensure that no run can drop the counter below zero.
	Consequently, the system behaves just like its underlying automaton, which is universal by assumption.
	
	\emph{``Only if''}.
	The language $\Lang{\initialstate}$ of the underlying automaton clearly includes $\Lang{\initialstate,c}$
	for any value $c\in\N$. By assumption that 
	there is $c_0$ with $\Lang{\initialstate,c_0}=\alphabet^*$,
	the underlying automaton must be universal.
	
	It remains to show that it cannot contain any (reachable) simple cycles with negative effect.
	Towards a contradiction, suppose that $\pi_1\pi_2\pi_3$ is an accepting run from a configuration $(\initialstate, c_0)$
	and that $\effect{\pi_2}<0$.
	Then there is must exist $k\in \N$ such that $\pi_1\pi_2^k\pi_3$ is not a run from the configuration $(\initialstate,c_0)$, as the counter runs out.
	By assumption, that the language of the net with initial configuration $(\initialstate,c_0)$
	is universal, there must be another run $\pi_4$ on the same word,
	and which is accepting.
	But now both runs, $\pi_4$ and $\pi_1\pi_2^k\pi_3$, are accepting from the configuration $(\initialstate,c_0+\norm{\transitions}\cdot \len{\pi_2}\cdot k)$ as the effect of 
	$\pi_2^k$ is larger than $\norm{\transitions}\cdot \len{\pi_2}\cdot k$.
	This means that the net is not structurally unambiguous, which contradicts our assumptions.
\end{proof}

\unambBUnoLoops*
\begin{proof}
	Suppose otherwise, then for any bound $k$ there will be an accepting run which is going 
	through configurations with counter value bigger than $k$, and from unambiguity, there is no other run that stays below the bound.  
\end{proof}

\unambBU*
\begin{proof}
	\label{apx:thm:UOCN-bu}
	\textbf{Unary encoded transitions:}
	By \cref{lem:UOCN-no-positive-loops}, if the OCN is bounded universal
	then every accepting run will only visit counter values below
	$\bound{1}\eqdef c_0+\bound{0}=c_0+\card{\states}\cdot\norm{\transitions}$.
	This means that the OCN is bounded universal if, and only if,
	$\bLang{\initialstate,c_0}{\bound{1}}=\alphabet^*$.
	This can be verified by checking universality for the UFA
	that results by remembering all bounded counter values in the finite state space.
	The claim now follows by \cref{lem:universality_of_UFA}.

	\textbf{Binary encoded transitions:}
	By \cref{lem:UOCN-no-positive-loops}, if the OCN is bounded universal
	then every accepting run will only visit counter values below
	$\bound{1}\eqdef c_0+\bound{0}=c_0+\card{\states}\cdot\norm{\transitions}$.
	This means that the OCN is bounded universal if, and only if,
	$\bLang{s_0,c_0}{\bound{1}}=\alphabet^*$.
	This can be verified by checking universality for the UFA
	that results by remembering all bounded counter values in the finite state space.
	The claim now follows by \cref{lem:universality_of_UFA} and the following fact \myNC{}$=PolyLog$ applied to the
	UFA which is of exponential size.
\end{proof}
\end{document}